\documentclass[12pt,journal,onecolumn,draftcls]{IEEEtran}
\usepackage{epsfig}
\usepackage{times}
\usepackage{float}
\usepackage{afterpage}
\usepackage{amsmath}
\usepackage{amstext}
\usepackage{amssymb,bm}
\usepackage{latexsym}
\usepackage{color}
\usepackage{graphicx}
\usepackage{amsmath}
\usepackage{amsthm}
\usepackage{graphicx}
\usepackage[center]{caption}
\usepackage{pstricks}
\usepackage{caption}
\usepackage{subcaption}
\usepackage{booktabs}
\usepackage{multicol}
\usepackage{lipsum}

\allowdisplaybreaks

\def \eu{\mathrm{e}}

\newcommand{\snr}{{\mathsf{S}}}
\newcommand{\inr}{{\mathsf{I}}}
\newcommand{\mud}{{\mathsf{I_d}}}
\newcommand{\mug}{{\mathsf{I_g}}}
\newcommand{\gap}{{\mathsf{gap}}}
\newcommand{\pam}{{\mathsf{PAM}}}
\newcommand{\points}{{\mathsf{N_d}}}
\newcommand{\supp}{{\mathsf{supp}}}
\newcommand{\co}{{\mathsf{co}}}
\newcommand{\loglog }{O\left(\log \ln \left(\min(\snr,\inr) \right)\right)}
\newcommand{\loglogG}{O\left(\log \left( \frac{\ln \min(\snr,\snr^\alpha) }{\gamma} \right)\right)}

\newtheorem{thm}{Theorem}

\newtheorem{lem}[thm]{Lemma}
\newtheorem{prop}[thm]{Proposition}
\newtheorem{rem}{Remark}

\begin{document}

\title{Interference as Noise: Friend or Foe?
}
\author{
\IEEEauthorblockN{Alex Dytso, Daniela Tuninetti, and Natasha Devroye\\}
\IEEEauthorblockA{%
University of Illinois at Chicago, Chicago IL 60607, USA,\\
Email: {\tt odytso2, danielat, devroye @ uic.edu}}%
\thanks{%
The work of the authors was partially funded by NSF under awards 1017436 and 1422511;
the contents of this article are solely the responsibility of the authors and do not necessarily represent the official views of the NSF.
The work was presented in part at ISIT 2014, ITA 2015, and ISIT 2015.}%
}
\maketitle

\begin{abstract}
This paper shows that for the 
two-user Gaussian Interference Channel (G-IC) 
Treating Interference as Noise without Time Sharing (TINnoTS) achieves the closure of the capacity region to within either a constant gap, or to within a gap of the order $\loglog$ where $\snr$ is the largest Signal to Noise Ratio (SNR) on the direct links and $\inr$ is  the largest Interference to Noise Ratio (INR) on the cross links. As a consequence, {TINnoTS} is optimal from a generalized Degrees of Freedom (gDoF) perspective for all channel gains except for a subset of zero measure. 
{TINnoTS} with Gaussian inputs is known to be optimal to within 1/2~bit for a subset of the weak interference regime. 
Rather surprisingly, this paper shows that {TINnoTS} is gDoG optimal in {\em all} parameter regimes,
even in the strong and very strong interference regimes where joint decoding of Gaussian inputs is optimal.

For approximate optimality of {TINnoTS} in all parameter regimes it is critical to use non-Gaussian inputs.
This work thus proposes to use {\it mixed inputs} as channel inputs for the G-IC,
where a mixed input is the sum of a discrete and a Gaussian random variable.
Interestingly, compared to the Han-Kobayashi achievable scheme,
the discrete part of a mixed input is shown to effectively behave as a common message in the sense that, 
although treated as noise, its effect on the achievable rate region is as if it were jointly decoded
together with the desired messages at a non-intended receiver.
The practical implication is that a discrete interfering input is a ``friend'', 
while an Gaussian interfering input is in general a ``foe''.
The paper also discusses other practical implications of the prosed {TINnoTS} scheme with mixed inputs.

Since {TINnoTS} requires neither explicit joint decoding nor time sharing, 
the results of this paper are applicable to a variety of 
{\it oblivions} or {\it asynchronous} channels, 
such as the block asynchronous G-IC (which is not an information stable channel) and the G-IC with partial codebook knowledge at one or more receivers.

\end{abstract}

\section{Introduction}
\label{sec:Intro}
Consider the two-user memoryless real-valued additive white Gaussian noise interference channel (G-IC) with input-output relationship
\begin{subequations}
\begin{align}
Y_1^n&=h_{11}X_1^n+h_{12}X_2^n+Z_1^n, \\ 
Y_2^n&=h_{21}X_1^n+h_{22}X_2^n+Z_2^n,    
\end{align}
\label{eq:block awgn ic}
where
$X_j^n := (X_{j1}, \cdots X_{jn})$ and $Y_j^n := (Y_{j1}, \cdots Y_{jn})$ are the length-$n$ vector input and output, respectively, for user $j\in[1:2]$, the noise vector $Z_j^n$ has i.i.d. zero-mean unit-variance Gaussian components,
the input $X_j^n$ is subject to a per-block power constraint $\frac{1}{n}\sum_{i=1}^n X_{ji}^2 \leq 1$, 
and the channel gains $(h_{11},h_{12},h_{21},h_{22})$ are fixed and known to all nodes.
The input $X_j^n, \ j\in[1:2],$ carries the independent message $W_j$ that is uniformly distributed on $[1:2^{n R_j}]$, where $R_j$ is the rate and $n$ the block-length. Receiver $j\in[1:2]$ wishes to recover $W_j$ from the channel output $Y_j^n$ with arbitrarily small probability of error. Achievable rates and capacity region are defined in the usual way~\cite{elgamalkimbook}. We shall denote the capacity region by $\mathcal{C}$. 
\end{subequations}

\begin{subequations}
For simplicity we will focus primarily on the {\it symmetric} G-IC defined by
\begin{align}
&|h_{11}|^2=|h_{22}|^2=\snr \geq 0, \\
&|h_{12}|^2=|h_{21}|^2=\inr \geq 0,
\end{align}
and we will discuss how the results for the symmetric G-IC extend to the general asymmetric setting.
\label{eq:awgn sym}
\end{subequations}

The general {\it discrete memoryless} IC was introduced in~\cite{Ahlswede:1973} where it was shown  that the capacity region of an {\em information stable IC}~\cite{dobrushin1963general} is given by
\begin{align}
\mathcal{C}
&=\lim_{n \to \infty} 
\co\left(\bigcup_{P_{X_1^nX_2^n}=P_{X_1^n}P_{X_2^n}}
\left\{ 
\begin{array}{l} 
0\leq R_1 \leq \frac{1}{n} I(X_1^n;Y_1^n)\\
0\leq R_2 \leq \frac{1}{n} I(X_2^n;Y_2^n)\\
\end{array} \right \} \right),
\label{eq:capacity info stable IC}
\end{align}
where $\co$ denotes the convex closure operation. For the G-IC in~\eqref{eq:block awgn ic}, the maximization in~\eqref{eq:capacity info stable IC} is further restricted to inputs satisfying the power constraint.

An inner bound to the capacity region in~\eqref{eq:capacity info stable IC} 
can be obtained by considering i.i.d. inputs in~\eqref{eq:capacity info stable IC} thus giving
\begin{align}
\mathcal{R}_{\text{in}}^\text{TIN+TS} 
&=
\co\left(\bigcup_{P_{X_1X_2}=P_{X_1}P_{X_2}}
\left\{ 
\begin{array}{l} 
0\leq R_1 \leq I(X_1;Y_1)\\
0\leq R_2 \leq I(X_2;Y_2)\\
\end{array} \right \} \right),
\label{eq:RL:TINwithT}
\end{align}
where the superscript ``TIN+TS'' reminds the reader that the region is achieved by Treating Interference as Noise (TIN)\footnote{We use the terminology ``treating interference as noise'' to denote the rates obtained when evaluating expressions for the interference channel of the form 
\[ 
\mbox{Desired rate} \leq I(\mbox{desired input}; \mbox{output}),
\] 
without any other rate expressions, mutual information terms,  or explicit rate splits. When evaluated with independent and identically distributed (i.i.d.) Gaussian inputs in the interference channel in \eqref{eq:block awgn ic}, these rate expressions look like those in which the interference is indeed treated as noise, i.e., 
\[ 
0\leq R_i \leq \frac{1}{2}\log\left(1+\frac{\snr}{1+\inr}\right), \ i\in[1:2],
\]
where the `effective noise' (at the denominator within the log) looks like the true noise power plus all the interferer's power. Whether this expression has the same ``treating interference as noise'' interpretation when using non-Gaussian inputs is open to interpretation, and is one of the focusses of this work. We will however continue to use this terminology.
} 
and with Time Sharing (TS), where TS is enabled by the convex hull operation~\cite{elgamalkimbook}. 
By further removing the convex hull operation in~\eqref{eq:RL:TINwithT} we arrive at 
\begin{align}
\mathcal{R}_{\text{in}}^\text{TINnoTS} 
&=
\bigcup_{P_{X_1X_2}=P_{X_1}P_{X_2}}
\left\{ 
\begin{array}{l} 
0\leq R_1 \leq I(X_1;Y_1)\\
0\leq R_2 \leq I(X_2;Y_2)\\
\end{array} \right \}.
\label{eq:RL:TINnoTS}
\end{align}
The region in~\eqref{eq:RL:TINnoTS} does not allow the users to time-share.
For the G-IC the maximization in~\eqref{eq:RL:TINwithT} and~\eqref{eq:RL:TINnoTS} is further restricted to inputs satisfying the power constraint.

Obviously 
\begin{align*}
\mathcal{R}_{\text{in}}^\text{TINnoTS} \subseteq \mathcal{R}_{\text{in}}^\text{TIN+TS}\subseteq \mathcal{C}.
\end{align*}
The question of interest in this paper is how $\mathcal{R}_{\text{in}}^\text{TINnoTS}$ fares compared to $\mathcal{C}$.
Note that there are many advantages in using {TINnoTS} in practice. For example, {TINnoTS} 
does not require codeword synchronization, as for example for joint decoding or interference cancellation,
and does not require much coordination between users, thereby reducing communications overhead. 
The goal of this paper is to show that despite its simplicity, {TINnoTS} approximately archives the capacity $\mathcal{C}$.

Next, we review past work relevant to our investigation. We refer the interested reader to~\cite{elgamalkimbook} for a comprehensive literature survey on general discrete memoryless  ICs.

\subsection{Past Work}

In general, little is known about the optimizing input distribution in~\eqref{eq:capacity info stable IC} for the G-IC (or in~\eqref{eq:RL:TINwithT} and in~\eqref{eq:RL:TINnoTS}) and only some special cases have been solved.  In~\cite{ICsumCapacityKramer} it was showen that i.i.d. Gaussian inputs maximize the sum-capacity in~\eqref{eq:capacity info stable IC} for $\sqrt{\frac{\inr}{\snr}}(1+\inr)\leq \frac{1}{2}$ in the symmetric case. 
In contrast, the authors of~\cite{LimCharMemChan} showed that in general multivariate Gaussian inputs do not exhaust regions of the form in~\eqref{eq:capacity info stable IC}. 
The difficulty arises from the competitive nature of the problem~\cite{CoordinateSystemAbbe}: for example, say $X_2$ is i.i.d. Gaussian, taking $X_1$ to be Gaussian increases $I(X_1;Y_1)$ but simultaneously decreases $I(X_2;Y_2)$, as Gaussians are known to be the ``best inputs'' for Gaussian point-to-point power-constrained channels, but are also the ``worst noise'' (or interference, if it is treated as noise) for a Gaussian input. 

Recently in~\cite{DytsoITA2014,DytsoCodebookJournal}, for the G-IC with one oblivious receiver, we showed that a properly chosen discrete input has a somewhat different behavior than a Gaussian input: a discrete $X_2$ may yield a ``good'' $I(X_1;Y_1)$ while keeping $I(X_2;Y_2)$ relatively unchanged compared to a Gaussian input, thus substantially improving the rates compared to Gaussian inputs in the same achievable region expression. Moreover, in~\cite{DytsoITA2014,DytsoCodebookJournal} we showed that treating interference as noise at the oblivious receiver and joint decoding at the other receiver is to within an additive gap of 3.34~bits of the capacity. 
In this work we seek to {\it analytically} evaluate the lower bound in~\eqref{eq:RL:TINnoTS}  for a special class of {\em mixed inputs} (a superposition of a Gaussian and a discrete random variable) by generalizing the approach of~\cite{DytsoITA2014,DytsoCodebookJournal} and show that using {TINnoTS} at both receivers is to within an additive gap of the capacity.  In a way this work follows the philosophy of~\cite{inPraiseOfBadCodes}: the main idea is to use sub-optimal point-to-point codes in which the reduction in achievable rate for the intended receiver is more than compensated by the decrease in the interference created at the other receiver, which results in an overall rate region improvement.
In a conference version of this paper~\cite{dytsoTINconf} we demonstrated that {TINnoTS} is gDoF optimal and can achieve to within an additive gap  the symmetric sum-capacity of the classical G-IC. In~\cite[Theorem 3.]{Shuo-ISIT2015-TIN}, the authors showed that the sum-capacity result of~\cite{dytsoTINconf} can be achieved by an input with purely discrete marginals, i.e., the Gaussian part of our mixed inputs can be replaced by a discrete random variable.

Recently there has been lots of interest in characterizing when TIN, with or without TS, is approximately optimal. For example, in~\cite{TINweakKuser} 
``It is shown that in the $K$-user interference channel, if for each user the desired signal strength is no less than the sum of the strengths of the strongest interference from this user and the strongest interference to this user (all values in dB scale), then the simple scheme of using point to point Gaussian codebooks with appropriate power levels at each transmitter and treating interference as noise at every receiver (in short, TIN scheme) achieves all points in the capacity region to within a constant gap. The generalized degrees of freedom (gDoF) region under this condition is a polyhedron, which is shown to be fully achieved by the same scheme, without the need for time-sharing.''
In this paper we aim to show that one can always use {TINnoTS} and be optimal to within an additive gap in all parameter regimes, and not just in the very weak interference regime identified in~\cite{TINweakKuser}. The key is to use more ``friendly'' codebooks than Gaussian codebooks.
We note that for an input constrained additive-noise channel where the noise distribution is arbitrary, Gaussian inputs are known to be optimal to within 0.265~bits~\cite{ZamirGaussianIsNotToBad}; what our work shows is that the same is not true in general in a multiuser competitive scenario.

We are not the first to consider discrete inputs for Gaussian noise channels. Shannon himself pointed out
the asymptotic optimality of a Pulse Amplitude Modulation (PAM) input for the point-to-point power-constrained Gaussian noise channel~\cite[ref.121]{ShamaiShannonLecture}. 
Shannon's argument was solidified in~\cite{PAMozarow} where firm lower bounds on the achievable rate with a PAM input were derived and used to show their optimality to within 0.41~bits~\cite[eq.(9)]{PAMozarow}.
In~\cite{CoordinateSystemAbbe,totAsynchIC} the authors demonstrated the existence of input distributions that outperform i.i.d. Gaussian inputs in $\mathcal{R}_{\text{in}}^\text{TINnoTS}$ for certain asynchronous G-IC.
Both~\cite{CoordinateSystemAbbe,totAsynchIC} used local perturbations of an i.i.d. Gaussian input:~\cite[Lemma 3]{totAsynchIC} considered a fourth order approximation of mutual information, while~\cite[Theorem 4]{CoordinateSystemAbbe} used perturbation in the direction of Hermite polynomials of order larger than three. In both cases the input distribution is assumed to have a density.
For the cases reported in~\cite{totAsynchIC, CoordinateSystemAbbe}, the improvement over i.i.d. Gaussian inputs shows in the decimal digits of the achievable rates; it is hence not clear that  
these classes of inputs can actually provide substantial rate gains compared to Gaussian inputs. 
In~\cite{MoshksarJournal}, the authors showed that 
using a discrete input for one user and a Gaussian input for the other user
outperforms Gaussian inputs at both users in the {TINnoTS} region;
this gain however was neither shown to be unbounded nor to achieve the capacity to within a gap, as we will do here.
Moreover, the approach of~\cite{MoshksarJournal} was based on bounding the achievable mutual information by using Fano's inequality, similarly to~\cite[Part a)]{PAMozarow}; the resulting bounds however will not be tight enough for the purposes of deriving gap results. In this work we generalize the bound due to Ozarow-Wyner in~\cite[Part b)]{PAMozarow}, which turns out to sharper than~\cite[Part a)]{PAMozarow}. 

We remark that the optimality of {TINnoTS} for all channel parameters for the G-IC was pointed out in~\cite[Remark 6.12]{elgamalkimbook}. 
The proof follows since {TINnoTS} is always optimal for the Linear Deterministic Approximation (LDA) of the G-IC at high-SNR~\cite{avestimehr2011wireless}.
Moreover, a scheme for the LDA can be translated into a scheme for the real-valued G-IC that is optimal to within at most 18.6~bits~\cite[Theorem~2]{bresler_tse}.
This line of reasoning based on a universal gap between the LDA and the G-IC, thus giving a constant gap result, does not provide a concrete practical construction of an approximately optimal scheme. The idea of `lifting' an LDA optimal scheme to the G-IC has been used in~\cite{Shuo-ISIT2015-TIN} where a $O(\log\log(\snr))$, rather than a constant, gap result was proved for the symmetric sum-capacity. 
Our proof here extends our original approach in~\cite{dytsoTINconf} 
and provides, in closed form, the optimal number of points in the discrete part of the mixed inputs, as well as of the optimal power split among the discrete and continuous parts of the mixed inputs. Moreover, our derived gap is in general smaller than 18.6~bits (this is so because the log-log function grows very slowly in its argument).

We conclude this overview of relevant past work by pointing out that in practice it is well known that a non-Gaussian interference should not be treated as a Gaussian noise. The optimal detector for an additive non-Gaussian noise channel may however be far more complex than a classical minimum-distance decoder. Nonetheless, since the performance increase can be substantial for a moderate complexity increase, Network-Assisted Interference Cancellation and Suppression (NAICS) receivers, which account for the discrete and coded nature of the interference, were adopted in the Long Term Evolution (LTE) Advanced Release 12~\cite{docomo2012requirements,etsi2012evolved,mediatek2012broadcom}. 
The boost in performance of NAICS-type detectors may be understood as follows.
As we pointed out in~\cite{DytsoCodebookJournal}, with TIN the mapping of the codewords to the messages is unknown but the codeword symbols may be known through soft symbol-by-symbol estimation as remarked in~\cite{sand_decentr_proces}, where the authors write ``We indeed see that BPSK signaling outperforms Gaussian signaling. This is because demodulation is some form of primitive decoding, which is not possible for the Gaussian signaling.'' 
This interpretation is supportd somehow by~\cite[eq.(16)]{BelfioreLing:IZS12:flatnessfactor}, where
the authors showed that with one-dimensional lattices it holds that
$\arg\max_{n_1\in\mathbb{Z}}\varrho(n_1)
=\arg\min_{n_1\in\mathbb{Z},n_2\in\mathbb{Z}} |y-h_1 n_1-h_2 n_2|^2$, where $\varrho(n_1) := \sum_{n_2\in\mathbb{Z}}e^{-|y-h_1 n_1-h_2 n_2|^2}$ is the metric to be optimized in decoding lattice $n_1\in\mathbb{Z}$ while treating lattice $n_2\in\mathbb{Z}$ as noise on a multi-user Gaussian noise channel with channel gains $(h_1,h_2)$; note that $|y-h_1 n_1-h_2 n_2|^2$ is the matric to be optimized on the same channel when the two lattices are jointly decoded. An interesting question is whether the same holds for ``truncated latices,'' such as PAM constellations used here, which appears reasonable when any pair $(n_1,n_2)$  results in a distinct linear combination $h_1 n_1+h_2 n_2$.

\subsection{Contributions and Paper Outline}

The main contributions of the paper are as follows: 
\begin{enumerate}

\item In Section~\ref{sec:Ozarow}, Proposition~\ref{prop:lowbound OW generalized} presents a generalization of a lower bound from~\cite{PAMozarow} on the mutual information attained by a discrete input on a point-to-point additive noise channel and compares its performance with other lower bounds available in the litterature

\item In Section~\ref{sec:carAndMinD}, Proposition~\ref{prop:combNOOUTAGE} and Proposition~\ref{prop:card:dmin:measurebound} present new bounds on the cardinality and minimum distance of {\it sum-sets} formed by two discrete constellations.
Proposition~\ref{cor: mini dist 0} shows that the set of channel gains for which
the cardinality of a sum-set is not equal to the product of the cardinalities of the constituent sets
has zero measure.

\item Section~\ref{sec:ptpExample} provides examples of how we intend to use the developed tools. 
First, we show that discrete inputs are approximately optimal for the point-to-point power-constrained Gaussian channel. 
Second, we show that a discrete additive state, unknown to both the transmitter and the receiver, degrades performance of a point-to-point power-constrained Gaussian channel by at most a constant gap compared to the case where the state is known at all terminals.

\item In Section~\ref{sec:ach}, Proposition~\ref{prop:ach-with-mixedinput} presents an inner bound obtained by evaluating the {TINnoTS} region 
with our proposed mixed inputs, whose performance will then be compared to the outer bound in Proposition~\ref{prop:ETWouter}.

\item \label{item:sym}
Section~\ref{sec:cap within gap sym} focuses on the symmetric G-IC.
Theorem~\ref{thm:Cap:gap sym} shows that {TINnoTS} with mixed inputs is to within $O(1)$, or $\loglog$ except for a set of Lebesgue measure $\gamma$ for any $\gamma \in (0,1]$, of the outer bound  in Proposition~\ref{prop:ETWouter}.
From this result we infer that:
\begin{enumerate}
\item the discrete part of the mixed input behaves as a ``common message'' whose contribution can be removed from the channel output of the non-intended receiver, even though explicit joint decoding of the interference is not employed in {TINnoTS},
\item the continuous part of the mixed input behaves as a ``private message'' whose power should be chosen such that it is either received below the noise floor of the non-intended receiver~\cite{etkin_tse_wang}, or to have a rate that is approximately half  the target rate,
and
\item time-sharing may be mimicked by varying the number of points in the  discrete part of the mixed inputs.
\end{enumerate}

\item \label{item:some asym}
In Section~\ref{sec:cap within gap asymmetric} we extend the gap result of Theorem~\ref{thm:Cap:gap sym}
to some general asymmetric G-IC's. The channel parameter regime covered in Theorem~\ref{thm:Cap:gap some asymmetric} is such that
bounds of the form $2R_1+R_2$ or $R_1+2R_2$ are not active in the outer bound in Proposition~\ref{prop:ETWouter}.
The excluded regime, roughly speaking, is such that $\min(|h_{11}|^2_{\text{dB}},|h_{22}|^2_{\text{dB}}) < |h_{12}|^2_{\text{dB}}+|h_{21}|^2_{\text{dB}} < |h_{11}|^2_{\text{dB}}+|h_{22}|^2_{\text{dB}}$, i.e., the sum of the crosslink gains is upper bounded by the sum of the direct link gains and lower bounded by the minimum of the direct link gains, all quantities expressed in dB scale. Numerical experiments suggest that the insights gained in the symmetric case (see above item~\ref{item:sym}) hold for the asymmetric case as well and that the proposed {TINnoTS} with mixed inputs is approximately optimal for the general asymmetric G-IC.

\item In Section~\ref{sec:gDoF:cap}, Theorem~\ref{thm:gDoF:cap} shows that {TINnoTS} with mixed inputs is gDoF optimal almost everywhere (a.e.), that is, for all channel gains except for an {\it outage set} of zero measure. 

\item In Section~\ref{sec: async and obl IC} shows that our approximate optimality results hold for a variety of channels, such as for example the {\it block-asynchronous} G-IC and the {\it codebook oblivious} G-IC, thereby demonstrating that lack of codeword synchronism or of codebook knowledge at the receivers results in penalty of at most $O(1)$, or $\loglog$, compared to the classical G-IC.

\item In Section~\ref{sec:practical} we discuss som practical implications of our {TINnoTS} with mixed inputs achievability scheme, such as
\begin{itemize}
\item
in Section~\ref{sec: simple receiver} we discuss an approximate MAP decoder for the very strong interference regime that is very simple to implement with {TINnoTS},
\item 
in Section~\ref{sec: Actual vs. Analytic Gap} we show through numerical evaluations that our 
gap results are very conservative and that in practice the achievable rates are much closer to capacity than predicted by our analytical results, 
\item 
in Section~\ref{sec: Mixed (Gaussian+Discrete) vs. Discrete (Discrete+Discrete) Inputs} we show that a gap result can be obtained 
by using as inputs purely discrete random variables, i.e., to within an additive gap the Gaussian part of the mixed inputs can be replaced by another PAM input.
\end{itemize} 

\end{enumerate}

Section~\ref{sec:concl} concludes the paper. Some proofs can be found in the Appendix.

\subsection{Notation}
\label{sec:notation}
Throughout the paper we adopt the following notation convention: 
\begin{itemize}
\item
Lower case variables are instances of upper case random variables which take on values in calligraphic alphabets.

\item
$\log(\cdot)$ denotes logarithms in base 2 and $\ln(\cdot)$ in base $\eu$. 

\item
$[n_1:n_2]$ is the set of integers from $n_1$ to $n_2 \geq n_1$.

\item
$Y^{j}$ is a vector of length $j$ with components $(Y_1,\ldots,Y_j)$.

\item
If $A$ is a r.v. we denote its support by $\supp(A)$.  

\item
The symbol $|\cdot|$ may denote different things: 
$|\mathcal{A}|$ is the cardinality of the set $\mathcal{A}$, 
$|X|$ is the cardinality of $\supp(X)$ of the r.v. $X$ , or
$|x|$ is the absolute value of the real-valued $x$.

\item
For $x\in\mathbb{R}$ we let
$\left \lfloor x \right\rfloor$ denote the largest integer not greater than $x$.

\item
For $x\in\mathbb{R}$ we let $[x]^{+} :=\max(x,0)$ and $\log^{+}(x) :=[\log(x)]^+$.

\item
$d_{\min  \left( \mathcal{S} \right)} := \min_{i \neq j :  s_i,s_j \in \mathcal{S}} |s_i-s_j|$ denotes the minimum distance among the points in the set $\mathcal{S}$.
With some abuse of notation we also use $d_{\min  \left(X \right)}$ to denote $d_{\min(\supp(X))}$ for a r.v. $X$. 

\item
Let $f(x),g(x)$ be two real-valued functions.
We use the Landau notation 
$f(x)=O(g(x))$ to mean that for {\it some}  $c>0$ there exists an $x_0$ such that $f(x)\leq c \, g(x)$ for all $x \geq x_0$.

\item
Operator $\co(\cdot)$ will refer to convex hull operation. 

\item
$X \sim \mathcal{N}(\mu,\sigma^2)$ denotes the density of a real-valued Gaussian r.v. $X$ with mean $\mu$ and variance $\sigma^2$.

\item
$X \sim \pam\left(N,d_{\min(X)} \right)$ denotes the uniform probability mass function over a zero-mean PAM constellation with $|\supp(X)|=N$ points, minimum distance $d_{\min(X)}$, and therefore average energy $\mathbb{E}[X^2] = d_{\min  \left(X \right)}^2\frac{N^2-1}{12}$.

\item 
$m( \mathcal{S} )$ denotes Lebesgue measure of the set $\mathcal{S}$.

\item

We let 
\begin{align}
\mug(x) &:=\frac{1}{2}\log(1+x),
\label{eq:def:mug}
\\
\mud(X) &:=
\left[H(X)
- \frac{1}{2}\log\left(\frac{2\pi\eu}{12}\right) 
- \frac{1}{2}\log\left(1+\frac{12}{d_{\min(X)}^2}\right)\right]^+,%
\label{eq:def:mud}
\\
\points(x)&:=\left \lfloor \sqrt{1+x} \right\rfloor,
\label{eq:def:pti}
\end{align}
where the subscript $\mathsf{d}$ reminds the reader that discrete inputs are involved, while $\mathsf{g}$ that Gaussian inputs are involved. Here $H(X)$ is the entropy of the discrete random variable $X$, while $h(X)$ is the differential entropy of the absolutely continuous random variable $X$.

\end{itemize}

\section{Main Tools}
\label{sec:mainTools}

In this Section we present the main tools to evaluate the {TINnoTS} lower bound in~\eqref{eq:RL:TINnoTS} under mixed inputs.

\subsection{Generalized Ozarow-Wyner Bound} 
\label{sec:Ozarow}

At the core of our proofs is the following lower bound on the rate achieved by a discrete input on a point-to-point additive noise channel.  The important point here is to derive firm bounds that are valid for {\it any} discrete constellation at {\it any} SNR, as opposed to bounds that are either optimized for a fixed SNR, or hold asymptotically in the low or high SNR regimes.

\begin{prop}[Ozarow-Wyner-B bound]
\label{prop:lowbound OW generalized}
Let $X_D$ be a discrete random variable 
with minimum distance $d_{\min(X_D)} > 0$.
Let $Z$ be a zero-mean unit-variance random variable independent of $X_D$ (not necessarily Gaussian). 
Then
\begin{subequations}
\begin{align}
   &\mud(X_D) :=\left[H(X_D) - {\gap}_{\eqref{eq:OW gen partB}}\right]^+ \leq I(X_D;  X_D+Z) \leq H(X_D), 
   \label{eq:OW gen I partB}
\\&{\gap}_{\eqref{eq:OW gen partB}} := \frac{1}{2}\log\left(\frac{2\pi\eu}{12}\right) + \frac{1}{2}\log\left(1+\frac{12}{d_{\min(X_D)}^2}\right). 
   \label{eq:OW gen gap partB}
\end{align}
For $|\supp(X_D)|=N$, the mutual information bounds 
in~\eqref{eq:OW gen I partB} are the largest for a PAM constellation 
(since PAM satisfies with equality the general inequality $H(X_D)\leq\log(N)$).
\label{eq:OW gen partB}
\end{subequations}
\end{prop}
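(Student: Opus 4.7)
The upper bound $I(X_D;X_D+Z)\leq H(X_D)$ is immediate from $I(X_D;\cdot)=H(X_D)-H(X_D\mid\cdot)\leq H(X_D)$, so only the lower bound requires work. My plan is to adapt the Ozarow--Wyner dithering idea from \cite{PAMozarow}, generalizing it in two directions: to arbitrary discrete supports (not just uniform PAM) and to arbitrary zero-mean unit-variance noise $Z$ (not just Gaussian).

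The central construction is a uniform dither. Let $U\sim\text{Uniform}(-d/2,\,d/2)$ with $d:=d_{\min(X_D)}$, independent of $(X_D,Z)$, and set $X_C:=X_D+U$. The minimum-distance hypothesis guarantees that the intervals $\{x+\supp(U)\}_{x\in\supp(X_D)}$ have pairwise disjoint interiors, so rounding $X_C$ to its nearest point of $\supp(X_D)$ recovers $X_D$ almost surely. Hence $H(X_D\mid X_C)=0$, giving the key identity $h(X_C)=H(X_D)+h(U)=H(X_D)+\log d$. This is the trick that converts the discrete entropy $H(X_D)$ into a differential entropy amenable to the usual entropy machinery.

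I would then combine three ingredients to conclude. First, data processing applied to the Markov chain $X_D\to X_D+Z\to(X_D+Z)+U=X_C+Z$ (valid because $U$ is independent of everything upstream) yields $I(X_D;X_D+Z)\geq I(X_D;X_C+Z)$. Second, expanding $I(X_C;X_C+Z)$ in two ways via the chain rule---once directly as $h(X_C+Z)-h(Z)$, and once as $I(X_D,U;X_C+Z)=I(X_D;X_C+Z)+I(U;X_C+Z\mid X_D)$ using that $X_C$ is in bijection with $(X_D,U)$---together with the elementary evaluation $I(U;X_C+Z\mid X_D)=h(U+Z)-h(Z)$ (by independence of $(U,Z)$ from $X_D$), produces the clean identity $I(X_D;X_C+Z)=h(X_C+Z)-h(U+Z)$. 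Third, I would lower bound $h(X_C+Z)\geq h(X_C)=H(X_D)+\log d$ (conditioning reduces differential entropy, equivalently a one-sided use of the EPI that does not require $h(Z)$ to be controlled) and upper bound $h(U+Z)\leq\tfrac{1}{2}\log(2\pi e(1+d^2/12))$ by the Gaussian maximum-entropy inequality applied to $\text{Var}(U+Z)=1+d^2/12$. Factoring $\frac{2\pi e}{12}$ out of the resulting logarithm reproduces exactly the claimed $\gap_{\eqref{eq:OW gen partB}}$, and the outer $[\cdot]^+$ is trivial since mutual information is nonnegative.

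The PAM optimality remark at the end of the statement is immediate from the observation that $\gap_{\eqref{eq:OW gen partB}}$ depends only on $d_{\min(X_D)}$, while $H(X_D)\leq\log N$ for any distribution supported on $N$ points, with equality iff $X_D$ is uniform. The only mildly delicate point is justifying that $X_D$ is a measurable function of $X_C$ when the dither's support is closed and adjacent intervals may touch at endpoints; this is handled by noting that the overlap set has Lebesgue measure zero and therefore probability zero. No real obstacle arises, as every tool invoked---EPI/conditioning, Gaussian max-entropy, the chain rule---applies to arbitrary $Z$ with unit variance, which is precisely the generality the proposition advertises over the original PAM+Gaussian setting of \cite{PAMozarow}.
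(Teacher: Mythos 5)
Your proof is correct and reaches exactly the stated gap, but by a genuinely different route than the paper's. The paper leaves the channel output $Y=X_D+Z$ untouched and dithers only the first argument of the mutual information: with $\widetilde{X}=X_D+U$ it uses the Markov chain $\widetilde{X}\to X_D\to Y$ to get $I(X_D;Y)\ge I(\widetilde{X};Y)=h(\widetilde{X})-h(\widetilde{X}|Y)$, evaluates $h(\widetilde{X})=H(X_D)+\log d_{\min(X_D)}$ exactly as you do, and then bounds $h(\widetilde{X}|Y)$ by the Gaussian maximum-entropy bound applied to the error of a linear (LMMSE-type) estimate $\widetilde{X}-kY$, giving $\tfrac{1}{2}\log\bigl[2\pi\eu\bigl(d_{\min(X_D)}^2/12+\mathbb{E}[X_D^2]/(\mathbb{E}[X_D^2]+1)\bigr)\bigr]$ before relaxing the second term to $1$. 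You instead add the dither to the output: data processing on $X_D\to X_D+Z\to X_D+Z+U$, the exact identity $I(X_D;X_C+Z)=h(X_C+Z)-h(U+Z)$, then $h(X_C+Z)\ge h(X_C)=H(X_D)+\log d_{\min(X_D)}$ and Gaussian max-entropy on $\mathrm{Var}(U+Z)=1+d_{\min(X_D)}^2/12$. Both derivations land on the same constant; what the paper's LMMSE route buys is the un-relaxed factor $\mathbb{E}[X_D^2]/(\mathbb{E}[X_D^2]+1)\le 1$, which it later exploits in Section~\ref{sec:ptpExample} to shave the point-to-point gap from about $1.25$ to about $1.047$ bits, whereas your route is more elementary (no estimator coefficient, no second-moment bookkeeping) and makes the ``effective noise $U+Z$'' interpretation transparent. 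One presentational caveat: your intermediate expansion $I(X_C;X_C+Z)=h(X_C+Z)-h(Z)$ is ill-posed when $Z$ has no density (the proposition allows arbitrary zero-mean unit-variance $Z$); this is harmless, since the same identity follows directly from $I(X_D;X_C+Z)=h(X_C+Z)-h(X_C+Z|X_D)=h(X_C+Z)-h(U+Z)$, where both terms are finite because convolution with the uniform dither guarantees densities.
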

\begin{IEEEproof}
The upper bound in~\eqref{eq:OW gen I partB} is trivial.
For the lower bound, let $\widetilde{X}:=X_D + U$ with $U$ uniformly distributed on $[-d_{\min(X_D)}/2,+d_{\min(X_D)}/2]$ and independent of $X_D$ and $Z$, and let $Y:=X_D+Z$; from~\cite[eq(15)]{PAMozarow} we know that 
\begin{align}
I(X_D; Y) \geq I(\widetilde{X}; Y) = h(\widetilde{X}) - h(\widetilde{X}|Y).
\end{align}
By removing the assumption that $X_D$ is a PAM we write~\cite[eq(16)]{PAMozarow} as
\begin{align}
h(\widetilde{X}) = H(X_D) + \log(d_{\min(X_D)}).
\label{eq:h(tildeX)}
\end{align}
Next, the derivation of~\cite[eq(19)]{PAMozarow} holds under the assumptions of the proposition, i.e., no need to assume a PAM input or a Gaussian noise; thus, by using $s^2 = \mathbb{E}[(\widetilde{X} - k Y)^2]$ with $k = \frac{\mathbb{E}[\widetilde{X}Y]}{\mathbb{E}[Y^2]}$ we write~\cite[eq(19)]{PAMozarow} as
\begin{align}
h(\widetilde{X}|Y) \leq \frac{1}{2}\log\left[2\pi\eu\left(\frac{d_{\min(X_D)}^2}{12}+\frac{\mathbb{E}[X_D^2]}{\mathbb{E}[X_D^2]+1}\right)\right].
\label{eq:h(tildeX|Y)with E/(E+1)}
\end{align}
Combing this, 
by the non-negativity of mutual information, 
and since $\frac{\mathbb{E}[X_D^2]}{\mathbb{E}[X_D^2]+1} \leq 1$, 
the lower bound in~\eqref{eq:OW gen I partB}
with the gap expression in~\eqref{eq:OW gen gap partB} follows immediately.
\end{IEEEproof}

\begin{rem}
The proof of Proposition~\ref{prop:lowbound OW generalized}
holds for any continuous $U$ such that $\supp(U) \subseteq [-d_{\min(X_D)}/2,+d_{\min(X_D)}/2]$.
In this case
$\log(d_{\min(X_D)})$ must be replaced by $h(U)$ in~\eqref{eq:h(tildeX)}, and
$\frac{d_{\min(X_D)}^2}{12}$ must be replaced by the variance of $U$ in~\eqref{eq:h(tildeX|Y)with E/(E+1)}.
However, for this more general case, it may not be easy to analytically express the entropy as a function of the variance,
and to relate them to the bound on the size of the support of the distribution given by $d_{\min(X_D)}$.
\end{rem}

\begin{rem}
\label{rem:lowbound OW generalized AWGN}
If in Proposition~\ref{prop:lowbound OW generalized} we set
$Z=Z_G \sim \mathcal{N}(0,1)$, then we can tighten the upper bound in~\eqref{eq:OW gen I partB} to 
\begin{align}
   \mud(X_D) \leq 
   I(X_D;  X_D+Z_G) \leq \min\left(H(X_D),\mug(\mathbb{E}[X_D^2])\right),
   \label{eq:OW gen I AWGN}
\end{align}
since a Gaussian input is capacity achieving for the power-constrained point-to-point Gaussian noise channel. 
\end{rem}

We next compare the Ozarow-Wyner-B lower bound in Proposition~\ref{prop:lowbound OW generalized} to bounds available in the literature.

\paragraph*{Ozarow-Wyner-A, or Fano-based, bound}
Proposition~\ref{prop:lowbound OW generalized} generalizes the approach of~\cite[Part b)]{PAMozarow}.
Had we generalized~\cite[Part a)]{PAMozarow}, we would have obtained the following lower bound valid for Gaussian noise only

\begin{subequations}
\begin{align}
  &\left[H(X_D) - {\gap}_{\eqref{eq:OW gen partA}}\right]^+ \leq I(X_D;  X_D+Z_G),
   \label{eq:OW gen I partA}
\\&{\gap}_{\eqref{eq:OW gen partA}} := \left.\xi \log\frac{1}{\xi} + (1-\xi) \log\frac{1}{1-\xi}+\xi\log(N-1)\right.,
   \label{eq:OW gen gap partA}
\\&{\xi := 2 Q\left(\frac{ d_{\min(X_D)}}{2}\right)}, 
   \label{eq:OW gen Ps partA}
\end{align}
where $\xi$ is the union-of-events upper bound on the probability of symbol error for a minimum-distance symbol-by-symbol detector in Gaussian noise from Fano's inequality.  
\label{eq:OW gen partA}
\end{subequations}
We note that a similar Fano-based bounding technique was also used in~\cite[Theorem 3]{MoshksarJournal}.

In the following we are interested in showing that certain upper and lower bounds are to within a constant gap of one another, regardless of the channel parameters.
For bounds as in~\eqref{eq:OW gen partB}, the quantity ``${\gap}$'' upper bounds the difference between the upper and lower bounds. 
The gap in~\eqref{eq:OW gen partA} (that generalizes~\cite[Part a)]{PAMozarow} to any discrete input on the Gaussian noise channel) is bounded if the term $\xi\log(N-1)$ is bounded; by using the Chernoff's bound for the Q-function, i.e., $Q(x) \leq \frac{1}{2}\eu^{-x^2/2}$ and by imposing $\xi\log(N-1) \leq 1$ ,we get
\begin{align*}
 \text{bounded gap in~\eqref{eq:OW gen partA}}
 &\Longleftrightarrow
   \log(N-1) \leq  \eu^{d_{\min(X_D)}^2/8}
\\&\Longleftrightarrow
  d_{\min(X_D)}^2 \geq 8\ln(\log(N-1)),
\end{align*}
in other words, the minimum distance squared must be of the order of $\ln(\log(N))$ for the gap in~\eqref{eq:OW gen partA} to be bounded.

On the other hand, the gap in~\eqref{eq:OW gen partB} (that generalizes~\cite[Part b)]{PAMozarow} to any discrete input on  any additive noise channel) is bounded as long as the minimum distance is lower bounded by a constant; for example 
\begin{align*}
 \text{bounded gap in~\eqref{eq:OW gen partB}, say} \
 &{\gap}_{\eqref{eq:OW gen gap partB}} \leq \frac{1}{2}\log\left(6\pi\eu\right) \approx 2.047~\text{bits}
 \\&\Longleftrightarrow
 d_{\min(X_D)}\geq 2,
\end{align*}
that is, the minimum distance does not need to grow in a particular way with the number of points of the constellation, but it is required to be bounded by a constant from below.

\paragraph*{DTD-ITA'14 bound}
In a conference version of this work~\cite{DytsoITA2014}, we derived the following lower bound for the mutual information with a discrete input on a Gaussian noise channel.
As before, let the noise $Z_G\sim \mathcal{N}(0,1)$ be independent of the discrete input $X_D$, and let $\Pr[X_D = s_j] = p_j > 0, \ j\in[1:N]$ such that $\sum_{j\in[1:N]}p_j=1$. We have -- the proof can be found in Appendix~\ref{app:ADlowerbound}:
\begin{subequations}
\begin{align}
&
\left[\log(N)-{\gap}_{\eqref{eq:AD gen}}\right]^+ 
\label{eq: ITA bound simplif}
\\& \leq \left[ -\log  \left( {\sum_{(i,j)\in[1:N]^2} p_i p_j \frac{1}{\sqrt{4\pi}} \eu^{-\frac{(s_i-s_j)^2}{4}}}\right)- \frac{1}{2}\log \left( 2\pi \eu \right) \right]^{+} \leq I(X_D;  X_D+Z_G),
\label{eq:ITA with Shaping}
\\
&
{\gap}_{\eqref{eq:AD gen}} := \frac{1}{2}\log\left(\frac{\eu}{2}\right)+\log\left(1+(N-1)\eu^{-d_{\min(X_D)}^2/4} \right).
\label{eq:AD gen gap}
\end{align}
The advantage of the bound in~\eqref{eq: ITA bound simplif} (referred to in the following as `simple DTD-ITA'14 bound') is its simplicity: it only depends on the constellation through the number of points and the minimum distance.
The bound in~\eqref{eq:ITA with Shaping} (referred to in the following as `full DTD-ITA'14 bound') is in general tighter than the one in~\eqref{eq: ITA bound simplif} but requires the knowledge of the whole ``distance spectrum'' (all pair-wise distances among constellation points) as well as the ``shaping'' of the constellation (the a priori probability of each constellation point), which does not make it amenable for closed form analytical computations in general.  
\label{eq:AD gen}
\end{subequations}

Again aiming at a bounded gap, we have
\begin{align*}
 \text{bounded gap in~\eqref{eq:AD gen}}
  &\Longleftrightarrow
 (N-1) \eu^{-\frac{ d_{\min(X_D)}^2}{4}} \leq 1 
\\&\Longleftrightarrow d_{\min(X_D)}^2 \geq 4\ln(N-1),
\end{align*}
in other words, the minimum distance squared must be of the order of $\log(N)$ for the gap in~\eqref{eq:AD gen gap} to be bounded.
Because of this `strong' requirement on the the minimum distance, in~\cite{dytsoTINconf} we could show that a mixed input achieves the capacity region of the classical G-IC to within an additive gap of the order of $\loglog$, rather than a constant gap; but it was nonetheless sufficient to show that {TINnoTS} with mixed inputs achieves the sum gDoF of the classical G-IC for all channel gains up to a set of zero measure.

\paragraph*{Numerical Comparisons}
We conclude this subsection by numerically comparing the lower bounds in~\eqref{eq:OW gen partB},~\eqref{eq:OW gen partA} and~\eqref{eq:AD gen} for the Gaussian noise channel with a PAM input, which is asymptotically capacity achieving at high SNR~\cite{PAMozarow}.

In Fig.~\ref{fig: bounds on Id} we plot bounds on $I(X_D;  \sqrt{\snr}\ X_D+Z_G)$ vs. $\snr$ in dB;
here $\snr$ represents the SNR at the receiver,
$Z_G\sim \mathcal{N}(0,1)$ is the noise, and
$X_D \sim \pam\left( N, \sqrt{\frac{12}{N^2-1}} \right)$ is the input with 
$N = \points(\snr)= \left \lfloor \sqrt{1+\snr} \right\rfloor \approx \snr^{\frac{1}{2}}$.
In Fig.~\ref{fig:AchiebablePTP} we plot the rate bounds while in Fig.~\ref{fig:gapsToPTPcap} the gap to capacity, i.e., the difference between the channel capacity and the different lower bounds. 
In both figures we show:
\begin{enumerate}
\item 
The black curve is the channel capacity $\mug(\snr)$.
\item 
The blue curve is the Ozarow-Wyner-B bound in~\eqref{eq:OW gen I partB}. 
From Fig.~\ref{fig:gapsToPTPcap} this bound is asymptotically (for $\snr\geq 30$dB) to within   
$0.754$~bits of capacity, which is much better than the analytic worst case gap of $\frac{1}{2} \log(6 \pi \eu) = 2.8395$~bits shown before. 

\item 
The magenta curve is the Ozarow-Wyner-A bound in~\eqref{eq:OW gen I partA}. 
This bound is to within  $O(\log(\snr))$ of capacity (i.e., straight line as a function of $\snr|_\text{dB}$).  
\item 
The cyan curve is the simple DTD-ITA'14 bound in~\eqref{eq: ITA bound simplif}.
Here we used $N = \points(\snr^{1-\epsilon})\approx \snr^{\frac{1-\epsilon}{2}}$
with $\epsilon=\max\left(0,\frac{\log(\frac{1}{6}\ln(\snr))}{\log(\snr)} \right)$. 
This choice of $\epsilon$ was derived in~\cite[Theorem 3]{DytsoITA2014} in order to have a $O(\log \log (\snr))$ gap to capacity. 
Had we chosen $\epsilon=0$ then we could only achieve a `gap' of $O(\log(\snr))$. 
Similarly, for the Ozarow-Wyner-A, had we choose the same $\epsilon=\max\left(0,\frac{\log(\frac{1}{6}\ln(\snr))}{\log(\snr)} \right)$  a similar $O(\log \log (\snr))$ gap would have been observed. 
\item  
The green curve is the full DTD-ITA'14 bound in~\eqref{eq:ITA with Shaping}, 
which from Fig.~\ref{fig:gapsToPTPcap} achieves asymptotically (for $\snr\geq 30$dB) to within   
$0.36$~bits of capacity. 
\end{enumerate}

The quantity $\frac{1}{2}\log\left(\frac{\pi\eu}{6}\right)$  is also shown  for reference in Fig.~\ref{fig:gapsToPTPcap}; 
this is the ``shaping loss'' for a one-dimensional infinite lattice 
and is the limiting gap if the number of points $N$ grows faster 
than $\snr^{1/2}$.
The ``zig-zag'' behavior of the curves at low SNR is due to the floor operation in $N = \left \lfloor \sqrt{1+\snr} \right\rfloor$.

We observe that the relative ranking among the bounds at low SNR (roughly less than 27~dB) is different than at high SNR. 
In particular we observe a qualitatively different behavior at high SNR: 
the Ozarow-Wyner-B bound in~\eqref{eq:OW gen I partB} (blue curve) and 
the full DTD-ITA'14 bound in~\eqref{eq:ITA with Shaping} (green curve) result in a constant gap, while 
the Ozarow-Wyner-A bound in~\eqref{eq:OW gen I partA} (magenta curve) and 
the simple DTD-ITA'14 bound in~\eqref{eq: ITA bound simplif} (cyan curve) result in a gap that grows with SNR;
this is in agreement with the previous discussion that points out that for a constant gap in the latter two cases the number of points $N$ must grow slower 
than $\snr^{1/2}$.
The smallest gap at high SNR for $N \approxeq \snr^{1/2}$ is given by our full DTD-ITA'14 bound in~\eqref{eq:ITA with Shaping} (green curve);
as pointed out earlier, this bound is unfortunately not amenable for closed form analytical evaluations,  so in the following we shall use the Ozarow-Wyner-B bound in~\eqref{eq:OW gen I partB} (blue curve) from Proposition~\ref{prop:lowbound OW generalized} whose simplicity comes at the cost of a larger gap.

\begin{figure}
        \centering
        \begin{subfigure}[a]{0.5\textwidth}
                \includegraphics[width=8.5cm]{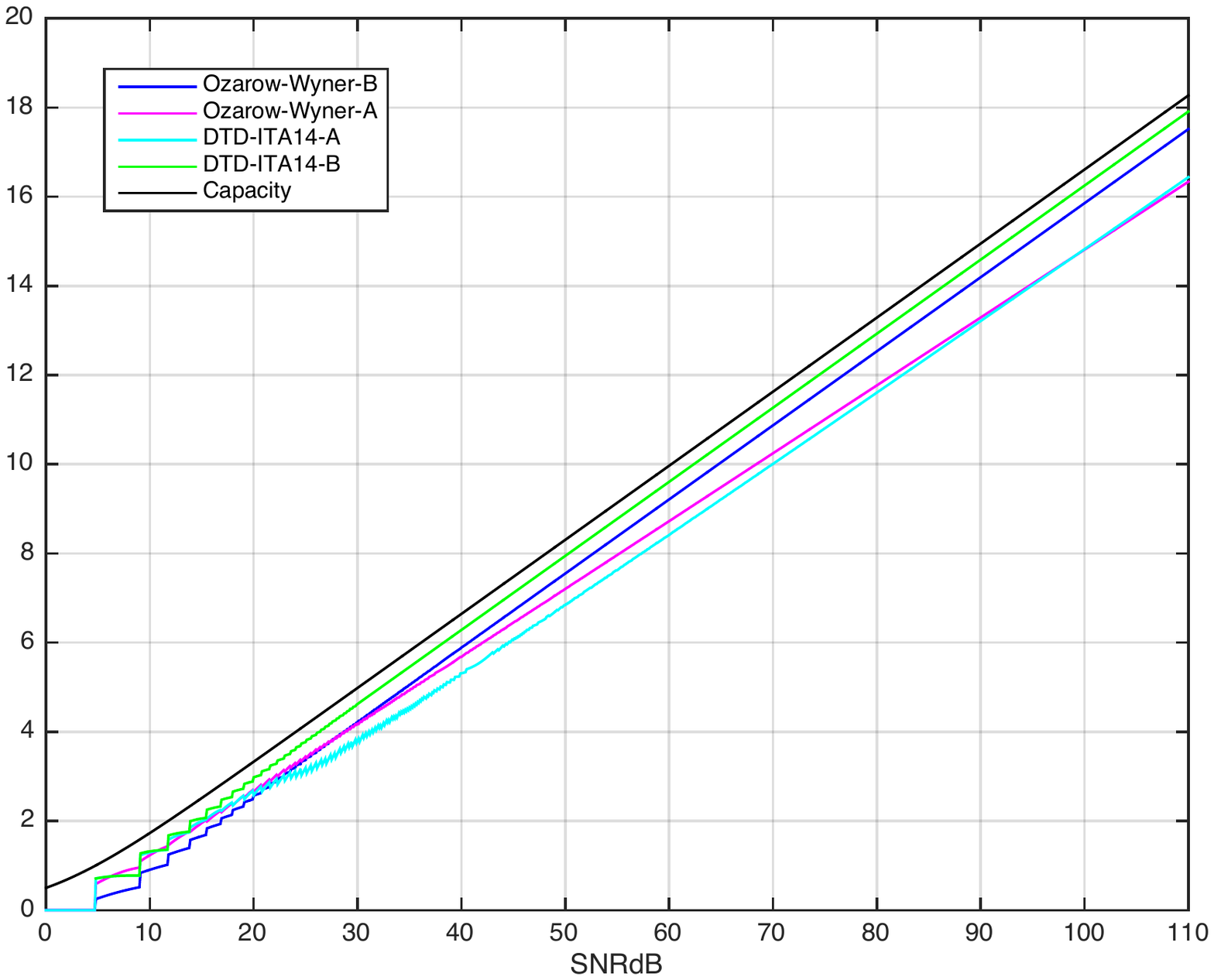}
                \caption{Bounds on $I(X_D; \sqrt{\snr} X_D +Z_G)$ vs $\snr$.}
                \label{fig:AchiebablePTP}
        \end{subfigure}%
        ~ 
        \begin{subfigure}[a]{0.5\textwidth}
                \includegraphics[width=8.5cm]{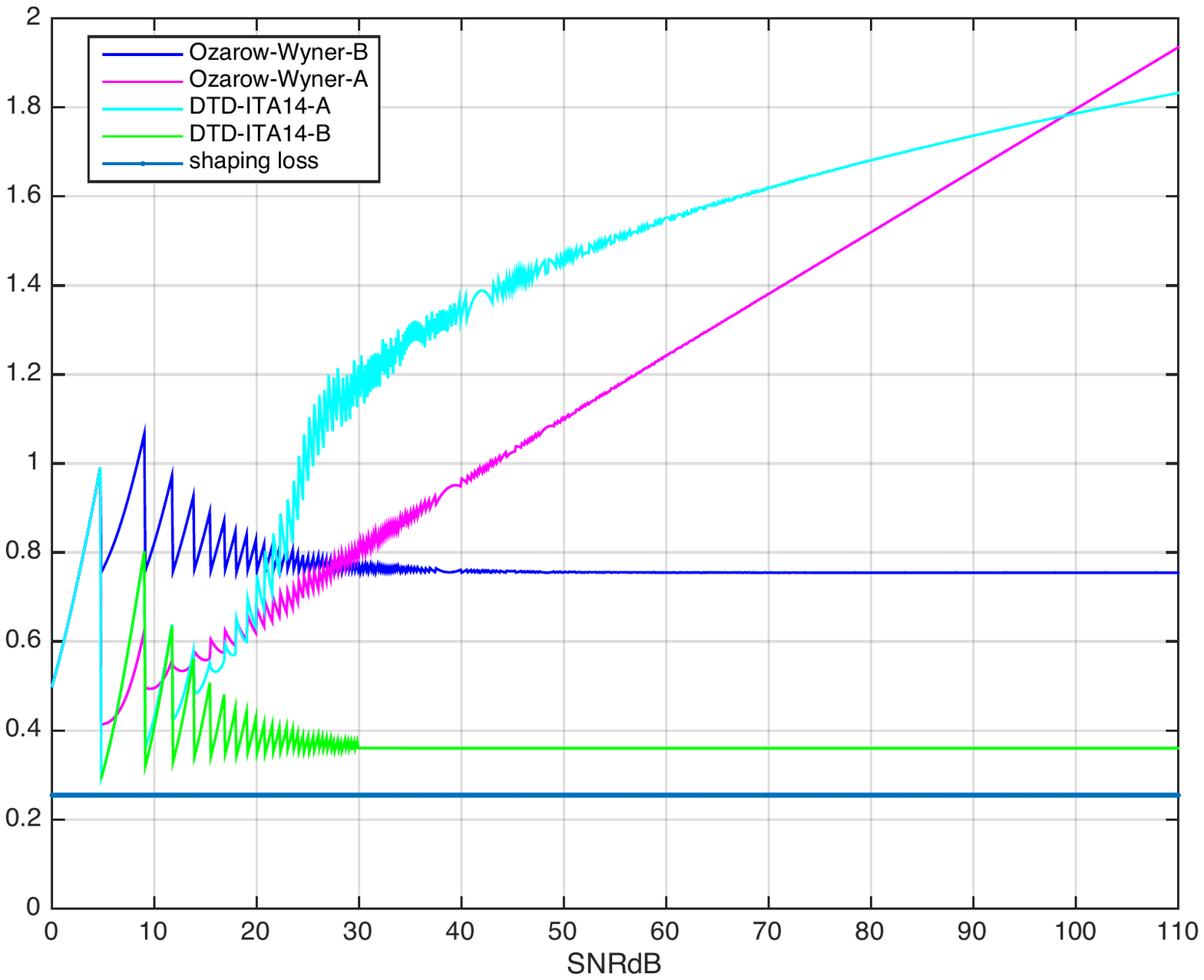}
                \caption{Gap to capacity vs $\snr$.}
                \label{fig:gapsToPTPcap}
        \end{subfigure}
        ~ 
        \caption{Comparison of different bounds for a PAM input on a Gaussian noise channel.}
        \label{fig: bounds on Id}
\end{figure}


\subsection{Cardinality and Minimum Distance Bounds for Sum-Sets}
\label{sec:carAndMinD}

In multi-user settings, we may wish to select one user's input as Gaussian, another as discrete, 
or both mixtures of discrete and Gaussian. To handle such scenarios, we need bounds on the cardinality and minimum distance of sums of discrete constellations. If $X$ and $Y$ are two sets, we denote the {\it sum-set} as 
\[
X+Y : = \{x+y| x\in {X}, y\in {Y}\}. 
\]
Tight bounds on the cardinality and the minimum distance of $X+Y$, for general $X$ and $Y$, are an open problem in the area of additive combinatorics and number theory~\cite{addCombTao}.

The following set of sufficient conditions for the sum-set obtained with two PAM constellations 
(actually the probability with which each point is used does not matter as long as it is strictly positive)
will play an important role in evaluating our inner bound.

\begin{prop} 
\label{prop:combNOOUTAGE}
Let $(h_x,h_y)\in \mathbb{R}^2$ be two constants.
Let $X \sim \pam(|X|,d_{\min(X)})$ and $Y \sim \pam(|Y|,d_{\min(Y)})$.
Then 
\begin{align}
|h_xX+h_yY|&=|X||Y|,
\\
d_{\min\left(h_xX+h_yY\right)}&=\min \left(|h_x|d_{\min(X)},|h_y|d_{\min(Y)} \right),
\end{align}
under the following conditions
\begin{subequations}
\begin{align}
\text{either}    \ \ {|Y||h_y|d_{\min(Y)}} \leq |h_x|d_{\min(X)},
\label{eq:thm:comb condition1}\\
\text{or} \ \ {|X||h_x|d_{\min(X)}} \leq |h_y|d_{\min(Y)}.
\label{eq:thm:comb condition2}
\end{align}
\label{eq:thm:comb conditions}
\end{subequations}
\end{prop}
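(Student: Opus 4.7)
The plan is to exploit the arithmetic-progression structure of PAM constellations and reduce both claims to a comparison of inter-layer vs.\ intra-layer separations. By symmetry of the hypothesis in $X$ and $Y$, I may assume that condition~\eqref{eq:thm:comb condition1} holds, so that $h_x X$ is the ``outer'' constellation with spacing $|h_x| d_{\min(X)}$ and $h_y Y$ is the ``inner'' constellation with total diameter $(|Y|-1)|h_y| d_{\min(Y)} < |Y||h_y| d_{\min(Y)} \leq |h_x| d_{\min(X)}$; in words, one full translate of $h_y Y$ fits strictly inside a single spacing of $h_x X$.

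For the cardinality claim I would prove injectivity of the map $(x_i,y_j)\mapsto h_x x_i + h_y y_j$ by contradiction. Two distinct pairs giving the same value force $h_x(x_i-x_{i'}) = h_y(y_{j'}-y_j)$: the left-hand side is either zero or has magnitude at least $|h_x| d_{\min(X)}$, while the right-hand side has magnitude at most $(|Y|-1)|h_y| d_{\min(Y)} < |h_x| d_{\min(X)}$ by~\eqref{eq:thm:comb condition1}. This forces $x_i = x_{i'}$, which in turn yields $y_j = y_{j'}$, contradicting distinctness. Hence $|h_x X + h_y Y| = |X||Y|$.

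For the minimum-distance claim I would split pairs of sum-set points according to whether their $x$-indices coincide. Intra-layer pairs ($x_i = x_{i'}$, $y_j \neq y_{j'}$) have distance $|h_y||y_j - y_{j'}| \geq |h_y| d_{\min(Y)}$, and this bound is achieved. For inter-layer pairs ($x_i \neq x_{i'}$), the reverse triangle inequality gives a distance of at least $|h_x| d_{\min(X)} - (|Y|-1)|h_y| d_{\min(Y)} \geq |h_y| d_{\min(Y)}$, again by~\eqref{eq:thm:comb condition1}. Combining both cases, the minimum distance is exactly $|h_y| d_{\min(Y)}$, which equals $\min(|h_x| d_{\min(X)}, |h_y| d_{\min(Y)})$ since~\eqref{eq:thm:comb condition1} implies $|h_x| d_{\min(X)} \geq |h_y| d_{\min(Y)}$.

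The only source of care is the bookkeeping between the non-strict inequality in~\eqref{eq:thm:comb condition1} and the strict separation I actually need for injectivity; this is automatic because $(|Y|-1)|h_y| d_{\min(Y)}$ is strictly less than $|Y||h_y| d_{\min(Y)}$ whenever $d_{\min(Y)} > 0$. No deeper additive-combinatorial machinery is required, which makes the result clean and explicit—precisely what is needed downstream when sizing mixed inputs.
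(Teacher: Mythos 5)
Your proof is correct and follows essentially the same route as the paper: the paper's argument is exactly the non-overlap observation (one scaled PAM constellation fits entirely within a single spacing of the other, illustrated by a figure), which you have simply written out rigorously via the injectivity and intra-/inter-layer distance cases. No discrepancy to report.
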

\begin{IEEEproof}
The condition in~\eqref{eq:thm:comb conditions} is such that one PAM constellation is completely contained within two points of the other PAM constellation, see Fig.~\ref{fig:noOverLap} for a visual illustration.
\end{IEEEproof}

\begin{figure}
\center
\includegraphics[width=9cm]{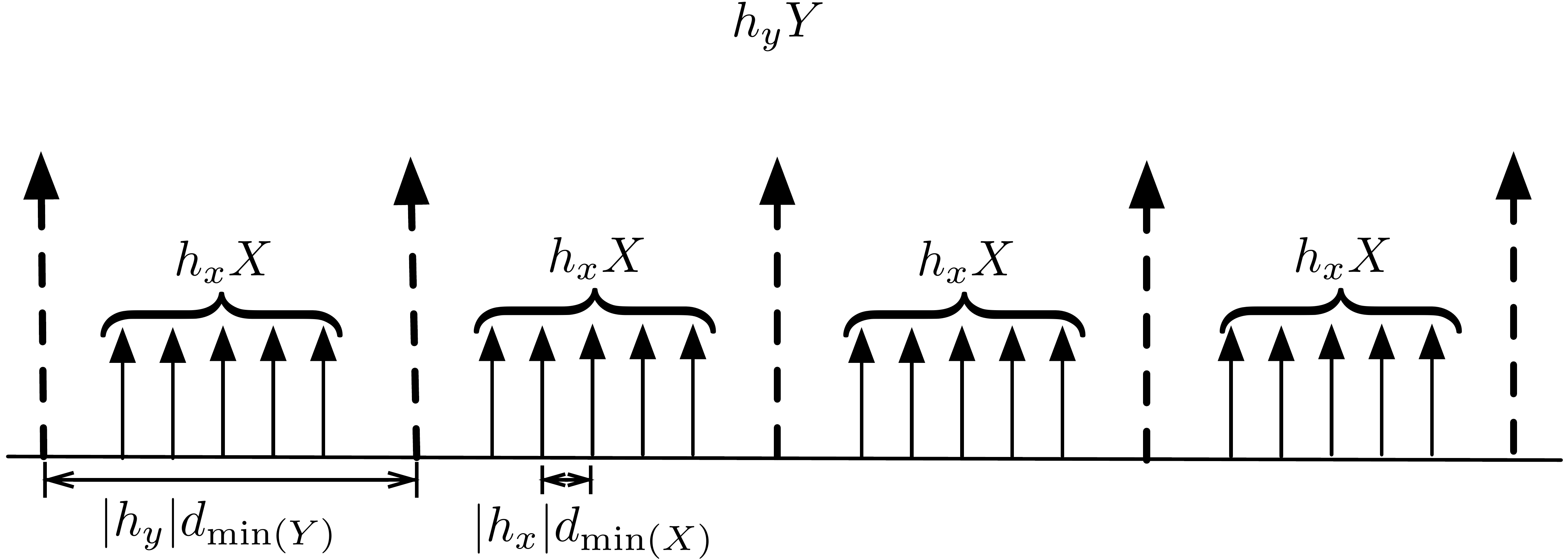}
\caption{Structure of the sum-set under the conditions in Proposition~\ref{prop:combNOOUTAGE}.}
\label{fig:noOverLap}
\end{figure}

\begin{figure}
\centering
\includegraphics[width=8cm]{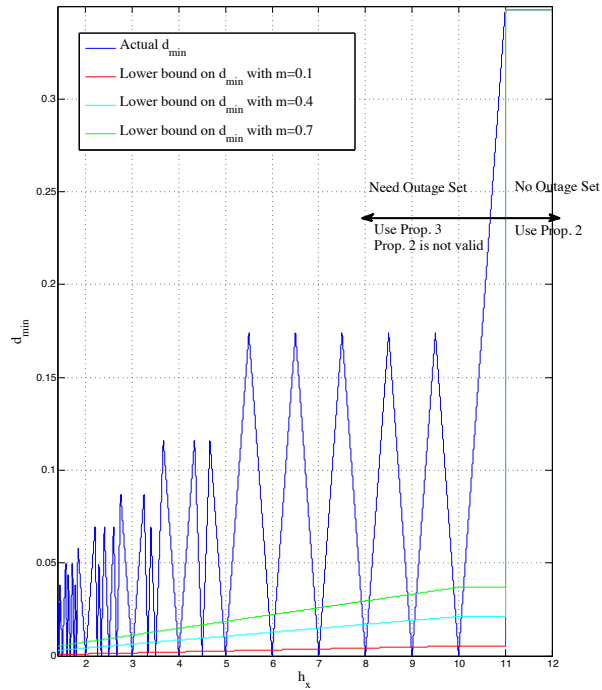}  
\caption{Minimum distance (blue line) for the sum-set $h_xX+h_yY$ as a function of $h_x$ for fixed $h_y=1$ and for $X \sim Y \sim \pam \left(10,1\right)$.
 On the right of the vertical green line Proposition~\ref{prop:combNOOUTAGE} is valid.
On the left of the vertical green line Proposition~\ref{prop:card:dmin:measurebound} must be used;
in this case, the minimum distance lower bound in~\eqref{eq:defdiminblabliblabla} holds for set of $h_x$'s for which the blue line is above the red / cyan / green line, where the red, cyan and green lines represent a different value for the measure of the outage set. }
\label{fig:minimumDistanceBehaviour}
\end{figure}

We will refer to the condition in~\eqref{eq:thm:comb conditions} as the {\it non-overlap} condition. Unfortunately, Proposition~\ref{prop:combNOOUTAGE} is not sufficient for our purposes because it restricts the set of channel parameters for which we can compute the minimum distance to those cases where the non-overlap condition holds. When the non-overlap condition in~\eqref{eq:thm:comb conditions} is not satisfied, the minimum distance is very sensitive to the fractional values of $h_x$ and $h_y$.
Fig.~\ref{fig:minimumDistanceBehaviour} shows, in solid blue line, the minimum distance for the sum-set $h_xX+h_yY$ 
as a function of $h_x$ for fixed $h_y=1$ and where $X$ and $Y$ are the same $\pam(10,1)$ constellation.
It can be observed that there are channel gains for which the minimum distance is zero;
those occur on the left of the vertical green line, which separates the values of $h_x$ for which Proposition~\ref{prop:combNOOUTAGE} is valid (right side) for those where it is not (left side).
To bound the cardinality and the minimum distance when the condition in~\eqref{eq:thm:comb conditions} is not satisfied  we have the following  lower bound.
\begin{prop}
\label{prop:card:dmin:measurebound}
Let $X \sim \pam(|X|,d_{\min(X)})$ and $Y \sim \pam(|Y|,d_{\min(Y)})$.
Then  for  $(h_x,h_y) \in\mathbb{R}^2$ 
\begin{align}
&|h_xX+h_yX| = |X| |Y| \ \text{almost everywhere (a.e.)},
\label{eq:dedminoutage}
\end{align}
and for any $\gamma>0$ there exists a set $E \subseteq \mathbb{R}^2$ such that for all $(h_x,h_y)\in E$
\begin{subequations}
\begin{align}
d_{\min \left( h_xX+h_yY\right)} &\geq \kappa_{\gamma,|X|,|Y|} \cdot
\min\left(|h_x|d_{\min(X)}, |h_y|d_{\min(Y)},\Upsilon_{|h_x|,|h_y|,|X|,|Y|}\right),
\label{eq:defdiminblabliblabla}
\\
\kappa_{\gamma,|X|,|Y|} &:= \frac{\gamma/2}{1+\ln(\max(|X|,|Y|))},
\label{eq:defkappa}
\\
\Upsilon_{|h_x|,|h_y|,|X|,|Y|} &:= \max \left(\frac{|h_x|d_{{ \min}(X)}}{  |Y|},\frac{|h_y|d_{{ \min}(Y)}}{  |X|} \right),
\label{eq:defUpsilon}
\end{align}
where the Lebesgue measure of the complement of the set $E$ (referred to as the {\it outage set}) satisfies $m(E^c)\leq \gamma$.
\end{subequations}
\end{prop}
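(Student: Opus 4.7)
The plan splits into two pieces: first prove the cardinality equality \eqref{eq:dedminoutage}, then bound the minimum distance \eqref{eq:defdiminblabliblabla}.

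For \eqref{eq:dedminoutage}, I would observe that a collision $h_x x_1 + h_y y_1 = h_x x_2 + h_y y_2$ in the sum-set is equivalent to $h_x(x_1-x_2)+h_y(y_1-y_2)=0$. For any fixed non-trivial pair of differences, this equation carves out a line through the origin in the $(h_x,h_y)$-plane, which has two-dimensional Lebesgue measure zero. Since the set of difference pairs is finite (of size at most $(2|X|-1)(2|Y|-1)$), the collision set is a finite union of measure-zero lines and is therefore itself measure zero, yielding \eqref{eq:dedminoutage}.

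For the minimum-distance estimate, I would write $a=\alpha d_{\min(X)}$, $b=\beta d_{\min(Y)}$ with $|\alpha|\leq |X|-1$, $|\beta|\leq |Y|-1$, and introduce the scaled parameters $u:=h_x d_{\min(X)}$, $v:=h_y d_{\min(Y)}$, so that
\[
d_{\min(h_xX+h_yY)}=\min_{(\alpha,\beta)\neq(0,0)}|u\alpha+v\beta|.
\]
The trivial cases $\alpha=0$ or $\beta=0$ give contributions at least $|v|$ or $|u|$ respectively, each of which dominates the target $\kappa_{\gamma,|X|,|Y|}\min(|u|,|v|,\Upsilon_{|h_x|,|h_y|,|X|,|Y|})$ provided $\kappa_{\gamma,|X|,|Y|}\leq 1$; the content is in the case $\alpha\beta\neq 0$.

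The main step is Diophantine in flavor. For each fixed nonzero pair $(\alpha,\beta)$, the bad set $\{(u,v): |u\alpha+v\beta|<\delta\}$ is a strip about the line $u\alpha+v\beta=0$ of width proportional to $\delta/|\beta|$. Because both sides of the target inequality are positively homogeneous of degree one in $(u,v)$, the bad set is a union of cones in $\mathbb{R}^2$, so I would carry out the measure computation inside a fundamental reference domain such as $[-1,1]^2$ and then lift by scale-invariance. Grouping strips by their ``level'' $|\beta|$ and using that rationals $-\alpha/\beta$ with the same denominator $|\beta|$ are separated by at least $1/|\beta|$, so that the corresponding strips are essentially disjoint in the reference region, the union of widths collapses into a harmonic sum $\sum_{|\beta|=1}^{|Y|-1}1/|\beta|\leq 1+\ln(|Y|-1)$; a symmetric regime of $(u,v)$ contributes the companion factor $1+\ln(|X|-1)$. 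Setting the resulting total outage measure equal to $\gamma$ then pins down $\kappa_{\gamma,|X|,|Y|}$ as given in \eqref{eq:defkappa}.

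The principal obstacle is twofold. First, the target $\min(|u|,|v|,\Upsilon_{|h_x|,|h_y|,|X|,|Y|})$ is a piecewise-linear envelope of $(u,v)$ whose active slot changes across the regimes $|v/u|\lessgtr 1/|Y|$, $|X|/|Y|$, $|X|$, and each regime must be verified separately; the role of $\Upsilon_{|h_x|,|h_y|,|X|,|Y|}$ is precisely to supply a non-vacuous bound in the intermediate regime where $|u|$ and $|v|$ differ significantly in magnitude, without which the harmonic argument would not close. Second, to obtain only a \emph{logarithmic} rather than a polynomial loss in $\max(|X|,|Y|)$, the disjointness argument for the bad strips must be executed carefully, so that the union bound does not over-count overlaps between strips associated with rationals of different denominators.
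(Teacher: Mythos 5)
Your cardinality argument and the overall skeleton of your distance bound match the paper: the collision set is handled there exactly as you do (a finite union of lines in the $(h_x,h_y)$-plane, hence null), and the distance bound is likewise split into the trivial case where one difference vanishes and the main case, which is reduced to $\min_{z_x,z_y\neq 0}|a z_x-b z_y|$ with $a=h_x d_{\min(X)}$, $b=h_y d_{\min(Y)}$ and then attacked by estimating the measure of ``bad strips'' (Lemma~\ref{lem:card:dmin:measurebound}). The gap is in your central quantitative step. You claim that, after restricting to a reference domain, strips attached to rationals $-\alpha/\beta$ with the same denominator are essentially disjoint and that this disjointness ``collapses'' the total bad measure into a harmonic sum, i.e.\ a purely logarithmic loss for a direction-uniform threshold, which then yields $\kappa_{\gamma,|X|,|Y|}$ in \eqref{eq:defkappa}. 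Disjointness cannot buy this: the measure of a disjoint union equals the sum of the measures, which is exactly the union bound, and for a uniform threshold $\delta$ that sum is genuinely linear in the constellation sizes — each level $|\beta|$ contributes on the order of $|X|$ strips of measure about $\delta/\max(|\alpha|,|\beta|)$ inside the reference square, and $\sum_{\alpha,\beta}1/\max(|\alpha|,|\beta|)$ grows linearly in $\min(|X|,|Y|)$, not logarithmically; restricting to coprime pairs does not change the order. So the step ``harmonic sum $\Rightarrow$ \eqref{eq:defkappa}'' as you describe it does not close, and your reading of $\Upsilon_{|h_x|,|h_y|,|X|,|Y|}$ as a fix for the regime where $|u|$ and $|v|$ differ greatly is backwards: $\Upsilon$ is the \emph{active} (smallest) term precisely when $|u|$ and $|v|$ are comparable.

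What the paper actually does is the opposite division of labor. The unavoidable linear factor is placed in the \emph{threshold}, not in $\kappa$: the auxiliary lemma proves, by explicitly integrating the area of each strip in the unit square obtained from the normalized variables $a/\lceil a\rceil,\,b/\lceil b\rceil$ and then applying plain sub-additivity, that outside a set of measure $\gamma$ one has $\min|az_x-bz_y|\geq \gamma\max\bigl(b/(2N_x(1+\ln N_y)),\,a/(2N_y(1+\ln N_x))\bigr)$; summing over one integer index produces the linear factor $N_x$ (or $N_y$), which is exactly what becomes the denominators $|X|,|Y|$ inside $\Upsilon$ in \eqref{eq:defUpsilon}, while the harmonic sum over the other index produces the $1+\ln(\cdot)$ that ends up in $\kappa$ via \eqref{eq:defkappa}. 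Combining this with the trivial cases gives \eqref{eq:defdiminblabliblabla}; no disjointness or overlap accounting is used, and none would substitute for the $\Upsilon$-scaling of the threshold. A secondary point: since both sides of \eqref{eq:defdiminblabliblabla} are homogeneous of degree one, the failure set is a union of cones and has infinite Lebesgue measure whenever it has interior, so a literal ``compute in a reference square and lift by scale invariance'' cannot produce a subset of $\mathbb{R}^2$ of measure at most $\gamma$; the paper's ceiling normalization of $(a,b)$ is what gives the finite-measure statement a meaning, and your homogeneity reduction would need an analogous normalization.
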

\begin{proof}
The proof can be found in Appendix~\ref{app:Proof of Minimum Distance measure bound}.\footnote{%
In our conference paper~\cite{dytsoTINconf}, the minimum distance bound in~\cite[eq.(8)]{dytsoTINconf} 
was missing the term $\Upsilon_{|h_x|,|h_y|,|X|,|Y|}$ in~\eqref{eq:defUpsilon}.
However, this did not impact the claimed gDoF results. 
}
\end{proof}

The reason we need to introduce an outage set in Proposition~\ref{prop:card:dmin:measurebound} is that there are  values of $(h_x,h_y)$ for which the minimum distance is zero, as it can be seen from Fig.~\ref{fig:minimumDistanceBehaviour}.
In computing the gap later on, we want to exclude the set of channel gains for which the minimum distance is too close to zero; 
the measure of this set can be controlled through the parameter $\gamma$. 
The green, cyan, and red lines in Fig.~\ref{fig:minimumDistanceBehaviour} represent lower bounds on the minimum distance that are valid everywhere except for a set of measure no greater than  $\gamma=0.1, 0.3$ and $0.7$, respectively.
It is important to notice that the set of channel gains for which the minimum distance is exactly zero satisfies: 

\begin{prop}
\label{cor: mini dist 0}
 Under the same assumptions of Proposition~\ref{prop:card:dmin:measurebound}, the set of $(h_x,h_y) \in \mathbb{R}^2$ such that $d_{\min \left( h_xX+h_yY\right)}=0$ has Lebesque measure zero for any pair of countable sets $X$ and $Y$.
\end{prop}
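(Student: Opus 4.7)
The plan is to translate the event $d_{\min(h_x X + h_y Y)} = 0$ into a countable union of linear constraints on $(h_x, h_y)$, each of which cuts out a measure-zero subset of $\mathbb{R}^2$, and then invoke countable subadditivity of Lebesgue measure.

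First I would unpack the definition. The minimum distance of the sum-set vanishes precisely when there exist distinct pairs $(x_1, y_1), (x_2, y_2) \in X \times Y$ with
\[
h_x x_1 + h_y y_1 = h_x x_2 + h_y y_2,
\]
equivalently $h_x a + h_y b = 0$ for some $(a, b) = (x_1 - x_2, y_1 - y_2) \in (X-X) \times (Y-Y) \setminus \{(0,0)\}$. Hence the ``bad'' set can be written as
\[
B \;=\; \bigcup_{(a,b) \in \mathcal{D}} L_{a,b}, \qquad L_{a,b} := \{(h_x, h_y) \in \mathbb{R}^2 : h_x a + h_y b = 0\},
\]
where $\mathcal{D} := \bigl((X-X) \times (Y-Y)\bigr) \setminus \{(0,0)\}$.

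Next I would argue that each $L_{a,b}$ has two-dimensional Lebesgue measure zero. For any fixed $(a,b) \neq (0,0)$, $L_{a,b}$ is a one-dimensional linear subspace of $\mathbb{R}^2$ (the $h_y$-axis if $a = 0$, the $h_x$-axis if $b = 0$, and the line $h_y = -(a/b)h_x$ otherwise), so $m(L_{a,b}) = 0$ by a standard Fubini argument.

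Finally, since $X$ and $Y$ are countable, the difference sets $X - X$ and $Y - Y$ are countable, so $\mathcal{D}$ is a countable subset of $\mathbb{R}^2$. Therefore $B$ is a countable union of Lebesgue-null sets, and countable subadditivity gives $m(B) = 0$, as claimed. There is no genuine obstacle here; the only subtlety is noticing that the ``countable $X, Y$'' hypothesis is exactly what is needed to keep $\mathcal{D}$ countable, so the argument works verbatim beyond the PAM setting of Proposition~\ref{prop:card:dmin:measurebound}.
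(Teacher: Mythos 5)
Your proposal is correct and follows essentially the same route as the paper: the paper also identifies the set where $d_{\min(h_xX+h_yY)}=0$ with the set where two sum-points coincide (its ``bad set'' $A$ in Appendix~\ref{app:Proof of Minimum Distance measure bound}), writes it as a countable union of lines $h_x(x_i-x_j)+h_y(y_i-y_j)=0$, and concludes by countable subadditivity that the union has Lebesgue measure zero. Your parametrization via the difference sets $(X-X)\times(Y-Y)$ is just a notational repackaging of the same argument.
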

\begin{proof}
The proof follows by observing that the set of channel gains for which $d_{\min \left( h_xX+h_yY\right)}=0$ and $ \left| h_xX+h_yY\right| \neq |X||Y|$  are equivalent and given by eq.\eqref{eq: Bad set} in Appendix~\ref{app:Proof of Minimum Distance measure bound}. The rest of the proof is similar to that of Proposition~\ref{prop:card:dmin:measurebound}.
\end{proof}

\begin{rem}
Different minimum distance bounds for sum-sets based on Diophantine approximations were used in~\cite{MatahariRealInteferenceAlignment}.
For example, consider the sum-set $h_1X+h_2X$, i.e., both transmitters use the same PAM constellation $X$, where $h_1^2=h_S^2\snr$ and  $h_1^2=h_I^2\snr^\alpha$ for some fixed $(h_S,h_I)\in\mathbb{R}^2$ and $\alpha>0$.
The authors of~\cite{MatahariRealInteferenceAlignment} focused on the degrees of freedom  (DoF) for the case when $\alpha=1$;
in this case the minimum distance can be lower bounded as follows 
\begin{subequations}
\begin{align}
d_{\min(h_1X+h_2X)}&=\min_{ x_{1i} , x_{2i}\in X } |h_1x_{1i}-h_2x_{2i}|
\notag\\
&= \min_{ z_{1i}, z_{2i} \in [-\frac{N}{2}:\frac{N}{2}], } | h_S\sqrt{\snr} d_{\min(X)} \ z_{1i}-h_I\sqrt{\snr}d_{\min(X)} \ z_{2i}| 
\notag\\
&= \sqrt{\snr} \ d_{\min(X)} \min_{ z_{1i}, z_{2i} \in [-\frac{N}{2}:\frac{N}{2}], }  | {h_S} z_{1i}-{h_I}z_{2i}|
\label{eq:a bound based in Diophantine approximations: factorization} 
\\
&\geq \kappa_\epsilon \frac{2^{\epsilon}}{N^{\epsilon}} \ \sqrt{\snr} \ d_{\min(X)},
\label{eq:a bound based in Diophantine approximations} 
\end{align}
where the inequality in~\eqref{eq:a bound based in Diophantine approximations} comes from Diophantine approximation results, specifically from the Khintchine-Groshev theorem, and says that for  almost all real numbers $({h_S},{h_I})$ and for any $\epsilon>0$ there exists a constant $\kappa_\epsilon>0$, whose analytical expression is not known, such that the bound in \eqref{eq:a bound based in Diophantine approximations} holds.
\end{subequations}

Unfortunately, bounds such as~\eqref{eq:a bound based in Diophantine approximations} are only well suited for the derivation of DoF (i.e $\alpha=1$ but not for gDoF (i.e. $\alpha \neq 1$), which is of interest here. The fundamental problem is that  for $\alpha \neq 1$, the factorization in \eqref{eq:a bound based in Diophantine approximations: factorization} is no longer possible and $\kappa_\epsilon$ may end up being a function of $\snr$ and $\alpha$.
Moreover, the fact that we have auxiliary constants $\epsilon$ and $\kappa_\epsilon$ 
in~\eqref{eq:a bound based in Diophantine approximations}, and where $\kappa_\epsilon$ is essentially not known in closed form,
makes derivation of closed form gap results very difficult. 
\end{rem}

\subsection{Examples} 
\label{sec:ptpExample}

In this Section we 
give an example of how we intend to use discrete inputs in the {TINnoTS} region in~\eqref{eq:RL:TINnoTS} for the G-IC by considering the familiar point-to-point power-constrained additive white Gaussian noise channel. The goal is to derive some properties / results for a simple setting that we shall use often in the subsequent sections.
Specifically, we aim to show that
the unit-energy discrete input $X_D$ with a properly chosen number of points $N=|\supp(X_D)|$ as a function of $\snr$ achieves, roughly speaking ($\approx$)
\begin{align}
I(X_D;\sqrt{\snr} X_D+Z_G)     & \approx \log(N), \ \ Z_G\sim \mathcal{N}(0,1), \label{eq:discrete:input}
\\
I(X_G; \sqrt{\snr} X_G+X_D+Z_G) & \approx \mug(\snr), \ \ X_G\sim \mathcal{N}(0,1), \label{eq:discrete:interf}
\end{align}
that is, the discrete input $X_D$ is a ``good'' input and a ``good'' interference. To put it more clearly, when we use a discrete constellation  as input, as in~\eqref{eq:discrete:input}, the mutual information is roughly equal to the entropy of the constellation, which is highly desirable. On the other hand, if the interference, unknown to transmitter and receiver, is from a discrete constellation  as in~\eqref{eq:discrete:interf}, the mutual information is roughly as if there was no interference, which is again  highly desirable.  In contrast, a Gaussian input instead of $X_D$ would be the ``best'' input for~\eqref{eq:discrete:input} but the ``worst'' interference/noise in~\eqref{eq:discrete:interf}. We next formalize the approximate statements in~\eqref{eq:discrete:input} and~\eqref{eq:discrete:interf}.

\paragraph*{Gaussian Channel}
Consider the point-to-point power-constrained Gaussian noise channel
\begin{subequations}
\begin{align}
  &Y=\sqrt{\snr}\ X+Z_G,
\\&\mathbb{E}[X^2]\leq 1, \ Z_G\sim \mathcal{N}(0,1),
\end{align}
\label{ptp:InputOutput}
\end{subequations}
where $X$ is the information carrying signal, independent of the noise $Z_G$.
The capacity of this channel, as a function of the SNR $\snr$, is $C\left(\snr\right)=\mug\left(\snr\right)$ and is achieved by $X  \sim \mathcal{N}(0,1)$ for every $\snr$. 
Consider now the input $X=X_D \sim \pam\left( N, \sqrt{\frac{12}{N^2-1}} \right)$ on the channel in~\eqref{ptp:InputOutput}.
By Proposition~\ref{prop:lowbound OW generalized} and Remark~\ref{rem:lowbound OW generalized AWGN} 
\begin{align}
  \left[\log(N) - \frac{1}{2}\log\left(\frac{2\pi\eu}{12}\right) 
  - \frac{1}{2}\log\left( 1+\frac{N^2-1}{\snr} \right)\right]^+ 
  \leq I(X_D;  \sqrt{\snr} X_D+Z_G) \leq \mug(\snr).
\label{eq:1stexamplebounds}
\end{align}
By observing the bounds in~\eqref{eq:1stexamplebounds}, we see that for a PAM input to be optimal to within a constant gap we need that $\log(N) \approx \mug(\snr)$ and that $\frac{N^2-1}{\snr}$ is upper bounded by a constant. By choosing $N = \left \lfloor \sqrt{1+\snr} \right\rfloor =: \points(\snr)$ it is easy to see that a PAM input can achieve the capacity $\mug(\snr)$ to within 
$\frac{1}{2}\log\left(\frac{2\pi\eu}{3}\right)\approx 1.25$~bits,
where the maximum gap is for $\snr = 3 -\epsilon$
for some $0<\epsilon \ll 1$. 

Note that, had we kept the term $\frac{\mathbb{E}[X_D^2]}{\mathbb{E}[X_D^2]+1}$ in~\eqref{eq:h(tildeX|Y)with E/(E+1)},
the bound in~\eqref{eq:1stexamplebounds} would have had $\frac{N^2-1}{\snr+1}$ in place of $\frac{N^2-1}{\snr}$
and would have resulted in a gap of at most $\frac{1}{2}\log\left(\frac{\pi\eu}{2}\right)\approx 1.047$~bits.
As always, bounds which allow for expressions that are easier to manipulate analytically come at the expense of a larger gap.

\paragraph*{Gaussian Channel with States}
The above example showed that a discrete input with $\log(N) \approx \mug(\snr)$ 
is a ``good'' input in the sense alluded to by~\eqref{eq:discrete:input}. We now show that a discrete interference is a ``good'' interference in the sense alluded to by~\eqref{eq:discrete:interf}. We study an extension of the channel in~\eqref{ptp:InputOutput} by considering an {\em additive state $T$ available neither at the encoder nor at the decoder}. The input-output relationship is
\begin{subequations}
\begin{align}
& Y=\sqrt{\snr} \ X+h \ T+Z_G :
\\ &\mathbb{E}[X^2]\leq 1, \ Z_G\sim \mathcal{N}( 0,1),
\\ &T \ \text{discrete with finite power}. 
\end{align}
\label{ptp:InputOutput state}
\end{subequations}
It is well known~\cite[Section 7.4]{elgamalkimbook} that the capacity of the channel with random state in~\eqref{ptp:InputOutput state} is 
\begin{align}
C=\max_{P_X} I(X;Y) \leq \max_{P_X} I(X;Y|T)= \mug(\snr).
\end{align}
From~\cite{ZamirGaussianIsNotToBad} we know that $X=X_G \sim  \mathcal{N}(0,1)$ is at most 1/2~bit from the capacity  $C$, but the value of the capacity is unknown. In particular it is not know whether the gap to the interference free capacity $\mug(\snr)-C$ is a bounded function of $\snr$.

Assume we use the input $X=X_G \sim  \mathcal{N}(0,1)$, as a Gaussian input is not too bad for an additive noise channel~\cite{ZamirGaussianIsNotToBad}; 
assume also that $d_{\min(T)}>0$; then the achievable rate $R$ satisfies
\begin{subequations}
\begin{align}
R &\geq \mug(\snr) - {\gap}_{\eqref{ach:ptp:state}},
\label{ach:ptp:state last}
\\  
{\gap}_{\eqref{ach:ptp:state}} &:= 
 \frac{1}{2}\log\left(\frac{2\pi\eu}{12}\right)
+\frac{1}{2}\log\left(1+\frac{12}{d_{\min(T)}^2}
\right),
\label{ach:ptp:state last gap}
\end{align}
\label{ach:ptp:state}
\end{subequations}
since
\begin{align*}
  & I(X_G; \sqrt{\snr} \ X_G+h \ T+Z_G )
\\&= \underbrace{h(\sqrt{\snr} \ X_G+h \ T+Z_G ) - h(\sqrt{\snr} \ X_G+Z_G)}_{\geq \mud\left(\frac{h}{\sqrt{1+\snr}}T\right) \geq H(T)-{\gap}_{\eqref{ach:ptp:state}}}
\\&- \underbrace{\left(h(h \ T+Z_G )-h(Z_G)\right)}_{\leq H(T)}
\\&+ \underbrace{\left(h(\sqrt{\snr} \ X_G+Z_G)+h(Z_G)\right)}_{= \mug(\snr)}.
\end{align*}
Thus, as long as 
$d_{\min(T)}$ is lower-bounded by a constant, it is possible to achieve the interference-free capacity to within the constant gap in~\eqref{ach:ptp:state last gap} even when the state is unknown to both the transmitter and the receiver.

The rate expression in~\eqref{ach:ptp:state} can be readily used to lower bound the achievable rate in a G-IC where one user has a Gaussian input and the other a discrete input and where the discrete input is treated as noise,  as we shall do in the next sections.

\section{{TINnoTS} with Mixed Inputs Achievable Rate Region and an Outer Bound for the G-IC} 
\label{sec:ach}

For the G-IC in~\eqref{eq:block awgn ic} we now evaluate the {TINnoTS} region in~\eqref{eq:RL:TINnoTS} with inputs
\begin{subequations}
\begin{align}
   X_i &= \sqrt{1-\delta_i}\ X_{iD}
        + \sqrt{\delta_i}  \ X_{iG}, \ i \in[1:2]: 
\\ &\quad X_{iD}\sim \pam\left(N_i,\sqrt{\frac{12}{N_i^2-1}}\right), 
\\ &\quad X_{iG}\sim \mathcal{N}(0,1), 
\\ \mathbf{p}&:=[N_1,N_2,\delta_1,\delta_2]\in \mathbb{N}\times\mathbb{N}\times[0,1]\times[0,1],
\label{eq:GICmixedinput defB}
\end{align}
where the random variables $X_{ij}$ are independent for $i\in[1:2]$ and $j\in\{D,G\}$.
The input in~\eqref{eq:GICmixedinput} has four parameters, collected in the vector $\mathbf{p}$, namely: 
the number of points $N_i\in\mathbb{N}$ and the power split $\delta_i\in[0,1]$, for  $i\in[1:2]$, which must be chosen carefully in order to match a given outer bound.
\label{eq:GICmixedinput}
\end{subequations}

\begin{prop}
\label{prop:ach-with-mixedinput}
For the G-IC 
the {TINnoTS} region in~\eqref{eq:RL:TINnoTS} contains the region $\mathcal{R}_{\text{in}}$ 
defined as
\begin{align}
\mathcal{R}_{\text{in}} := \bigcup 
\left \{ 
\begin{array}{l}
0\leq R_1 \leq \mud \left(S_1\right)
+\mug \left(\frac{|h_{11}|^2\delta_1}{1+|h_{12}|^2\delta_2} \right) 
-\min \left(\log(N_2),\mug\left(\frac{|h_{12}|^2 (1-\delta_2)}{1+ |h_{12}|^2\delta_2} \right) \right)\\
0\leq R_2 \leq \mud \left(S_2\right)
+\mug \left(\frac{|h_{22}|^2\delta_2}{1+|h_{21}|^2\delta_1} \right)
-\min \left(\log(N_1),\mug \left(\frac{|h_{21}|^2(1-\delta_1)}{1+ |h_{21}|^2\delta_1} \right) \right)
\end{array} 
\right\},  
\label{eq: rates mixed inputs WITH Union}
\end{align}
where the union is over all possible parameters $[N_1,N_2,\delta_1,\delta_2]\in \mathbb{N}^2\times[0,1]^2$ for the mixed inputs in~\eqref{eq:GICmixedinput}
and where the equivalent discrete constellations seen at the receivers are
\begin{subequations}
\begin{align}
&{S}_1:= \frac{1}{\sqrt{1+|h_{11}|^2\delta_1+|h_{12}|^2\delta_2}}(\sqrt{1-\delta_1}h_{11}X_{1D}+\sqrt{1-\delta_2}h_{12}X_{2D}), \label{eq:sumSetAtRx1}
\\
&{S}_2:= \frac{1}{\sqrt{1+|h_{21}|^2\delta_1+|h_{22}|^2\delta_2}}(\sqrt{1-\delta_1}h_{21}X_{1D}+\sqrt{1-\delta_2}h_{22}X_{2D}). \label{eq:sumSetAtRx2}
\end{align}
\label{eq:GICmixedinputinnerBound}
\end{subequations}
\end{prop}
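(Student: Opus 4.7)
My plan is to evaluate $I(X_1;Y_1)$ (and symmetrically $I(X_2;Y_2)$) for the mixed input in~\eqref{eq:GICmixedinput} and then take the union over the parameter vector $\mathbf{p}$. The first step is to note that $X_1$ is a deterministic function of $(X_{1D},X_{1G})$, and that the channel satisfies the Markov chain $(X_{1D},X_{1G})\to X_1\to Y_1$; data processing gives
\[
I(X_1;Y_1) \;=\; I(X_{1D},X_{1G};Y_1).
\]
This is what lets us reason about the discrete and the Gaussian layers separately, even though {TINnoTS} employs no explicit layer splitting. Next I would decompose $Y_1=A_1+B_1$, where
$A_1:=\sqrt{1-\delta_1}\,h_{11}X_{1D}+\sqrt{1-\delta_2}\,h_{12}X_{2D}$ aggregates the two discrete components at receiver~1, and $B_1:=\sqrt{\delta_1}\,h_{11}X_{1G}+\sqrt{\delta_2}\,h_{12}X_{2G}+Z_1\sim\mathcal{N}(0,\sigma_1^2)$ with $\sigma_1^2:=1+|h_{11}|^2\delta_1+|h_{12}|^2\delta_2$. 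Observe that $S_1=A_1/\sigma_1$, so the normalization of $S_1$ in~\eqref{eq:sumSetAtRx1} is precisely designed to pair $A_1$ with the unit-variance Gaussian $B_1/\sigma_1$.

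Then I would write $I(X_{1D},X_{1G};Y_1)=h(A_1+B_1)-h(A_1+B_1\mid X_{1D},X_{1G})$ and bound each piece. For the first term, scaling gives $h(A_1+B_1)=h(S_1+B_1/\sigma_1)+\log\sigma_1$; since $B_1/\sigma_1\sim\mathcal{N}(0,1)$ is independent of $S_1$, the generalized Ozarow-Wyner-B bound of Proposition~\ref{prop:lowbound OW generalized} applied to $S_1$ yields $I(S_1;S_1+B_1/\sigma_1)\geq\mud(S_1)$, hence
\[
h(A_1+B_1)\;\geq\;\mud(S_1)+\tfrac{1}{2}\log\!\bigl(2\pi\eu\,\sigma_1^2\bigr).
\]
For the second term, after conditioning out the user-1 contributions the remaining quantity is $T_1+W_1$ with discrete $T_1:=\sqrt{1-\delta_2}\,h_{12}X_{2D}$ and Gaussian $W_1:=\sqrt{\delta_2}\,h_{12}X_{2G}+Z_1\sim\mathcal{N}(0,1+|h_{12}|^2\delta_2)$. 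The standard identity $h(T_1+W_1)=h(W_1)+I(T_1;T_1+W_1)$ combined with the two complementary upper bounds $I(T_1;T_1+W_1)\leq H(X_{2D})=\log N_2$ (data processing with uniform PAM) and $I(T_1;T_1+W_1)\leq\mug(|h_{12}|^2(1-\delta_2)/(1+|h_{12}|^2\delta_2))$ (Gaussian maximizes differential entropy at fixed variance) gives the matching upper bound with the $\min$.

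Subtracting and simplifying the ratio $\sigma_1^2/(1+|h_{12}|^2\delta_2)$ into $1+|h_{11}|^2\delta_1/(1+|h_{12}|^2\delta_2)$ produces exactly the $R_1$ bound in~\eqref{eq: rates mixed inputs WITH Union}; the argument for $R_2$ is symmetric by swapping the roles of the two users. Taking the union over $\mathbf{p}\in\mathbb{N}^2\times[0,1]^2$ proves the claim. The only step with any real content is the lower bound on $h(A_1+B_1)$: its cleanliness depends critically on the fact that Proposition~\ref{prop:lowbound OW generalized} is stated for \emph{any} zero-mean unit-variance noise (so it applies to the normalized $B_1/\sigma_1$) and that $S_1$ was defined with exactly the right normalization; everything else is bookkeeping.
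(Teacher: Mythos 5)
Your proposal is correct and is essentially the paper's own argument: the same split of $I(X_i;Y_i)$ into a normalized received sum-set term lower-bounded by $\mud(S_i)$ via Proposition~\ref{prop:lowbound OW generalized}, an interference-only term upper-bounded by $\min\bigl(\log N_{i'},\mug(\cdot)\bigr)$ via Remark~\ref{rem:lowbound OW generalized AWGN}, and the variance-ratio bookkeeping that yields the $\mug\bigl(|h_{ii}|^2\delta_i/(1+|h_{ii'}|^2\delta_{i'})\bigr)$ term. The only cosmetic difference is that you route through $I(X_{1D},X_{1G};Y_1)$, which changes nothing since $h(Y_1\mid X_1)=h(Y_1\mid X_{1D},X_{1G})$.
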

\begin{IEEEproof}
Due to the symmetry of the problem we derive a lower bound on $I(X_2;Y_2)$ only by following steps similar to those in~\eqref{ach:ptp:state}; a lower bound on $I(X_1;Y_1)$ follows by swapping the role of the users. 
Let $Z_G\sim \mathcal{N}(0,1)$. An achievable $R_2$ must satisfy $R_2 \leq I(X_2;Y_2)$ where
\begin{align*}
I(X_2;Y_2)&=I(X_2; h_{21}X_1+h_{22}X_2+Z_G)
\\&=\underbrace{\left[ h\left(\frac{\sqrt{1-\delta_1}h_{21}X_{1D}+\sqrt{1-\delta_2}h_{22}X_{2D}}{\sqrt{1+|h_{21}|^2\delta_1+|h_{22}|^2\delta_2}}+Z_G\right)-h(Z_G)\right]}_{\geq \mud\left( {S}_2 \right) \ \text{by Proposition~\ref{prop:lowbound OW generalized}}}
\\&-\underbrace{\left[h\left(\frac{\sqrt{1-\delta_1}}{\sqrt{1+|h_{21}|^2\delta_1}}h_{21}X_{1D}+Z_G\right) -h(Z_G) \right]}_{
\leq \min\left(\log(N_1),\frac{1}{2}\log\left(1+\frac{|h_{21}|^2(1-\delta_1)}{1+|h_{21}|^2\delta_1}\right)\right)  \ \text{by Remark~\ref{rem:lowbound OW generalized AWGN}} }
\\&+\frac{1}{2}\log\left(1+|h_{21}|^2\delta_1+|h_{22}|^2\delta_2\right)-\frac{1}{2}\log(1+|h_{21}|^2\delta_1).
\end{align*}
By considering  
the union over all possible choices of parameters for the mixed inputs
we obtain the achievable region in~\eqref{eq: rates mixed inputs WITH Union}.
\end{IEEEproof}

In the following sections we shall show that our {TINnoTS} region with mixed inputs
in Proposition~\ref{prop:ach-with-mixedinput} 
is to within an additive gap of the outer bound region given by:
\begin{prop}
\label{prop:ETWouter}
The capacity region of the G-IC is contained in
\begin{subequations}
\begin{align}
\mathcal{R}_{\text{out}} =\Big\{ \quad 
R_1 &\leq \mug \left(|h_{11}|^2\right),
  \ \text{cut-set bound},
\label{eq:R upper classical IC cuset r1}
\\
R_2 &\leq \mug \left(|h_{22}|^2\right),
  \ \text{cut-set bound},
\label{eq:R upper classical IC cuset r2}
\\
R_1+R_2 & \leq  \Big[\mug\left(|h_{11}|^2\right)-\mug\left(|h_{21}|^2\right)\Big]^+ 
             +\mug(|h_{21}|^2+|h_{22}|^2),  
  \ \text{from~\cite{kramer_outer}},
\label{eq:R upper classical IC kra1}
\\
R_1+R_2 & \leq  \Big[\mug\left(|h_{22}|^2\right)-\mug\left(|h_{12}|^2\right)\Big]^+ 
             +\mug(|h_{11}|^2+|h_{12}|^2),  
  \ \text{from~\cite{kramer_outer}},
\label{eq:R upper classical IC kra2}
\\
R_1+R_2 & \leq \mug \left(|h_{12}|^2 +\frac{|h_{11}|^2 }{1+|h_{21}|^2} \right)
            + \mug \left(|h_{21}|^2 +\frac{|h_{22}|^2 }{1+|h_{12}|^2} \right),
  \ \text{from~\cite{etkin_tse_wang}},
\label{eq:R upper classical IC etw}
\\
2R_1+R_2& \leq \mug(|h_{11}|^2+|h_{12}|^2)+\mug\left(|h_{21}|^2+\frac{|h_{22}|^2}{1+|h_{12}|^2}\right)
  \notag\\&   +\Big[\mug\left(|h_{11}|^2\right)-\mug\left(|h_{21}|^2\right)\Big]^+,
  \ \text{from~\cite{etkin_tse_wang}},
\label{eq:out:2r1r2}
\\
R_1+2R_2 & \leq \mug(|h_{21}|^2+|h_{22}|^2)+\mug\left(|h_{12}|^2+\frac{|h_{11}|^2}{1+|h_{21}|^2}\right)
  \notag\\&   +\Big[\mug\left(|h_{22}|^2\right)-\mug\left(|h_{12}|^2\right)\Big]^+,
  \ \text{from~\cite{etkin_tse_wang}} \quad \Big\}.
\label{eq:out:r12r2}
\end{align}
\label{eq:R upper classical IC}
For the classical G-IC where all nodes are synchronous and possess full codebook knowledge, 
this outer bound is tight in strong interference $\{|h_{21}|^2 \geq |h_{11}|^2, \ |h_{12}|^2 \geq |h_{22}|^2 \}$~\cite{sato_strong}
and achievable to within $1/2$~bit otherwise~\cite{etkin_tse_wang}. 
\end{subequations}
\end{prop}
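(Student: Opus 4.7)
The bounds in Proposition~\ref{prop:ETWouter} are classical and the referenced works~\cite{sato_strong,kramer_outer,etkin_tse_wang} contain self-contained proofs. My plan is to organize the argument into three groups according to the technique employed, and to verify that each bound in~\eqref{eq:R upper classical IC} is an instance of a Gaussian maximum-entropy evaluation of a suitably genie-aided information inequality.

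First I would dispatch the cut-set bounds~\eqref{eq:R upper classical IC cuset r1}--\eqref{eq:R upper classical IC cuset r2} by providing $X_2^n$ (resp.\ $X_1^n$) as genie side information to receiver~1 (resp.\ receiver~2): after subtracting the now-known interfering codeword from the output, the channel reduces to a point-to-point power-constrained Gaussian channel whose capacity is $\mug(|h_{11}|^2)$ (resp.\ $\mug(|h_{22}|^2)$). Fano's inequality and standard single-letterization then deliver the bounds.

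Next I would prove the Kramer-style sum-rate bounds~\eqref{eq:R upper classical IC kra1}--\eqref{eq:R upper classical IC kra2} by providing $Y_2^n$ to receiver~1 as side information, so that the enhanced receiver must decode both messages. Writing
\begin{align*}
n(R_1+R_2)-n\epsilon_n
 &\leq I(W_1;Y_1^n,Y_2^n)+I(W_2;Y_2^n)\\
 &= I(X_1^n;Y_1^n,Y_2^n)+I(X_2^n;Y_2^n),
\end{align*}
expanding by the chain rule, and invoking the Gaussian maximum-entropy principle under the per-letter covariance constraint induced by $\mathbb{E}[X_{ji}^2]\leq 1$ yields~\eqref{eq:R upper classical IC kra1}; swapping the user labels gives~\eqref{eq:R upper classical IC kra2}. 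Note that the nonnegative part $[\cdot]^+$ arises from also comparing with the trivial bound $R_1+R_2\leq R_1+R_2$ truncated at the cut-set value, i.e., taking the tighter of the MAC bound and the sum of the per-user cut-set bounds.

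Third, for the Etkin--Tse--Wang bounds~\eqref{eq:R upper classical IC etw}--\eqref{eq:out:r12r2} I would deploy the \emph{useful genie} construction: give receiver~$i$ the side information $S_i^n := h_{ij}X_j^n+\widetilde Z_i^n$, where $\widetilde Z_i^n$ has i.i.d.\ $\mathcal{N}(0,1)$ components independent of everything else and the cross-index is $j\neq i$. Then
\begin{equation*}
nR_i-n\epsilon_n\leq I(X_i^n;Y_i^n,S_i^n)=h(S_i^n)-h(\widetilde Z_i^n)+h(Y_i^n\mid S_i^n)-h(Y_i^n\mid X_i^n,S_i^n),
\end{equation*}
and the last term equals $n\,\mug(|h_{ji}|^2+|h_{ii}|^2/(1+|h_{ij}|^2))$ after using that $Y_i^n$ given $X_i^n,S_i^n$ is essentially the MMSE residual of $h_{ij}X_j^n$ from $S_i^n$. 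Summing over $i\in[1:2]$ and maximizing by Gaussian inputs (Costa's conditional entropy power argument) produces~\eqref{eq:R upper classical IC etw}. The weighted bounds~\eqref{eq:out:2r1r2}--\eqref{eq:out:r12r2} follow by adding one copy of a genie-aided $R_1$ (respectively $R_2$) bound to one copy of the MAC-style $(R_1+R_2)$ bound derived above, with the $[\mug(|h_{11}|^2)-\mug(|h_{21}|^2)]^+$ term emerging from the difference-of-logs obtained when the MAC inequality is tightened against the cut-set inequality.

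The main obstacle is the standard one for these outer bounds: verifying that a single Gaussian input distribution simultaneously maximizes all the differential entropy expressions appearing on the right-hand side after the genie is introduced. This amounts to checking that the correlation/variance of $\widetilde Z_i^n$ is chosen so that the conditional covariance structure induced on $(Y_i^n,S_i^n)$ makes $h(Y_i^n\mid S_i^n)$ maximized by Gaussian $X_i^n$ under the trace constraint; once this is in place, the evaluation reduces to routine $\log\det$ computations that collapse to the one-dimensional $\mug$ expressions displayed in~\eqref{eq:R upper classical IC}. The final statements about tightness in strong interference and the $1/2$~bit gap in the remaining regimes are direct invocations of~\cite{sato_strong,etkin_tse_wang} and require no new argument.
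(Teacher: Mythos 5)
You do not actually need to reconstruct these proofs: the paper offers none, stating the region as a compilation of known bounds and pointing to~\cite{kramer_outer},~\cite{etkin_tse_wang} and~\cite{sato_strong}, so your opening sentence (invoke the references) already matches the paper's treatment. Your sketches for the cut-set bounds and for~\eqref{eq:R upper classical IC kra1}--\eqref{eq:R upper classical IC kra2} are essentially the standard arguments (the decomposition $I(X_1^n;Y_2^n)+I(X_1^n;Y_1^n\mid Y_2^n)+I(X_2^n;Y_2^n)$ with the MAC bound at receiver~2 is the right route; the $[\cdot]^+$ needs no special justification since $[x]^+\geq x$ only weakens an outer bound).

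The genuine gap is in your reconstruction of the ETW bounds. The ETW genie gives receiver~$i$ the interference it \emph{causes} at the other receiver, corrupted by that receiver's \emph{actual} noise, e.g.\ $S_1^n=h_{21}X_1^n+Z_2^n$, not a fresh-noise copy of the interference it suffers, $h_{12}X_2^n+\widetilde Z_1^n$. The proof of~\eqref{eq:R upper classical IC etw} hinges on the cross-receiver cancellation $h(S_1^n)=h(Y_2^n\mid X_2^n,S_2^n)$ (and its mirror image), which exists only with that choice; with your genie the two single-user bounds do not telescope into a sum-rate bound, and a Gaussian evaluation of $h(Y_1^n\mid S_1^n)-h(Y_1^n\mid X_1^n,S_1^n)$ gives a term of the form $\mug\!\left(\frac{|h_{11}|^2(1+|h_{12}|^2)}{1+2|h_{12}|^2}\right)$ rather than $\mug\!\left(|h_{12}|^2+\frac{|h_{11}|^2}{1+|h_{21}|^2}\right)$ (also note $h(Y_i^n\mid X_i^n,S_i^n)$ is an entropy and cannot ``equal $n\,\mug(\cdot)$''). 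Likewise,~\eqref{eq:out:2r1r2} cannot be obtained by adding a single-user bound to one of the sum-rate bounds: e.g.\ in weak interference the stated bound is strictly tighter than $\mug(|h_{11}|^2)$ plus either~\eqref{eq:R upper classical IC kra1} or~\eqref{eq:R upper classical IC etw} (compare $\mug\!\left(|h_{21}|^2+\frac{|h_{22}|^2}{1+|h_{12}|^2}\right)$ with $\mug(|h_{21}|^2+|h_{22}|^2)$), so it requires ETW's dedicated argument with two virtual copies of receiver~1 holding different side information. If you intend to reprove rather than cite, those two steps must follow~\cite{etkin_tse_wang}; otherwise, simply citing the three references is exactly what the paper does.
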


The key step to match, to within an additive gap, 
the outer bound region $\mathcal{R}_{\text{out}}$ in Proposition~\ref{prop:ETWouter} to our {TINnoTS} achievable region with mixed inputs $\mathcal{R}_{\text{in}}$ in Proposition~\ref{prop:ach-with-mixedinput}
is to carefully choose the mixed input parameter vector $[N_1,N_2,\delta_1,\delta_2]$.  This `carefully picking of the mixed input parameters' is the objective of Section~\ref{sec:cap within gap sym}.

\section{Symmetric Capacity Region to within a Gap}
\label{sec:cap within gap sym}
The main result of this paper is:
\begin{thm}
\label{thm:Cap:gap sym}
For the symmetric G-IC, as defined in~\eqref{eq:awgn sym}, the {TINnoTS} achievable region 
in~\eqref{eq: rates mixed inputs WITH Union}, with the parameters for the mixed inputs chosen as indicated in Table~\ref{table:input},
and the outer bound 
in~\eqref{eq:R upper classical IC} are to within a gap of: 
\begin{itemize}

\item Very Weak Interference:
$\snr \geq \inr (1+\inr)$: 
\[
{\gap} \leq \frac{1}{2}~\text{bits},
\]

\item Moderately Weak Interference Type2:
$ \snr < \inr(1+\inr), \ \frac{1+\snr}{1+\inr+\frac{\snr}{1+\inr}} > \frac{1+\inr+\frac{\snr}{1+\inr}}{1+\frac{\snr}{1+\inr}}$: 
\begin{align*}
{\gap} 
\frac{1}{2}\log \left(\frac{608 \ \pi \eu}{27}\right) \approx 3.79
\end{align*}

\item Moderately Weak Interference Type1:
$\inr \leq \snr, \ \frac{1+\snr}{1+\inr+\frac{\snr}{1+\inr}} \leq \frac{1+\inr+\frac{\snr}{1+\inr}}{1+\frac{\snr}{1+\inr}}$: 
\[
 {\gap}
 \leq   \frac{1}{2}\log \left(\frac{ 16\pi \eu}{3 }\right)
 +\frac{1}{2}\log\left(1+45 \cdot \frac{(1+1/2\ln(1+\min(\inr,\snr)))^2}{\gamma^2}\right) ~\text{bits}, \ 
\]
except for a set of measure $\gamma$ for any $\gamma \in (0,1]$,

\item Strong Interference:
$\snr < \inr < \snr(1+\snr)$: 
\[
 {\gap}
 \leq \frac{1}{2}\log\left(\frac{2\pi\eu}{3}\right)
  +\frac{1}{2}\log\left(1+8\cdot\frac{\left(1+1/2\ln(1+\min(\inr,\snr))\right)^2}{\gamma^2}\right)~\text{bits},
\]
except for a set of measure $\gamma$ for any $\gamma \in (0,1]$,

\item  Very Strong Interference:
$\inr \geq \snr(1+\snr)$:
\[
 {\gap}
 \leq \frac{1}{2}\log\left(\frac{2\pi\eu}{3}\right)\approx 1.25~\text{bits}. 
\]

\end{itemize}
\end{thm}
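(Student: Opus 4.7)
The plan is to prove the theorem regime by regime, in each case specifying the mixed input parameters $[N_1,N_2,\delta_1,\delta_2]$ from Table~\ref{table:input} and showing that each term on the right-hand side of the achievable rate bounds in~\eqref{eq: rates mixed inputs WITH Union} from Proposition~\ref{prop:ach-with-mixedinput} matches the corresponding active inequality in the outer bound of Proposition~\ref{prop:ETWouter} to within the stated gap. The three building blocks are: (i) the Ozarow-Wyner-B bound of Proposition~\ref{prop:lowbound OW generalized} to lower bound $\mud(S_i)$ in terms of $H(S_i)$ and $d_{\min(S_i)}$; (ii) Proposition~\ref{prop:combNOOUTAGE} to explicitly compute $|S_i|$ and $d_{\min(S_i)}$ whenever the non-overlap condition holds; and (iii) Proposition~\ref{prop:card:dmin:measurebound} in regimes where non-overlap fails, accepting an outage set of Lebesgue measure at most $\gamma$.

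I would begin with the two ``easy'' endpoints. In the Very Weak regime I would set $N_1=N_2=1$ and $\delta_1=\delta_2=1$, so the mixed inputs collapse to i.i.d.\ Gaussians; the TINnoTS sum-rate $2\,\mug(\snr/(1+\inr))$ is then compared to the Kramer bound~\eqref{eq:R upper classical IC kra1}--\eqref{eq:R upper classical IC kra2}, yielding the classical $1/2$-bit gap. In the Very Strong regime I would take $\delta_1=\delta_2=0$ and $N_1=N_2=\points(\snr)$; because $\inr\ge \snr(1+\snr)$, the non-overlap condition~\eqref{eq:thm:comb conditions} is satisfied at both receivers, so $|S_i|=N_1 N_2$ and $d_{\min(S_i)}$ is of constant order after normalization. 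Plugging these into Proposition~\ref{prop:lowbound OW generalized} gives $\mud(S_i)\approx \log N_1 + \log N_2 - \mathrm{const}$, from which subtracting the $\min(\log N_j,\mug(\cdot))$ term recovers $\mug(\snr)$ up to the constant $\tfrac{1}{2}\log(2\pi e/3)$.

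For Moderately Weak Type~2 the parameters in Table~\ref{table:input} are chosen so that the Gaussian power $\delta_i$ sits near the noise floor at the non-intended receiver (in the Etkin-Tse-Wang spirit), while $N_i$ is small enough that the non-overlap condition~\eqref{eq:thm:comb conditions} still holds at both receivers; Proposition~\ref{prop:combNOOUTAGE} then gives the cardinality and minimum distance of $S_i$ exactly, and a direct computation matches the Etkin-Tse-Wang sum bound~\eqref{eq:R upper classical IC etw} up to the stated constant. The main obstacle is the Moderately Weak Type~1 and Strong Interference regimes, where the geometry of the sum-set forces $N_1 N_2$ beyond the non-overlap threshold. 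Here I would invoke Proposition~\ref{prop:card:dmin:measurebound}: outside an outage set of measure $\gamma$,
\begin{equation*}
d_{\min(S_i)} \ge \kappa_{\gamma,N_1,N_2}\,\min\bigl(|h|d_{\min(X_{1D})},|h'|d_{\min(X_{2D})},\Upsilon\bigr),
\end{equation*}
so that the $\tfrac{1}{2}\log(1+12/d_{\min(S_i)}^2)$ term in the Ozarow-Wyner-B gap becomes
\begin{equation*}
\tfrac{1}{2}\log\!\left(1+\tfrac{c}{\kappa_{\gamma,N_1,N_2}^{2}}\right) \;=\; \tfrac{1}{2}\log\!\left(1+c\,\tfrac{(1+\ln\max(N_1,N_2))^2}{(\gamma/2)^2}\right),
\end{equation*}
which, once $N_i$ is bounded by a power of $\min(\snr,\inr)$, produces exactly the $\loglog$ penalty appearing in the theorem statement. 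The careful bookkeeping is to verify that the $\Upsilon$ branch of the minimum-distance bound never becomes the binding one in the regimes of interest, and to track the numerical constants (e.g.\ the factor $45$ versus $8$) arising from the specific $N_i$ and $\delta_i$ choices.

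Finally, in every regime I would check the two one-sided rate bounds in~\eqref{eq: rates mixed inputs WITH Union} against the relevant single-rate, sum-rate, $2R_1+R_2$, and $R_1+2R_2$ outer bounds and take the maximum of the resulting differences as the gap, using elementary manipulations such as $\log(1+a+b)\le \log(1+a)+\log(1+b)$ to separate direct-link and cross-link contributions. The main difficulty throughout is not algebraic but combinatorial, namely ensuring that the chosen $(N_1,N_2,\delta_1,\delta_2)$ simultaneously makes $\mud(S_i)$ large at each intended receiver and $\min(\log N_j,\mug(\cdot))$ small as a subtractive term from the interferer, which is exactly the ``discrete interferer behaves as a common message'' phenomenon advertised in the introduction.
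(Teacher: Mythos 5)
Your toolbox and regime split coincide with the paper's (Ozarow--Wyner-B for the received sum-sets, Proposition~\ref{prop:combNOOUTAGE} when non-overlap holds, Proposition~\ref{prop:card:dmin:measurebound} with an outage set of measure $\gamma$ otherwise), but two genuine gaps remain. First, the theorem is a statement about \emph{regions}, and for a fixed parameter vector the TINnoTS inner bound is a rectangle, while the outer bound in the strong and weak regimes is a polytope with active $R_1+R_2$ (and, in weak interference, $2R_1+R_2$ and $R_1+2R_2$) faces. Checking ``the two one-sided rate bounds against the relevant outer-bound inequalities and taking the maximum difference'' does not establish a region gap: the paper's proof parametrizes each dominant face of \eqref{eq:R upper classical IC} by a time-sharing-like variable $t$ (convex combinations of corner points, as in \eqref{eq:outerregion for par:strong} and \eqref{eq:outer bound in weak ALL}) and then chooses $(N_1,N_2,\delta_1,\delta_2)$ as explicit functions of $t$, e.g.\ $N_i=\points(\snr_{0,\cdot,t})$ in \eqref{eq:achregion for par:strong choiceN}, so that the union of rectangles sweeps the face; this ``mimic time sharing by varying the number of points'' step is the central mechanism and is absent from your plan. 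The same issue already bites in the very weak regime: with $N_1=N_2=1$ and $\delta_1=\delta_2=1$ fixed you only match the sum-rate point, whereas near the corner $(R_1,R_2)\approx(\mug(\snr),c)$ full-power Gaussian TIN is off by roughly $\mug(\inr)$, which is unbounded; the paper needs power control, i.e.\ the family $[1,1,t,1]$ and its swap in Table~\ref{table:input}.

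Second, in the moderately weak regimes the constants you are asked to prove are determined by the $2R_1+R_2$/$R_1+2R_2$ faces, which your sketch never matches. For Weak1 the factor $45$ arises on the $2R_1+R_2$ face with the split $\delta_1=1/(1+\inr)$ and $\delta_2$ chosen so that $\mug(\snr\delta_2)$ carries half of the relevant part of $R_2$ (the paper's rate-split rule $R_{u,p}\approxeq\min\bigl(\mug(\snr/(1+\inr)),R_u/2\bigr)$ and the ``discrete$\to$common'' map guide this choice), not from the sum-rate face. For Weak2 your ``$\delta_i$ at the noise floor plus non-overlap, match the ETW sum bound'' covers only the $R_1+R_2$ face; the constant $\tfrac12\log(608\pi\eu/27)$ is obtained in the paper only after switching to a ``discrete$\to$private'' assignment on the $2R_1+R_2$ face, shrinking the PAM sizes by a fixed factor (replacing $\points(x)$ by $\points(3x/4)$) precisely so that the non-overlap condition of Proposition~\ref{prop:combNOOUTAGE} can be verified at both receivers, and re-deriving the minimum distances there; without treating those faces the claimed constants (and indeed the claimed region gap) are unproven.
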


Before we move to the proof of Theorem~\ref{thm:Cap:gap sym}, we would like to offer our thoughts on why a $\loglog$ gap is obtained in some regimes up to an outage set of controllable measure (the larger the measure of the channel gains for which the derived gap does not hold, the lower the gap).
We start by noticing that, for the symmetric G-IC, whenever the {TINnoTS} region with our mixed input is optimal to within a {\it constant gap} then the gap result holds for {\it all channel gains}. Otherwise, the optimality is to within a {\it $\loglog$ gap} and holds for all channel gains {\it up to an outage set}.

We found a $\loglog$ gap up to an outage set whenever the sum-rate upper bound $\min\big(\text{eq.\eqref{eq:R upper classical IC kra1}, eq.\eqref{eq:R upper classical IC kra2}\big)}$ is active, which in gDoF corresponds to the regime $\alpha\in(2/3,2)$ meaning that the interference is neither very weak nor very strong. 
It thus natural to ask: (a) whether the $\loglog$ gap and/or the `up to an outage set' condition are necessary (not a consequence of the achievable scheme used), and (b) whether a $\loglog$ gap and the `up to an outage set' condition are necessarily always together.
We do not have answers to these questions, but we provide our perspective next.

The sum-rate bounds in~\eqref{eq:R upper classical IC kra1} and~\eqref{eq:R upper classical IC kra2} were originally derived for the classical two-user IC in Gaussian noise in~\cite{kramer_outer} and then extended to any memoryless two-user IC with source cooperation / generalized feedback in~\cite{tuninetti-outer}, and then to any memoryless cooperative two-user IC (where each node can have an input and an output to the channel) in~\cite{rini:ITW20102} -- see also $K$-user extensions in~\cite{tuninetti2011k,DM:JSAC}.  In~\cite{rini:ITW20102} it was noted that surprisingly these bounds hold for a broad class of two-user IC-type channels, which includes for example cognitive ICs and certain ICs with cooperation. The difference is that the mutual information optimization is over all product input distributions for the classical IC, while it is over all joint input distributions for the cooperative or cognitive IC. The ability to correlate inputs in well known to only  increase the rates by a constant number of bits; thus, up to a constant gap, channel models from the basic classical IC to the intricate cognitive IC have the same sum-rate upper bound in some regimes. Note that for the real-valued cognitive G-IC for example, the sum-rate bound is achievable to within 1/2~bit for all channel gains by using Dirty Paper Coding. 
It is not clear at this point whether the $\loglog$ gap up to an outage set for the classical G-IC is thus a fundamental consequence of the fact that the upper bound can be achieved to within a constant gap with sophisticated coding techniques (such as Dirty Paper Coding for the cognitive G-IC) but not with simpler ones (essentially rate splitting and superposition coding as in the  Han-Kobayashi scheme) allowed for the classical G-IC. 

Another intriguing observation is that these bounds also determine the optimality of ``everybody gets half the cake''-DoF result for the $K$-user G-IC~\cite{Jafar:2008:alignment,MatahariRealInteferenceAlignment}. For the three-user G-IC with fixed channel gains it is well known that the DoF are discontinuous at rational channel gains~\cite{ordentlichetal}. This seems to suggest, at least for $\alpha=1$, that a gap result up to an outage set 
is actually {\it fundamental} and not a consequence of the achievable scheme used. Whether the converse result of~\cite{ordentlichetal}  for $\alpha=1$ can be extended to the whole regime $\alpha\in(2/3,2)$ is an open question. 
We also note that a constant (not $\loglog$) gap result up to an outage set for the whole regime $\alpha\in(2/3,2)$ was found in~\cite{KuserIC}; in this case the achievable region was based on a {\it multi-letter} scheme inspired by compute-and-forward. It is not clear at this point whether {\it single-letter} schemes, such as out {TINnoTS}, are fundamentally suboptimal compared to multi-letter ones.

\begin{table*}
 \centering
 \caption{Parameters for the mixed inputs in~\eqref{eq:GICmixedinput}, as used in the proof of Theorem~\ref{thm:Cap:gap sym}. 
 Notation: for $\mathbf{p}=[N_1,N_2,\delta_1,\delta_2]$ we let
 $\mathbf{p}^{\prime}=[N_2,N_1,\delta_2,\delta_1]$. 
 We also define $\alpha=\lim_{\snr \to \infty} \frac{\log(\inr)}{\log(\snr)}$.
 }
 \label{table:input}
 \begin{tabular} { |l| c |}
 \hline
 Regime 
 & 
 Input Parameter $\mathbf{p}$ in~\eqref{eq:GICmixedinput} 
 
 \\ \hline
 $\snr \geq \inr(1+\inr),$
 &
 $\mathbf{p}_t \cup \mathbf{p}^{\prime}_t$, for all ${t\in[0,1]}$;
 \\
 $ \alpha\in[0,1/2]$ 
 &
 $\mathbf{p}_t :=  \left[1, 1, t, 1\right]$;
 \\
 (very weak) constant gap
 &  

 \\ \hline
 $ \snr < \inr(1+\inr), \ \frac{1+\snr}{1+\inr+\frac{\snr}{1+\inr}} > \frac{1+\inr+\frac{\snr}{1+\inr}}{1+\frac{\snr}{1+\inr}},$ 
 &
 $\mathbf{p}_{1,t}  \cup \mathbf{p}_{2,t} \cup    \mathbf{p}_{2,t}^\prime$, for all $t \in [0,1]$;
 \\
 $\alpha\in(1/2,2/3)$ 
 &
 $\mathbf{p}_{1,t}$ : values can be found in~\eqref{eq:parameters R1+R2 face weak type 2};
 \\
 (moderately weak 2) constant gap
 &  
 $\mathbf{p}_{2,t}$ : values can be found in~\eqref{eq:parameters 2R1+R2 face weak type 2: second choice};

 \\ \hline
 $\inr \leq \snr, \ \frac{1+\snr}{1+\inr+\frac{\snr}{1+\inr}} \leq \frac{1+\inr+\frac{\snr}{1+\inr}}{1+\frac{\snr}{1+\inr}}, $
 &
 $\mathbf{p}_{1,t} \cup \mathbf{p}_{2,t} \cup    \mathbf{p}_{2,t}^\prime$, for all $t \in [0,1]$;
 \\
 $ \alpha\in[2/3,1]$ 
 & 
 $\mathbf{p}_{1,t}$ values can be found in~\eqref{eq:achregion for par:moderate choiceALL: R1+R2sumrate};
 \\
 (moderately weak 1) log-log gap
 &
 $\mathbf{p}_{2,t}$ values can be found in~\eqref{eq:achregion for par:moderate choiceALL: 2R1+R2};

 \\ \hline
 $\snr < \inr < \snr(1+\snr),$ 
 &
 $\mathbf{p}_{t}$, for all ${t\in[0,1]}$;
 \\
 $ \alpha\in(1,2)$
 & 
 $\mathbf{p}_{t}$ : values can be found in~\eqref{eq:achregion for par:strong choiceN};
 \\
 (strong) log-log gap
 &

 \\ \hline
 $\inr \geq \snr(1+\snr), $ 
 &
 $\mathbf{p} = [\points(\snr),\points(\snr),0,0]$;
 \\
 $\alpha\in[2,\infty)$ 
 &
 \\
 (very strong) constant gap
 &
 
 \\ \hline
\end{tabular}
\end{table*}


\begin{IEEEproof}
The parameters of the mixed inputs in~\eqref{eq:GICmixedinput} are chosen as indicated in Table~\ref{table:input} depending on the regime of operation. We now analyze each regime separately.

\subsection{Very strong interference, i.e., $\inr \geq \snr(1+\snr)$} 
\label{par:verystrong}

\paragraph*{Outer Bound}
In the very strong interference regime the capacity of the classical G-IC is given by 
\begin{align}
\mathcal{R}^{(\text{\ref{par:verystrong}})}_{\text{out}} 
&= \left \{ 
\begin{array}{l}
0\leq R_1 \leq \mug \left(\snr\right)
\\
0\leq R_2 \leq \mug \left(\snr\right)
\end{array} \right\}.
\label{eq:achregion for par:verystrong:RU}
\end{align}

\paragraph*{Inner Bound}
The capacity of the classical G-IC in this regime is achieved by sending only common messages from Gaussian codebooks;
a receiver first decodes the interfering message, strips it from the received signal, and then decodes the intended message in an equivalent interference-free channel.
Even though joint decoding is not allowed in our {TINnoTS} region, we shall see that the discrete part of the input behaves as a common message (as if it could be decoded at the non-intended destination). We therefore do not send the Gaussian portion of the input (as Gaussian inputs treated as noise increase the noise floor of the receiver) and in~\eqref{eq:GICmixedinput} we set 
\begin{subequations}
\begin{align}
&N_1=N_2=N= \points\left(\snr\right),
\label{eq:very strong:N}
\\
&\delta_1=\delta_2=\delta=0,
\label{eq:very strong:delta}
\end{align}
\end{subequations}
resulting in 
\begin{align}
{S}_1 \sim {S}_2 \sim {S}, \quad
{S} := \sqrt{\snr}X_{1D}+\sqrt{\inr}X_{2D},
\label{eq:very strong:S=S1=S2}
\end{align}
for the received constellations in~\eqref{eq:GICmixedinputinnerBound}.
The number of points and the minimum distance for the constellation ${S}$ in~\eqref{eq:very strong:S=S1=S2}
can be computed from Proposition~\ref{prop:combNOOUTAGE} as follows.
If we identify $|h_x|^2=\snr$, $|h_y|^2=\inr$, $|X|=|Y|=N$, $d_{\min(X)}^2=d_{\min(Y)}^2 = \frac{12}{N^2-1}$, then the condition in~\eqref{eq:thm:comb conditions} reads $N^2 \snr \leq \inr$, which is readily verified since $N^2 \snr \leq (1+\snr)\snr $ by definition of $N$ in~\eqref{eq:very strong:N}, and $(1+\snr)\snr \leq \inr$ by the definition of the very strong interference regime. 
We therefore have
\begin{align}
&|{S}| = N^2, \quad \text{with equally likely points},
\\
&\frac{   d_{\min({S})}^2 }{12} 
=
\min\{\snr,\inr\} \frac{1}{N^2-1}
= \frac{\snr }{N^2-1}.
\end{align}
By plugging these values in Proposition~\ref{prop:ach-with-mixedinput}, an achievable rate region is
\begin{subequations}
\begin{align}
\mathcal{R}^{(\text{\ref{par:verystrong}})}_{\text{in}} 
&= \left \{ 
\begin{array}{l}
0\leq R_1 \leq r_0
\\
0\leq R_2 \leq r_0
\end{array} \right\}
\ \text{such that}
\label{eq:achregion for par:verystrong:RL}
\\
r_0
  &\geq  \mud \left({S} \right)
- \min \left( \log(N), \mug \left(\inr \right) \right)
\notag
\\&\geq 
\left[
\log(N^2)
- \frac{1}{2}\log\left(\frac{2\pi\eu}{12}\right) 
- \frac{1}{2}\log\left(1+\frac{N^2-1}{\snr} 
\right)\right]^+
- \log(N)
\notag
\\&\geq \mug \left( \snr \right) - {\gap}_{\eqref{eq:achregion for par:verystrong}},
\\{\gap}_{\eqref{eq:achregion for par:verystrong}} &:=  
\frac{1}{2}\log\left(\frac{2\pi\eu}{3}\right) \approx 1.25~\text{bits}, 
\label{eq:achregion for par:verystrong:gap}
\end{align}
where the gap in~\eqref{eq:achregion for par:verystrong:gap} is as for the point-to-point Gaussian channel without states in Section~\ref{sec:ptpExample}.
\label{eq:achregion for par:verystrong}
\end{subequations}

\paragraph*{Gap}
It is immediate to see that the achievable region in~\eqref{eq:achregion for par:verystrong} and the upper bound in~\eqref{eq:achregion for par:verystrong:RU} are at most to within ${\gap}_{\eqref{eq:achregion for par:verystrong}}$~bits of one another, where ${\gap}_{\eqref{eq:achregion for par:verystrong}}$ is given in~\eqref{eq:achregion for par:verystrong:gap}.

\subsection{Strong (but not very strong) interference, i.e., $\snr  <  \inr < \snr(1+\snr)$} 
\label{par:strong}

\paragraph*{Outer Bound}
The capacity region of the G-IC in this regime is 
\begin{align}
\mathcal{R} _{\text{out}}^{(\text{\ref{par:strong}})} 
&=  
\left\{\begin{array}{l}
0\leq R_1 \leq \mug\left(\snr\right)
\\ 
0\leq R_2 \leq \mug\left(\snr\right)
\\
R_1+R_2 \leq \mug(\snr+\inr)
\end{array}\right\}
\notag
\\&= \bigcup_{t\in[0,1]} 
\left\{\begin{array}{ll}
0\leq R_1 &\leq
 \frac{1-t}{2}\log\left(1+\frac{\inr}{1+\snr}\right)
+\frac{t  }{2}\log\left(1+\snr\right)
\\&=: \mug\left(\snr_{0,a,t}\right)
\\ 
0\leq R_2 &\leq
 \frac{1-t}{2}\log\left(1+\snr\right)
+\frac{t  }{2}\log\left(1+\frac{\inr}{1+\snr}\right)
\\&=: \mug\left(\snr_{0,b,t}\right) 
\end{array}\right\},
\label{eq:outerregion for par:strong}
\end{align}
where $t\in[0,1]$ is the time-sharing parameter (i.e., by varying $t$ we obtain all points on the dominant face of the capacity region described by $R_1+R_2 = \mug(\snr+\inr)$).

\paragraph*{Inner Bound}
The capacity of the classical G-IC in this regime is achieved by sending only common messages from Gaussian codebooks,
and by performing joint decoding of the intended and interfering messages at both receivers. Similarly to the very strong interference regime, we do not send the Gaussian portion of the mixed inputs (i.e., $\delta_1=\delta_2=0$). Differently from the very strong interference regime, here we do not set the number of points of the discrete part of the inputs to be the same for the two users since the corner point of~\eqref{eq:outerregion for par:strong} for a fixed $t$ has $R_1\not=R_2$.
Moreover, we lower bound the minimum distance of the sum-set constellations ${S}_1$ and ${S}_2$ in~\eqref{eq:GICmixedinputinnerBound} by using Proposition~\ref{prop:card:dmin:measurebound} as follows 
\begin{align}
&\frac{d_{\min({S}_1)}^2}{12} \geq \kappa_{\gamma,N_1,N_2}^2\min\left(\frac{\snr}{N_1^2-1},\frac{ \inr}{N_2^2-1}, \max \left(\frac{\inr}{ N_1^2(N_2^2-1) },\frac{\snr}{ N_2^2(N_1^2-1) } \right)\right)  ,
\label{eq:achregion for par:strong dminS1lowerbound qaz}
\\
&\frac{d_{\min({S}_2)}^2}{12} \geq \kappa_{\gamma,N_1,N_2}^2 \min\left(\frac{\inr}{N_1^2-1},\frac{ \snr}{N_2^2-1}, \max \left(\frac{\inr }{ N_1^2(N_2^2-1) },\frac{\snr}{ N_2^2(N_1^2-1) } \right)\right),
\label{eq:achregion for par:strong dminS2lowerbound qaz}
\\
&\kappa_{\gamma,N_1,N_2} := \frac{\gamma/2}{1+\ln(\max(N_1,N_2))},
\label{eq:achregion for par:strong kappa qaz}
\end{align}
where the minimum distance lower bounds in~\eqref{eq:achregion for par:strong dminS1lowerbound qaz} and~\eqref{eq:achregion for par:strong dminS2lowerbound qaz} hold for all channel gains up to an outage set of Lebesgue measure less than $\gamma$ for any $\gamma \in (0,1]$.

By combining the bounds in~\eqref{eq:achregion for par:strong dminS1lowerbound qaz} and~\eqref{eq:achregion for par:strong dminS2lowerbound qaz} we obtain
\begin{align}
\min_{i\in[1:2]}\frac{d_{\min ({S}_i)}^2}{12} 
&\geq   \kappa_{\gamma,N_1,N_2}^2 \min\left( 
\frac{\min(\snr,\inr)}{\max(N_1^2,N_2^2)-1},
\frac{\max(\snr,\inr)}{N_1^2N_2^2-1} \right) \notag
\\& \stackrel{\text{for $\snr \leq \inr$}}{=} 
\kappa_{\gamma,N_1,N_2}^2 \min\left( 
\frac{\snr}{\max(N_1^2,N_2^2)-1},
\frac{\inr}{N_1^2N_2^2-1} \right).
\label{eq:achregion for par:strong mindmin}
\end{align}
With~\eqref{eq:achregion for par:strong mindmin}, it can be easily seen that  
the achievable region 
in Proposition~\ref{prop:ach-with-mixedinput} can be written as the union over all $(N_1,N_2)$ of the region
\begin{subequations}
\begin{align}
\mathcal{R}_{\text{in}}^{(\text{\ref{par:strong}})}&\left([N_1,N_2,0,0]\right)  
= \left \{ 
\begin{array}{l}
0\leq R_1 \leq r_{1} 
\\
0\leq R_2 \leq r_{2} 
\end{array} \right\} \ \text{such that} 
\label{eq:achregion for par:strong regionB}
\\
r_{1} 
  &
  \geq \mud \left({S}_1\right)
-\min \Big( \log(N_2),\mug\left(\inr\right) \Big)
\geq \log(N_1)+\log(2) - {\gap}_{\eqref{eq:achregion for par:strong}},
\label{eq:achregion for par:strong r1}
\\
r_{2}  
  & 
\geq  \mud \left({S}_2\right)
-\min \Big( \log(N_1),\mug\left(\inr\right) \Big)
\geq \log(N_2)+\log(2) - {\gap}_{\eqref{eq:achregion for par:strong}},
\label{eq:achregion for par:strong r2}
\\
{\gap}_{\eqref{eq:achregion for par:strong}} &\leq 
  \log(2)
+ \frac{1}{2}\log\left(\frac{2\pi\eu}{12}\right) \notag
\\&
+ \frac{1}{2}\log\left(1+\frac{1}{\kappa_{\gamma,N_1,N_2}^2}\max\left( \frac{\max(N_1^2,N_2^2)-1}{\snr}, \frac{N_1^2N_2^2-1 }{ \inr}\right) 
\right), 
\label{eq:achregion for par:strong gap}
\end{align}
where the expression for ${\gap}_{\eqref{eq:achregion for par:strong}}$ comes from the minimum distance expression in~\eqref{eq:achregion for par:strong mindmin}.
\label{eq:achregion for par:strong}
\end{subequations}

We next need to pick $N_1$ and $N_2$ in~\eqref{eq:achregion for par:strong}.
Our choice is guided by the expression of the `compound MAC' capacity region in this regime given by~\eqref{eq:outerregion for par:strong}. In our {TINnoTS} region, time-sharing is not allowed, but varying the number of points of the discrete constellations is; we therefore mimic time-sharing in~\eqref{eq:outerregion for par:strong} by choosing as number of points in the discrete part of the mixed inputs as follows: for some fixed $t$ we let
\begin{subequations}
\begin{align}
N_1 &= \points\left(\snr_{0,a,t}\right), \
\snr_{0,a,t} := \left(1+\frac{\inr}{1+\snr}\right)^{1-t}\left(1+\snr\right)^{t}-1,
\label{eq:achregion for par:strong choiceN1}
\\
N_2 &= \points\left(\snr_{0,b,t}\right), \
\snr_{0,b,t} := \left(1+\frac{\inr}{1+\snr}\right)^{t}\left(1+\snr\right)^{1-t}-1.
\label{eq:achregion for par:strong choiceN2}
\end{align}
\label{eq:achregion for par:strong choiceN}
\end{subequations}
The whole {TINnoTS} achievable region is obtained by taking union over $t\in[0,1]$ of the region in~\eqref{eq:achregion for par:strong} with the number of points as in~\eqref{eq:achregion for par:strong choiceN}.

\paragraph*{Gap}
Since $\frac{\inr}{1+\snr} \leq \snr  \leq \inr$ by the definition of the strong interference regime, we immediately have that in~\eqref{eq:achregion for par:strong choiceN} the equivalent SNRs satisfy $\max(\snr_{0,a,t},\snr_{0,b,t})\leq \snr$ for all $t\in[0,1]$. Thus, for the minimum distance expression in~\eqref{eq:achregion for par:strong mindmin}, we have 
\begin{align}
&\max(N_1^2,N_2^2)-1 \leq \max(\snr_{0,a,t},\snr_{0,b,t}) \leq \snr = \min(\snr,\inr),
\label{eq:condition-strong-to-further-bound-gap-1}
\\
&N_1^2N_2^2-1 \leq (\snr_{0,a,t}+1)(\snr_{0,b,t}+1)-1=\snr+\inr \leq 2\inr.
\label{eq:condition-strong-to-further-bound-gap-2}
\end{align}
Finally, since $\mug(x) \leq  \log(\points(x)) +\log(2)$, the inner bound in~\eqref{eq:achregion for par:strong} is at most ${\gap}_{\eqref{eq:achregion for par:strong}}$~bits from the outer bound in~\eqref{eq:outerregion for par:strong}, uniformly over all $t\in[0,1]$, where ${\gap}_{\eqref{eq:achregion for par:strong}}$ in~\eqref{eq:achregion for par:strong gap} can be further upper bounded thanks to~\eqref{eq:condition-strong-to-further-bound-gap-1}-\eqref{eq:condition-strong-to-further-bound-gap-2} as
\begin{align}
 {\gap}_{\eqref{eq:achregion for par:strong}} &\leq 
\frac{1}{2}\log\left(\frac{2\pi\eu}{3} \left(1+\frac{\max(1,2)}{\kappa_{\gamma,N_1,N_2}^2}
\right)\right) \notag\\&
\leq \frac{1}{2}\log\left(\frac{2\pi\eu}{3} \left(1+8 \cdot \frac{\left(1+1/2\ln({1+\min(\snr,\inr)})\right)^2}{\gamma^2}\right)\right),
\end{align}
where $\gamma$ is the Lebesgue measure of the outage set over which the lower bounds on the minimum distance in~\eqref{eq:achregion for par:strong mindmin} does not apply. Recall that $\gamma$ is a tunable parameter that represents a tradeoff between gap and set of channel gains for which the gap result holds, i.e., by increasing the measure of the outage set we can reduce the gap, and vice-versa. A similar behavior was pointed out already in~\cite{KuserIC}. 

\begin{rem}\label{rem:why cannot use Prop nooutage}
Note that, had we been able to use Proposition~\ref{prop:combNOOUTAGE} instead of Proposition~\ref{prop:card:dmin:measurebound} to bound the minimum distance of the received constellations, we would have obtained a constant gap result instead of a $\loglog$ gap result. 
It turns out that in this regime the condition of Proposition~\ref{prop:combNOOUTAGE} is not satisfied -- the proof is very tedious and is not reported here for sake of space.
\end{rem}

\subsection{Moderately weak interference, i.e. $\inr \leq \snr \leq (1+\inr) \inr$: general setup}
\label{par:weak}

The weak interference regime is notoriously more involved to analyze than the other regimes.
In this subsection we aim to derive a general framework to deal with the weak (but not very weak) interference regime.
Before we move into the gap derivation for this regime, let us summarize the key trick we developed in the strong interference regime to obtain a capacity result to within a gap: write the closure of the capacity outer bound in parametric form so as to get insight on how to choose the number of points of the discrete part of the mixed inputs. In the weak interference regime we will follow the same approach but the computations will be more involved because the capacity region outer bound in weak interference has three dominant faces (and not just one dominant face as in strong interference).

\paragraph*{Outer Bound}

In this regime, we express the upper bound in Proposition~\ref{prop:ETWouter} as the convex closure of its corner points, that is
\begin{subequations}
\begin{align}
\mathcal{R}_{\text{out}}^{(\text{\ref{par:weak}})} 
 = \co\Big\{ 
(R_{1A},R_{2A})     &:= \left( \mug(\snr), c \right), \\
(R_{1B},R_{2B})     &:= \left( b-a, 2a-b \right), \\
(R_{1C},R_{2C})     &:= \left( 2a-b,b-a \right), \\
(R_{1D},R_{2D})     &:= \left( c, \mug(\snr) \right) \
\Big\},
\end{align}
where
\begin{align}
a &:= \min\left( 
\mug(\inr+\snr)+\mug\left(\snr\right)-\mug\left(\inr\right), 
2 \, \mug \left(\inr +\frac{\snr }{1+\inr} \right)
\right), \label{eq:out:rsum}
\\
 b &:= \mug\left(\inr+\frac{\snr}{1+\inr}\right)+ \mug(\snr+\inr)+\mug\left(\snr\right)-\mug\left(\inr\right),
 \label{eq:out:weired}
\\
 c &:= \mug(\inr+\snr)-\mug\left(\snr\right)
 + \mug \left(\inr +\frac{\snr }{1+\inr} \right)-\mug\left(\inr\right)
 \label{eq:out:evenmoreweired}
\\&= \mug \left(\frac{\inr }{1+\snr} \right)+ \mug \left(\frac{\snr }{(1+\inr)^2} \right)
\notag
\\&\leq \mug \left(\frac{\snr }{1+\snr} \right)+ \mug \left(\frac{\inr }{1+\inr} \right)
   \leq \log(2).
\notag
\end{align}
Under the constraint $\frac{\snr}{1+\inr}\leq \inr$  it can be verified numerically that actually $c\leq  0.5537$~bits (rather than $c\leq 1$~bit) attained for $\inr=\sqrt{3}+1$; however, for notational convenience we will use in the following $c\leq 1$~bit.
\label{eq:all-corner-points-in-weak}
\end{subequations}

An explicit expression for $\mathcal{R}_{\text{out}}^{(\text{\ref{par:weak}})}$
obtained by time-sharing between the corner points in~\eqref{eq:all-corner-points-in-weak}  is
\begin{subequations}
\begin{align}
&  \mathcal{R} _{\text{out}}^{(\text{\ref{par:weak}})} 
=     \mathcal{R}_{2R_1+R_2}^{(\text{\ref{par:weak}})}
 \cup  \mathcal{R}_{R_1+R_2}^{(\text{\ref{par:weak}})}
 \cup \mathcal{R}_{R_1+2R_2}^{(\text{\ref{par:weak}})},
 \quad \text{where}
\label{eq:outer bound in weak union of 3}
\\
&\mathcal{R}_{2R_1+R_2}^{(\text{\ref{par:weak}})}
= \bigcup_{t \in [0,1]} \left \{ \begin{array}{l} 
R_1 \leq tR_{1A} + (1-t)R_{1B} \\ 
R_2 \leq tR_{2A} + (1-t)R_{2B} \\ 
\end{array} \right\}, 
\label{eq:outer bound in weak otherrate}
\\
&\mathcal{R}_{R_1+R_2}^{(\text{\ref{par:weak}})}
= \bigcup_{t \in [0,1]} \left \{ \begin{array}{l} 
R_1 \leq tR_{1B} + (1-t)R_{1C} \\ 
R_2 \leq tR_{2B} + (1-t)R_{2C} \\ 
\end{array} \right\}, 
\label{eq:outer bound in weak sumrate}
\\
&\mathcal{R}_{R_1+2R_2}^{(\text{\ref{par:weak}})}
= \bigcup_{t \in [0,1]} \left \{ \begin{array}{l} 
R_1 \leq tR_{1C} + (1-t)R_{1D} \\ 
R_2 \leq tR_{2C} + (1-t)R_{2D} \\ 
\end{array} \right\}. 
\label{eq:outer bound in weak otherfaceswapp}
\end{align}
\label{eq:outer bound in weak ALL}
\end{subequations}

Because the sum-rate upper bound in~\eqref{eq:all-corner-points-in-weak} is in the form 
\[
R_1+R_2\leq \text{eq.\eqref{eq:out:rsum}} 
= \min(\text{eq.\eqref{eq:R upper classical IC kra2}},\text{eq.\eqref{eq:R upper classical IC etw}}),
\] 
we will distinguish between two cases:
when the constraint in~\eqref{eq:R upper classical IC kra2} is active, referred to as {\it Weak1}, and  when the constraint in~\eqref{eq:R upper classical IC etw} is active, referred to as {\it Weak2}, that is, within $\inr \leq \snr \leq \inr(1+\inr)$ we further distinguish between
\begin{align}
&\text{Weak1:} \quad \frac{1+\snr}{1+\inr+\frac{\snr}{1+\inr}} \leq \frac{1+\inr+\frac{\snr}{1+\inr}}{1+\frac{\snr}{1+\inr}},
\label{eq: condition for Weak1}
\\
&\text{Weak2:} \quad \frac{1+\snr}{1+\inr+\frac{\snr}{1+\inr}} > \frac{1+\inr+\frac{\snr}{1+\inr}}{1+\frac{\snr}{1+\inr}}.
\label{eq: condition for Weak2}
\end{align}

\paragraph*{Inner Bound}
For the G-IC in weak interference the best know strategy is to send common and private messages from Gaussian codebooks, and for each of the receivers to jointly decode both common messages and the desired private message while treating the private message of the interferer as noise. 
Unlike in the strong and very strong interference regimes, in this case we will use  the Gaussian portion of the mixed inputs by setting $\delta_1$ and $\delta_2$ to be non-zero.
Moreover, we will vary $(\delta_1,\delta_2)$ jointly with $(N_1,N_2)$ 
to mimic time sharing and power control.

In this regime, we further simplify the achievable rate region in~\eqref{eq: rates mixed inputs WITH Union} from Proposition~\ref{prop:ach-with-mixedinput} as follows
\begin{align}
&\mathcal{R}_{\text{in}}^{(\text{\ref{par:weak}})} 
= \bigcup_{\small
\begin{array}{c}
[N_1,N_2,\delta_1,\delta_2]\in \mathbb{N}^2\times[0,1]^2 :\\
\max(\delta_1,\delta_2) \leq \frac{1}{1+\inr}\\
\end{array}}
\mathcal{R}_{\text{in}}^{(\text{\ref{par:weak}})}\left([N_1,N_2,\delta_1,\delta_2]\right),
\quad \text{where}
\notag\\
&\mathcal{R}_{\text{in}}^{(\text{\ref{par:weak}})}\left([N_1,N_2,\delta_1,\delta_2]\right)  
:= \left \{ 
\begin{array}{l}
0\leq R_1 \leq \log(N_1) 
+ \mug\left(\snr \delta_1\right)
- \Delta_{\eqref{eq: rates mixed inputs BEFORE Union}} 
\\
0\leq R_2 \leq \log(N_2) 
+ \mug\left(\snr \delta_2\right)
- \Delta_{\eqref{eq: rates mixed inputs BEFORE Union}}
\\
\Delta_{\eqref{eq: rates mixed inputs BEFORE Union}} 
= \frac{1}{2}\log\left(\frac{\pi\eu}{3}\right) 
+ \frac{1}{2}\log\left(1+ \frac{12 }{\min_{i \in[1:2]} d_{\min ({S}_i)}^2} \right)
\end{array} \right\},
\label{eq: rates mixed inputs BEFORE Union}
\end{align}
where the received constellations ${S}_1$ and ${S}_2$ are given in~\eqref{eq:GICmixedinputinnerBound}.
Note that, inspired by~\cite{etkin_tse_wang}, we restricted the power splits between the continuous and discrete parts of the mixed inputs to satisfy $\max(\delta_1,\delta_2) \leq \frac{1}{1+\inr}$.
The simplified form of the {TINnoTS} region with mixed inputs in~\eqref{eq: rates mixed inputs BEFORE Union} is obtained from~\eqref{eq: rates mixed inputs WITH Union} as follows.
For the achievable rate $R_1$  
we have
\begin{align*}
R_1 &\geq  \mud \left(S_1\right)
+\mug \left(\frac{\snr\delta_1}{1+\inr\delta_2} \right) 
-\min \left(\log(N_2),\mug\left(\frac{\inr (1-\delta_2)}{1+ \inr\delta_2} \right) \right)
\\&\stackrel{\rm (a)}{=} 
\left[\log(N_1 N_2)
- \frac{1}{2}\log\left(\frac{2\pi\eu}{12}\right) 
- \frac{1}{2}\log\left(1+\frac{12}{d_{\min(S_1)}^2}\right)\right]^+
\\&\quad 
+\mug \left(\frac{\snr\delta_1}{1+\inr\delta_2} \right) 
-\min \left(\log(N_2),\mug\left(\frac{\inr (1-\delta_2)}{1+ \inr\delta_2} \right) \right)
\\&\stackrel{\rm (b)}{\geq}
\log(N_1 N_2)
- \frac{1}{2}\log\left(\frac{2\pi\eu}{12}\right) 
- \frac{1}{2}\log\left(1+\frac{12}{d_{\min(S_1)}^2}\right)
+\mug \left(\frac{\snr\delta_1}{1+\inr\delta_2} \right) 
-\log(N_2)
\\&\stackrel{\rm (c)}{\geq}
\log(N_1)
+\mug \left(\frac{\snr\delta_1}{2} \right) + \frac{1}{2} \log(2)
-\Delta_{\eqref{eq: rates mixed inputs BEFORE Union}}
\\&\stackrel{\rm (d)}{\geq}
\log(N_1)
+\mug \left(\snr\delta_1 \right) 
-\Delta_{\eqref{eq: rates mixed inputs BEFORE Union}},
\end{align*}
where the (in)equalities are due to:
(a) because regardless of whether we use Proposition~\ref{prop:combNOOUTAGE} or Proposition~\ref{prop:card:dmin:measurebound} to compute the minimum distance for the received sum-set constellations 
${S}_1$ and ${S}_2$ in~\eqref{eq:GICmixedinputinnerBound}, these constellations always comprise
$|{S}_1|=|{S}_2|=N_1N_2$ equally likely points either exactly or almost surely;
(b) because  $\left[x\right]^+ \geq x$ and $\min(x,y) \leq x$;
(c) because we imposed $\max(\delta_1,\delta_2) \leq \frac{1}{1+\inr}$ and by definition of $\Delta_{\eqref{eq: rates mixed inputs BEFORE Union}}$ in~\eqref{eq: rates mixed inputs BEFORE Union}; and
(d) because $\log(1+x/2) \geq \log(1+x)-\log(2)$.
The rate expression for user~2 follows similarly.

For the evaluation of $\Delta_{\eqref{eq: rates mixed inputs BEFORE Union}}$,
the minimum distance of the received  constellations ${S}_1$ and ${S}_2$ defined in~\eqref{eq:GICmixedinputinnerBound} will be computed with either Proposition~\ref{prop:combNOOUTAGE} or Proposition~\ref{prop:card:dmin:measurebound}.
By using Proposition~\ref{prop:card:dmin:measurebound}, which is valid for all channel gains up to a set of controllable Lebesgue measure less than $\gamma$, for any $\gamma>0$, 
we have 
\begin{subequations}
\begin{align}
\frac{d_{\min({S}_1)}^2}{12}&\geq \kappa_{\gamma,N_1,N_2}^2 
\frac{ \min\left(\frac{(1-\delta_1)\snr}{N_1^2-1},\frac{(1-\delta_2) \inr}{N_2^2-1}, \max \left(\frac{(1-\delta_2)\inr }{ N_1^2(N_2^2-1) },\frac{(1-\delta_1)\snr}{ N_2^2(N_1^2-1) } \right)\right) }{1+\snr \delta_1+\inr \delta_2} 
\label{eq:achregion for par:moderate dminS1lowerbound eq1}
\\&\geq \kappa_{\gamma,N_1,N_2}^2 
\frac{1-\max(\delta_1,\delta_2)} {1+\snr \delta_1+\inr\delta_2}
\min\left( 
\frac{\snr}{N_1^2-1},
\frac{\inr}{N_2^2-1}, 
\frac{\max(\snr,\inr)}{N_1^2N_2^2-1}
\right)
\label{eq:achregion for par:moderate dminS1lowerbound eq2}
\\&\stackrel{\text{for $\inr\leq \snr$}}{=} \kappa_{\gamma,N_1,N_2}^2 
\frac{1-\max(\delta_1,\delta_2)} {1+\snr \delta_1+\inr\delta_2}
\min\left( 
\frac{\inr}{N_2^2-1}, 
\frac{\snr}{N_1^2N_2^2-1}
\right),  
\label{eq:achregion for par:moderate dminS1lowerbound}
\\
\frac{d_{\min({S}_2)}^2}{12} &\geq 
\kappa_{\gamma,N_1,N_2}^2 
\frac{1-\max(\delta_1,\delta_2)} {1+\snr \delta_2+\inr\delta_1}
\min\left( 
\frac{\inr}{N_1^2-1}, 
\frac{\snr}{N_1^2N_2^2-1}
\right), 
\label{eq:achregion for par:moderate dminS2lowerbound}
\\
\kappa_{\gamma,N_1,N_2}&=\frac{\gamma/2}{1+1/2\ln(\max(N_1^2,N_2^2))}.
\label{eq:achregion for par:moderate kappakappa definitiontobeusedlater}
\end{align}
\label{eq:achregion for par:moderate kappakappa definitiontobeusedlater tuttituttissimi}
\end{subequations} 
If instead we use Proposition~\ref{prop:combNOOUTAGE} 
we have 
\begin{subequations}
\begin{align}
\min_{i \in[1:2]} \frac{d_{\min ({S}_i)}^2}{12}
=
\min_{(i,i')\in \{(1,2),(2,1)\}}
\frac{1}{1+\snr \delta_i+\inr \delta_{i'}} \
\min \left( \frac{(1-\delta_i) \snr}{N_i^2-1},\frac{(1-\delta_{i'}) \inr}{N_{i'}^2-1}\right), 
\label{eq:achregion for par:moderate all parameters dmin P2 min dminpippo}
\end{align}
which holds if 
\begin{align}
\inr (1-\delta_{i^{'}})\frac{N_{i^{'}}^2}{N_{i^{'}}^2-1}\leq \frac{\snr(1-\delta_i)}{N_{i}^2-1} \quad \forall(i,i')\in \{(1,2),(2,1) \}.
\label{eq:achregion for par:moderate all parameters dmin P2 min dmin condcondcond}
\end{align}
\label{eq:achregion for par:moderate all parameters dmin P2}
\end{subequations}

We observe that in~\eqref{eq: rates mixed inputs BEFORE Union} each achievable rate is bounded by the sum of two terms: 
one that depends on the number of points of the discrete part of the mixed inputs, and 
the other that depends on the continuous part of the mixed inputs through the power splits.
This is reminiscent of rate-splitting in the Han-Kobayashi achievable scheme, where each rate is written as the sum of the common-message rate and the private-message rate. 

The simplified Han-Kobayashi achievable region in~\cite{etkin_tse_wang} is known to achieve the outer bound in Proposition~\ref{prop:ETWouter} to within 1/2~bit; however, to the best of our knowledge, it is not known how much information should be conveyed through the private messages and how much through the common messages for a general rate-pair $(R_1,R_2)$ on the convex closure of the outer bound in Proposition~\ref{prop:ETWouter} and for a general set of channel parameters. Next we will identify the (to within 1/2 bit) optimal rate splits and use the found analytical closed-form  expressions for the common-message and private-message rates to come up with an educated guess for the values of the parameters of our mixed inputs.

Let $R_u=R_{u,p}+R_{u,c},$ where $R_{u,p}$ is the rate of the private message and $R_{u,c}$ is the rate of the common message for user $u\in[1:2]$. From the analysis of the symmetric LDA in~\cite[Lemma~4]{bresler_tse}, which gives the optimal gDoF region for the symmetric G-IC before Fourier-Motzkin elimination, it is not difficult to see that it is always optimal to set 
\begin{align}
R_{u,p} \approxeq \min\left(\mug\left(\frac{\snr}{1+\inr}\right), \frac{R_u}{2}\right), \ u\in[1:2], 
\label{eq:optimalratesplit}
\end{align}
where with $\approxeq$ we mean equality up to an additive term that grows slower than $\log(\snr)$ when $\snr\to\infty$.
We found that, with the exception of the sum-capacity for $\alpha\in(1/2,2/3)$, the optimal `rate splits' are unique and are given by~\eqref{eq:optimalratesplit}. 
These `rate splits' shed light on the interplay between private and common messages,
which was not immediately obvious from the outer bound in~\eqref{eq:R upper classical IC}.

In the following it will turn out to be convenient to think of
the discrete   part of a mixed input (contributing to the rate with the term $\log(N_i), i\in[1:2]$) as a `common message' and of
the continuous part of a mixed input (contributing to the rate with the term $\mug(S \delta_i), i\in[1:2]$) as a `private message'. 
We shall refer to this `mapping' of our {TINnoTS} scheme to the Han-Kobayashi scheme as the {\it discrete$\to$common~map}.
Note that there is a fundamental difference between a common message in the Han-Kobayashi achievable scheme and the discrete part of the mixed input in our scheme. In our scheme the interfering signal is treated as noise while in Han-Kobayashi achievable scheme the common message is jointly decoded, albeit non-uniquely, with the intended signals at the non-intended receiver. The {\it discrete$\to$common~map} is thus just intended to provide an educated guess on how to pick the parameters of our mixed input in the following analysis. 
We do not claim here that the {\it discrete$\to$common~map} is the only possible way to `match' our {TINnoTS} scheme to the Han-Kobayashi scheme. In fact, we will give an example later on where with the proposed {\it discrete$\to$common~map} we obtain a $\loglog$ gap, but with a {\it discrete$\to$private~map} we obtain a constant gap.
Although finding the smallest possible gap in each regime would be desirable, here for sake of simplicity we consistently use the {\it discrete$\to$common~map}.

With the inner and outer bounds defined, as well as the `rate splits', we are ready to determine an optimal (to within a gap) choice of parameters for the mixed inputs in the weak interference regime.  
Next, we will focus on the regime in~\eqref{eq: condition for Weak1} and the regime in~\eqref{eq: condition for Weak2} separately
and for each regime
we will match each point on the closure of the outer bound in~\eqref{eq:all-corner-points-in-weak} with an achievable region as in~\eqref{eq: rates mixed inputs BEFORE Union}.

\subsection{Moderately Weak Interference, subregime Weak1} 
\label{sec:weak:WeakType1}

The regime of interest here is the subset of $\inr \leq \snr \leq \inr(1+\inr)$ for which~\eqref{eq: condition for Weak1} holds.
For convenience, we analyze the regime $\inr \leq \snr \leq 1+\inr$ in Appendix~\ref{sec:gaps s=i TDMA} and  
focus next on the subset of $(1+\inr) \leq \snr \leq \inr(1+\inr)$ for which~\eqref{eq: condition for Weak1} holds.
The condition $1+\inr \leq \snr$ allows us to state $\frac{1+\snr }{1+\inr+\frac{\snr}{1+\inr}} \geq 1$ in the following.

\paragraph*{Outer Bound Corner Points and Rate Splits}

Whenever the condition in~\eqref{eq: condition for Weak1} holds, the outer bound in~\eqref{eq:R upper classical IC} is given by all the constraints in~\eqref{eq:R upper classical IC} except for the one in~\eqref{eq:R upper classical IC etw} -- in the symmetric case the constraints in~\eqref{eq:R upper classical IC kra1} and~\eqref{eq:R upper classical IC kra2} are the same. 
\begin{subequations}
The corner points for the outer bound region in~\eqref{eq:outer bound in weak ALL} are thus
\begin{align}
\text{eq.\eqref{eq:R upper classical IC cuset r1}}=\text{eq.\eqref{eq:out:2r1r2}} \Rightarrow 
(R_{1A},R_{2A})&= \left(
\mug(\snr),
\label{eq:outer bound in weak1 R1A}
    \right.\\&\quad \left.
\mug\left(\frac{\snr}{1+\inr}\right)+\mug \left(\inr+\frac{\snr}{1+\inr} \right)-\mug(\snr)
\right);
\label{eq:outer bound in weak1 R2A}
\\
\text{eq.\eqref{eq:out:2r1r2}}=\text{eq.\eqref{eq:R upper classical IC kra1}} \Rightarrow  
(R_{1B},R_{2B})&= \left(
\mug \left(\inr+\frac{\snr}{1+\inr} \right), 
\label{eq:outer bound in weak1 R1B}
    \right.\\&\quad \left.
\mug(\snr)+\mug \left(\frac{\snr}{1+\inr}\right)-\mug \left(\inr+\frac{\snr}{1+\inr} \right)
\right);
\label{eq:outer bound in weak1 R2B}
\\
\text{eq.\eqref{eq:out:2r1r2}}=\text{eq.\eqref{eq:R upper classical IC kra1}} \Rightarrow  
(R_{1C},R_{2C})&= \left(
\mug(\snr)+\mug \left(\frac{\snr}{1+\inr}\right)-\mug \left(\inr+\frac{\snr}{1+\inr} \right),
\label{eq:outer bound in weak1 R1C}
    \right.\\&\quad \left.
\mug \left(\inr+\frac{\snr}{1+\inr} \right)
\right);
\label{eq:outer bound in weak1 R2C}
\\
\text{eq.\eqref{eq:R upper classical IC cuset r2}}=\text{eq.\eqref{eq:out:r12r2}} \Rightarrow 
(R_{1D},R_{2D})&= \left(
\mug\left(\frac{\snr}{1+\inr}\right)+\mug \left(\inr+\frac{\snr}{1+\inr} \right)-\mug(\snr),
\label{eq:outer bound in weak1 R1D}
    \right.\\&\quad \left.
\mug(\snr)
\right).
\label{eq:outer bound in weak1 R2D}
\end{align}
\label{eq:outer bound in weak1 ALL CORNERS}
\end{subequations}

As explained before, inspired by the proposed {\it discrete$\to$common~map}, we choose to `split' the rates as:
\begin{enumerate}

\item
for the sum-rate face / region $\mathcal{R}_{R_1+R_2}$: 
we set $R_{1,p}=R_{2,p} \approxeq \mug \left( \frac{\snr}{1+\inr}\right)$.

\item
for the other dominant face / region $\mathcal{R}_{2R_1+R_2}$: 
we set $R_{1p} \approxeq \mug \left( \frac{\snr}{1+\inr}\right)$ and $R_{2p} \approxeq \frac{R_2}{2}$;

\item
we will not explicitly consider the remaining dominant face / region $\mathcal{R}_{R_1+2R_2}$ 
because a gap result can be obtained by proceeding as for $\mathcal{R}_{2R_1+R_2}$ but with the role of the users swapped.

\end{enumerate}

\paragraph*{Outer Bound $\mathcal{R}_{R_1+R_2}$}

With the corner point expressions in~\eqref{eq:outer bound in weak1 ALL CORNERS}
we write the outer bound sum-rate face in~\eqref{eq:outer bound in weak sumrate}
as
\begin{align}
&\mathcal{R}_{R_1+R_2}^{(\text{\ref{sec:weak:WeakType1}})}
= \bigcup_{t \in [0,1]} 
\left \{ \begin{array}{l} 
R_1 \leq \frac{t}{2} \log \left( \frac{1+\inr +\frac{\snr}{1+\inr}}{1+\frac{\snr}{1+\inr}}\right) 
    +\frac{1-t}{2}\log \left( \frac{1+\snr }{1+\inr+\frac{\snr}{1+\inr}} \right)
    +\frac{1}{2}\log \left(1+\frac{\snr}{1+\inr}\right) \\
 \qquad =: \mug(\snr_{1,a,t})+\mug \left( \frac{\snr}{1+\inr}\right)
\\ 
R_2 \leq \frac{1-t}{2} \log \left( \frac{1+\inr +\frac{\snr}{1+\inr}}{1+\frac{\snr}{1+\inr}}\right)
   +\frac{t}{2}\log \left( \frac{1+\snr}{1+\inr+\frac{\snr}{1+\inr}} \right)
   +\frac{1}{2}\log \left(1+\frac{\snr}{1+\inr}\right)\\
 \qquad =: \mug(\snr_{1,b,t})+\mug \left( \frac{\snr}{1+\inr}\right) 
\\
\end{array} \right\}.
\label{eq:outer bound in weak1 sumrate}
\end{align}

\paragraph*{Inner Bound for $\mathcal{R}_{R_1+R_2}$}

In order to approximately achieve the points in~\eqref{eq:outer bound in weak1 sumrate}, 
we pick 
\begin{subequations}
\begin{align}
N_1 &= \points\left(\snr_{1,a,t}\right), \
\snr_{1,a,t} := \left( \frac{1+\inr +\frac{\snr}{1+\inr}}{1+\frac{\snr}{1+\inr}}\right)^{t}\left( \frac{1+\snr }{1+\inr+\frac{\snr}{1+\inr}} \right)^{1-t}-1,
\label{eq:achregion for par:moderate choiceN1}
\\
N_2 &= \points\left(\snr_{1,b,t}\right), \
\snr_{1,b,t} := \left( \frac{1+\inr +\frac{\snr}{1+\inr}}{1+\frac{\snr}{1+\inr}}\right)^{1-t}\left( \frac{1+\snr}{1+\inr+\frac{\snr}{1+\inr}} \right)^{t}-1,
\label{eq:achregion for par:moderate choiceN2}
\\
\delta_1&=\frac{1}{1+\inr},
\label{eq:achregion for par:moderate choicedelta1}
\\
\delta_2&=\frac{1}{1+\inr}.
\label{eq:achregion for par:moderate choicedelta2}
\end{align}
\label{eq:achregion for par:moderate choiceALL: R1+R2sumrate}
\end{subequations}

\paragraph*{Gap for $\mathcal{R}_{R_1+R_2}$}

The gap between the outer bound region in~\eqref{eq:outer bound in weak1 sumrate} and the achievable rate region in~\eqref{eq: rates mixed inputs BEFORE Union} with the parameters as in~\eqref{eq:achregion for par:moderate choiceALL: R1+R2sumrate} is 
\begin{align*} 
\Delta_{R_1}
  &=
  \mug(\snr_{1,a,t})
+ \mug \left(\frac{\snr}{1+\inr} \right)
- \log(\points(\snr_{1,a,t}))
- \mug \left(\frac{\snr}{1+\inr} \right)
+ \Delta_{\eqref{eq: rates mixed inputs BEFORE Union}} 
\\&  \leq \log(2)+\Delta_{\eqref{eq: rates mixed inputs BEFORE Union}},
\end{align*}
where the term $\log(2)$ is the ``integrality gap'' $\log(\points(x))+\log(2) \geq \mug(x)$;
similarly, we have 
\begin{align*}
\Delta_{R_2} \leq \log(2)+\Delta_{\eqref{eq: rates mixed inputs BEFORE Union}}.
\end{align*}

We are thus left with bounding $\Delta_{\eqref{eq: rates mixed inputs BEFORE Union}}$ in~\eqref{eq: rates mixed inputs BEFORE Union}, which is related to the minimum distance of the received constellations ${S}_1$ and ${S}_2$ defined in~\eqref{eq:GICmixedinputinnerBound}. 
In Appendix~\ref{sec:gaps:WeakType1:R1+R2 face} we show that 
\begin{align}
\min_{i \in[1:2]}
\frac{d_{\min ({S}_i)}^2}{12}
&\geq \kappa_{\gamma,N_1,N_2}^2 \cdot \frac{3}{8},
\label{eq:sec:gaps:WeakType1:R1+R2 face}
\end{align}
where $\kappa_{\gamma,N_1,N_2}$ is given in~\eqref{eq:achregion for par:moderate kappakappa definitiontobeusedlater}, and
$\max(N_1^2,N_2^2) -1\leq \inr = \min(\snr,\inr)$.
With this, the gap for this face is bounded by
\begin{align}
{\gap}_{\eqref{eq:gap weak1 sumrate}} 
&\leq \max(\Delta_{R_1},\Delta_{R_2}) = \log(2)+\Delta_{\eqref{eq: rates mixed inputs BEFORE Union}}
\notag\\
&\leq \frac{1}{2}\log\left(\frac{4\pi\eu}{3} \right)+\frac{1}{2}\log\left(1+\frac{8}{3}\cdot \frac{1}{\kappa_{\gamma,N_1,N_2}^2}\right)   
\notag\\
&\leq \frac{1}{2}\log\left(\frac{4\pi\eu}{3} \right)+\frac{1}{2}\log\left(1+\frac{32}{3}\cdot \frac{(1+1/2\ln(1+\min(\snr,\inr)))^2}{\gamma^2}\right).  
\label{eq:gap weak1 sumrate}
\end{align}

\paragraph*{Outer Bound $\mathcal{R}_{2R_1+R_2}$}

With the corner point expressions in~\eqref{eq:outer bound in weak1 ALL CORNERS}
we write the outer bound in~\eqref{eq:outer bound in weak otherrate}
as
\begin{align}
&\mathcal{R}_{2R_1+R_2}^{(\text{\ref{sec:weak:WeakType1}})}
= \bigcup_{t \in [0,1]} \left \{ \begin{array}{l} 
R_1 \leq \frac{t}{2}\log \left( \frac{1+\inr+\frac{\snr}{1+\inr}}{1+\frac{\snr}{1+\inr}}\right)
   +\frac{1-t}{2}\log \left( \frac{1+\snr}{1+\frac{\snr}{1+\inr}} \right)
   +\frac{1}{2}\log \left(1+\frac{\snr}{1+\inr}\right)\\
 \qquad =: \mug(\snr_{2,a,t}) +\mug \left( \frac{\snr}{1+\inr}\right)\\
R_2 \leq  \frac{t}{2} \log \left( \frac{1+\inr+\snr}{1+\inr+\frac{\snr}{1+\inr}}\cdot \frac{1+\snr}{1+\inr}\right)
  + (1-t) c \\
 \qquad =: \mug(\snr_{2,b,t})+ \frac{t}{2}\log \left(\frac{1+\snr}{1+\inr}\right)+(1-t) c
 \end{array} \right\}, 
\label{eq:outer bound in weak1 otherrate}
\end{align}
where $(1-t) c\leq c \leq \log(2)$, where the parameter $c$ is defined in~\eqref{eq:out:evenmoreweired}.

\paragraph*{Inner Bound for $\mathcal{R}_{2R_1+R_2}$}

In order to approximately achieve the points in~\eqref{eq:outer bound in weak1 otherrate} we pick
\begin{subequations}
\begin{align}
N_1    &= \points\left(\snr_{2,a,t}\right),  \
\snr_{2,a,t} := \left(  \frac{1+\inr+\frac{\snr}{1+\inr}}{1+\frac{\snr}{1+\inr}} \right)^{t}\left( \frac{1+\snr}{1+\frac{\snr}{1+\inr}}  \right)^{1-t} -1,
\label{eq:achregion for par:moderate choiceN1: 2R1+R2}
\\
N_2 &= \points\left(\snr_{2,b,t}\right), \
\snr_{2,b,t} := \left(  \frac{1+\inr+\snr}{1+\inr+\frac{\snr}{1+\inr}}\right)^{t}-1,
\label{eq:achregion for par:moderate choiceN2: 2R1+R2}
\\
\delta_1&=\frac{1}{1+\inr},\\
\delta_2&: \ \mug\left(\snr \delta_2\right)=\frac{t}{2}\log\left(\frac{1+\snr}{1+\inr}\right)
\Longleftrightarrow
\delta_2=\left( \left(\frac{1+\snr}{1+\inr} \right)^t-1 \right) \frac{1}{\snr}, 
\label{eq: choice of delta2: moderate: 2R1+R2}
\end{align}
\label{eq:achregion for par:moderate choiceALL: 2R1+R2}
\end{subequations}
where the power split $\delta_2$ in~\eqref{eq: choice of delta2: moderate: 2R1+R2} satisfies
\begin{align*}
\delta_2 \leq 
\frac{1-\inr/\snr}{1+\inr}
\leq \frac{1}{1+\inr},
\end{align*}
as required for the achievable rate region in~\eqref{eq: rates mixed inputs BEFORE Union}.

\paragraph*{Gap for $\mathcal{R}_{2R_1+R_2}$}

The gap between the outer bound region in~\eqref{eq:outer bound in weak1 otherrate} and the achievable rate region in~\eqref{eq: rates mixed inputs BEFORE Union} with the choice in~\eqref{eq:achregion for par:moderate choiceALL: 2R1+R2} is 
\begin{align*}
\Delta_{R_1}
&
= \mug \left(\snr_{2,a,t} \right)
+ \mug \left(\frac{\snr}{1+\inr}\right)
- \log \left(\points\left(\snr_{2,a,t}\right) \right)
- \mug \left(\frac{\snr}{1+\inr}\right)
+\Delta_{\eqref{eq: rates mixed inputs BEFORE Union}}
\\&\leq \log(2)+\Delta_{\eqref{eq: rates mixed inputs BEFORE Union}},
\end{align*}
and similarly
\begin{align*}
\Delta_{R_2}
&=  \mug \left(\snr_{2,b,t} \right)
+\frac{t}{2} \log \left(\frac{1+\snr}{1+\inr}\right)+ (1-t) c
-\log \left(\points\left(\snr_{2,b,t}\right) \right)
-\frac{t}{2} \log \left(\frac{1+\snr}{1+\inr}\right)
+\Delta_{\eqref{eq: rates mixed inputs BEFORE Union}}
\\&\leq \log(2)+\log(2)+\Delta_{\eqref{eq: rates mixed inputs BEFORE Union}},
\end{align*}
since $(1-t) c\leq c \leq \log(2)$, where the parameter $c$ is defined in~\eqref{eq:out:evenmoreweired}.

So we are left with bounding $\Delta_{\eqref{eq: rates mixed inputs BEFORE Union}}$ in~\eqref{eq: rates mixed inputs BEFORE Union}, which is related to the minimum distance of the received constellations ${S}_1$ and ${S}_2$ defined in~\eqref{eq:GICmixedinputinnerBound}.

In Appendix~\ref{sec:gaps:WeakType1:2R1+R2 face} we show that 
\begin{align}
\min_{i \in[1:2]}
\frac{d_{\min ({S}_i)}^2}{12}
&\geq \kappa_{\gamma,N_1,N_2}^2 \cdot \frac{4}{45}
\label{eq:sec:gaps:WeakType1:2R1+R2 face}
\end{align}
where $\kappa_{\gamma,N_1,N_2}$ is given in~\eqref{eq:achregion for par:moderate kappakappa definitiontobeusedlater},
and $\max(N_1^2,N_2^2) -1\leq \inr = \min(\snr,\inr)$.
With this, we finally  get that the gap for this face is bounded by
\begin{align}
{\gap}_{\eqref{eq:gap weak1 otherrate}} 
&\leq \max(\Delta_{R_1},\Delta_{R_2})
\notag\\&
\leq \frac{1}{2}\log\left(\frac{16\pi\eu}{3} \right)+\frac{1}{2}\log\left(1+\frac{45}{4}\cdot \frac{1}{\kappa_{\gamma,N_1,N_2}^2}\right)  
\notag\\
&\leq \frac{1}{2}\log\left(\frac{16\pi\eu}{3} \right)+\frac{1}{2}\log\left(1+45 \cdot \frac{(1+1/2\ln(1+\min(\snr,\inr)))^2}{\gamma^2}\right).  
\label{eq:gap weak1 otherrate}
\end{align}

\paragraph*{Overall Gap for Weak1}
To conclude the proof for this sub-regime, the gap is the maximum between the gaps of the different faces and is given by
\begin{align}
{\gap}_{\eqref{eq: gap weak 1}} 
&\leq \max\left( {\gap}_{\eqref{eq:gap weak1 sumrate}}, {\gap}_{\eqref{eq:gap weak1 otherrate}} \right)
= {\gap}_{\eqref{eq:gap weak1 otherrate}}.
\label{eq: gap weak 1}
\end{align}

\subsection{Moderately Weak Interference, subregime Weak2} 
\label{sec: weak: weak type 2}
We focus here on the subset of $\inr \leq \snr \leq \inr(1+\inr)$ for which~\eqref{eq: condition for Weak2} holds.

\paragraph*{Outer Bound Corner Points and Rate Splits}

Under the condition in~\eqref{eq: condition for Weak2}, the outer bound in~\eqref{eq:R upper classical IC} is given by all the constraints except for the ones in~\eqref{eq:R upper classical IC kra1} and~\eqref{eq:R upper classical IC kra2}.

\begin{subequations}
The corner points are thus
\begin{align}
\text{eq.\eqref{eq:R upper classical IC cuset r1}}=\text{eq.\eqref{eq:out:2r1r2}} \Rightarrow 
(R_{1A},R_{2A})&= \left(
\mug(\snr),
\label{eq:outer bound in weak2 R1A}
    \right.\\&\quad \left.
\mug\left(\frac{\snr}{1+\inr} \right)+\mug\left(\inr+\frac{\snr}{1+\inr}\right)-\mug(\snr)
\right);
\label{eq:outer bound in weak2 R2A}
\\
\text{eq.\eqref{eq:out:2r1r2}}=\text{eq.\eqref{eq:R upper classical IC etw}} \Rightarrow 
(R_{1B},R_{2B})&=\left(
\mug \left(\frac{\snr}{1+\inr} \right)+\mug(\snr)-\mug \left( \inr+\frac{\snr}{1+\inr}\right),
\label{eq:outer bound in weak2 R1B}
    \right.\\&\quad \left.
3 \mug \left(\inr+\frac{\snr}{1+\inr} \right)-\mug(\snr)- \mug \left(\frac{\snr}{1+\inr} \right)
\right);
\label{eq:outer bound in weak2 R2B}
\\
\text{eq.\eqref{eq:R upper classical IC etw}}=\text{eq.\eqref{eq:out:r12r2}} \Rightarrow 
(R_{1C},R_{2C})&=\left(
3 \mug \left(\inr+\frac{\snr}{1+\inr} \right)-\mug(\snr)- \mug \left(\frac{\snr}{1+\inr} \right),
\label{eq:outer bound in weak2 R1C}
    \right.\\&\quad \left.
\mug \left(\frac{\snr}{1+\inr} \right)+\mug(\snr)-\mug \left( \inr+\frac{\snr}{1+\inr}\right)
\right);
\label{eq:outer bound in weak2 R2C}
\\
\text{eq.\eqref{eq:R upper classical IC cuset r2}}=\text{eq.\eqref{eq:out:r12r2}} \Rightarrow 
(R_{1D},R_{2D})&= \left(
\mug\left(\frac{\snr}{1+\inr} \right)+\mug\left(\inr+\frac{\snr}{1+\inr}\right)-\mug(\snr),
\label{eq:outer bound in weak2 R1D}
    \right.\\&\quad \left.
\mug(\snr)
\right).
\label{eq:outer bound in weak2 R2D}
\end{align}
\label{eq:outer bound in weak2 ALL CORNERS}
\end{subequations}

As explained before, inspired by the proposed {\it discrete$\to$common~map}, we choose to `split' the rates as:
\begin{enumerate}

\item
for the sum-rate face / region $\mathcal{R}_{R_1+R_2}$: 
we set
$R_{1,p}\approxeq \frac{R_1}{2}$ and $R_{2,p}\approxeq \frac{R_2}{2}$. 

\item
for the other dominant face / region $\mathcal{R}_{2R_1+R_2}$: 
we set
$R_{1,p}\approxeq \mug \left( \frac{\snr}{1+\inr}\right)$ and $R_{2,p} \approxeq \frac{R_2}{2}$. 

\item
we will not explicitly consider the remaining dominant face / region $\mathcal{R}_{R_1+2R_2}$ 
because a gap result can be obtained by proceeding as for $\mathcal{R}_{2R_1+R_2}$ but with the role of the users swapped.

\end{enumerate}

\paragraph*{Outer Bound $\mathcal{R}_{R_1+R_2}$}

With the corner point expressions in~\eqref{eq:outer bound in weak2 ALL CORNERS}
we write the outer bound sum-rate face in~\eqref{eq:outer bound in weak sumrate}
as
\begin{align}
 \mathcal{R}_{R_1+R_2}^{(\text{\ref{sec: weak: weak type 2}})} 
 &= \bigcup_{t\in [0,1]} \left \{ \begin{array}{l}
 R_1 \leq \frac{1-t}{2} \log \left( \frac{ \left(1+\frac{\snr}{1+\inr} \right)(1+\snr)}{1+\inr +\frac{\snr}{1+\inr}}\right)
           +\frac{t}{2} \log \left( \frac{ \left(1+\inr+\frac{\snr}{1+\inr} \right)^3}{ \left(1+\frac{\snr}{1+\inr} \right)(1+\snr)}\right)
 \\\qquad  =: 
  2\cdot \mug \left( \snr_{3,a,t} \right)
\\
 R_1 \leq \frac{t}{2} \log \left( \frac{ \left(1+\frac{\snr}{1+\inr} \right)(1+\snr)}{1+\inr +\frac{\snr}{1+\inr}}\right)
       +\frac{1-t}{2} \log \left( \frac{ \left(1+\inr+\frac{\snr}{1+\inr} \right)^3}{ \left(1+\frac{\snr}{1+\inr} \right)(1+\snr)}\right)
 \\\qquad  =:
 2\cdot \mug \left( \snr_{3,b,t} \right)
\end{array}\right \}.
\label{eq:outer bound in weak2 sumrate}
\end{align}

\paragraph*{Inner Bound for $\mathcal{R}_{R_1+R_2}$}

In order to approximately achieve the points in $\mathcal{R}_{R_1+R_2}^{(\text{\ref{sec: weak: weak type 2}})}$ in~\eqref{eq:outer bound in weak2 sumrate} we pick
\begin{subequations}
\begin{align}
N_1 &= \points\left(\snr_{3,a,t}\right), \
\snr_{3,a,t} :=
\left( \frac{ \left(1+\frac{\snr}{1+\inr} \right)(1+\snr)}{1+\inr +\frac{\snr}{1+\inr}}\right)^{\frac{1-t}{2}}
\left( \frac{ \left(1+\inr+\frac{\snr}{1+\inr} \right)^3}{ \left(1+\frac{\snr}{1+\inr} \right)(1+\snr)}\right)^{\frac{t}{2}}
-1,
\label{eq:achregion for par:weak type 2 choiceN1 take1}
\\
N_2 &= \points\left(\snr_{3,b,t}\right), \
\snr_{3,b,t} := 
\left( \frac{ \left(1+\frac{\snr}{1+\inr} \right)(1+\snr)}{1+\inr +\frac{\snr}{1+\inr}}\right)^{\frac{t}{2}}
\left( \frac{ \left(1+\inr+\frac{\snr}{1+\inr} \right)^3}{ \left(1+\frac{\snr}{1+\inr} \right)(1+\snr)}\right)^{\frac{1-t}{2}}
-1,
\label{eq:achregion for par:weak type 2 choiceN2 take1}
\\
\delta_1 &:  \mug \left(\snr \delta_1\right)=\mug\left(\snr_{3,a,t}\right)
\Longleftrightarrow  
\delta_1 =\frac{\snr_{3,a,t}}{\snr}, 
\label{eq: choice delta1: weak 2 R1+R2 take1}
\\
\delta_2& : \mug \left( \snr \delta_2\right)=\mug\left(\snr_{3,b,t}\right)
\Longleftrightarrow
\delta_2 =\frac{\snr_{3,b,t}}{\snr},
\label{eq: choice delta2: weak 2 R1+R2 take1}
\end{align}
where
\begin{align*}
\max(\delta_1,\delta_2) = \frac{\max(\snr_{3,a,t},\snr_{3,b,t})}{\snr}  
\leq \frac{1}{1+\inr}, 
\end{align*}
as required for the achievable rate in~\eqref{eq: rates mixed inputs BEFORE Union};
the proof can be found in Appendix~\ref{sec:gaps:WeakType2:R1+R2 face}, eq.\eqref{eq: dsbound {sec:gaps:WeakType2:R1+R2 face}}.
\label{eq:parameters R1+R2 face weak type 2 take1}
\end{subequations}

\paragraph*{Gap for $\mathcal{R}_{R_1+R_2}$}

The gap between the outer bound region in~\eqref{eq:outer bound in weak2 sumrate} and the achievable rate in~\eqref{eq: rates mixed inputs BEFORE Union} with the parameters in~\eqref{eq:parameters R1+R2 face weak type 2} is 
\begin{align*}
\Delta_{R_1} &= 2 \mug \left( \snr_{3,a,t} \right)
  -\log\left(\points \left( \snr_{3,a,t}\right)\right)
  -\mug( \snr_{3,a,t})+\Delta_{\eqref{eq: rates mixed inputs BEFORE Union}}\\
  & \leq \log(2)+\Delta_{\eqref{eq: rates mixed inputs BEFORE Union}},
\end{align*}
and similarly
\begin{align*}
\Delta_{R_2} 
  & \leq \log(2)+\Delta_{\eqref{eq: rates mixed inputs BEFORE Union}}.
\end{align*}

We are then left with bounding $\Delta_{\eqref{eq: rates mixed inputs BEFORE Union}}$, which depends on minimum distances of the received sum-set constellations.

In Appendix~\ref{sec:gaps:WeakType2:R1+R2 face} we show 
\begin{align}
\min_{i\in[1:2]} \frac{d_{\min ({S}_i)}^2}{12} \geq \kappa_{\gamma,N_1,N_2}^2 \cdot \frac{1}{24},
\label{eq: dmin WeakType2:R1+R2 face take1}
\end{align}
where $\kappa_{\gamma,N_1,N_2}$ is given in~\eqref{eq:achregion for par:moderate kappakappa definitiontobeusedlater},
and $\max(N_1^2,N_2^2) -1\leq \inr = \min(\snr,\inr)$.
With this, we finally  get that the gap for this face is bounded by
\begin{align}
{\gap}_{\eqref{eq:sec:gaps:WeakType2:R1+R2 face take1}}
        &\leq \max(\Delta_{R_1},\Delta_{R_2}) = \log(2) +\Delta_{\eqref{eq: rates mixed inputs BEFORE Union}}
\notag\\&\leq \frac{1}{2}\log\left(\frac{4\pi\eu}{3} \right)+\frac{1}{2}\log\left(1+96 \cdot \frac{(1+1/2\ln(1+\min(\snr,\inr)))^2}{\gamma^2}\right).
\label{eq:sec:gaps:WeakType2:R1+R2 face take1}
\end{align}

\paragraph*{Outer Bound $\mathcal{R}_{2R_1+R_2}$}

With the corner point expressions in~\eqref{eq:outer bound in weak2 ALL CORNERS}
we write the outer bound in~\eqref{eq:outer bound in weak otherrate}
as
\begin{align}
 \mathcal{R}_{2R_1+R_2}^{(\text{\ref{sec: weak: weak type 2}})}
&= \bigcup_{t\in [0,1]}  \left \{ \begin{array}{l}
 R_1 \leq \frac{1-t}{2} \log \left(  \frac{\left(1+\frac{\snr}{1+\inr} \right)(1+\snr)}{1+\inr +\frac{\snr}{1+\inr}}\right)+\frac{t}{2} \log \left(1+\snr\right) \\
 \qquad  =:\mug \left( \snr_{4,a,t}\right)+ \mug\left(\frac{\snr}{1+\inr}\right)
 \\
 R_2 \leq  \frac{1-t}{2}\log \left(\frac{ \left(1+\inr+\frac{\snr}{1+\inr} \right)^3}{1+\snr} 
 \frac{(1+\inr) }{1+\inr+\snr} \right)+t c \\
 \qquad  =:\mug(\snr_{4,b,t}) + 
 \frac{1-t}{2} \log \left( \frac{ \left (1+\inr+\frac{\snr}{1+\inr} \right)(1+\inr) }{1+\inr+\snr} \right)+t c
\end{array}\right\},
\label{eq:outer bound in weak2 otherrate}
\end{align}
where $tc\leq c \leq \log(2)$, where the parameter $c$ is defined in~\eqref{eq:out:evenmoreweired}, and $0\leq t\leq 1$.

\paragraph*{Inner Bound for $\mathcal{R}_{2R_1+R_2}$}

In order to approximately achieve the points in $\mathcal{R}_{2R_1+R_2}$ in~\eqref{eq:outer bound in weak2 otherrate} we pick
\begin{subequations}
\begin{align}
N_1 &= \points\left(\snr_{4,a,t}\right), \
\snr_{4,a,t} :=\frac{1+\snr}{\left(1+\frac{\snr}{1+\inr}\right)^{t} \left( 1+\inr+\frac{\snr}{1+\inr}\right)^{1-t}}-1,
\label{eq:parameters 2R1+R2 face weak type 2: first choice N1}
\\
N_2 &= \points\left(\snr_{4,b,t}\right), \
\snr_{4,b,t} := \left( \frac{(1+\inr +\frac{\snr}{1+\inr})^{2} }{1+\snr} \right)^{1-t}-1,
\label{eq:parameters 2R1+R2 face weak type 2: first choice N2}
\\
\delta_1&:=\frac{1}{1+\inr},
\label{eq:parameters 2R1+R2 face weak type 2: first choice delta1}
\\
\delta_2&: \mug\left(\snr \delta_2\right) =\frac{1-t}{2} \log \left( \frac{1+\inr+\frac{\snr}{1+\inr}}{1+\frac{\snr}{1+\inr} }\right) 
\Longleftrightarrow
\delta_2= \left( \left( \frac{1+\inr+\frac{\snr}{1+\inr}}{1+\frac{\snr}{1+\inr} }\right)^{1-t}  -1\right)\frac{1}{\snr},
\label{eq:parameters 2R1+R2 face weak type 2: first choice delta2}
\end{align}
\label{eq:parameters 2R1+R2 face weak type 2: first choice ALL}
\end{subequations}
where in Appendix~\ref{sec:gaps:WeakType2:2R1+R2 face}, eq.\eqref{eq: d2bound {sec:gaps:WeakType2:2R1+R2 face}}, we show that
\begin{align*}
\delta_2 \leq \delta_1 = \frac{1}{1+\inr},
\end{align*}
as required for the achievable region in~\eqref{eq: rates mixed inputs BEFORE Union}.

\paragraph*{Gap for $\mathcal{R}_{2R_1+R_2}$}

The gap between the outer bound in~\eqref{eq:outer bound in weak otherrate} and achievable region in~\eqref{eq: rates mixed inputs BEFORE Union} with the parameters in~\eqref{eq:parameters 2R1+R2 face weak type 2: first choice ALL} is
\begin{align*}
\Delta_{R_1}&= \mug(\snr_{4,a,t}) + \mug \left(\frac{\snr}{1+\inr}\right)- \log (\points(\snr_{4,a,t}))-\mug(\snr \delta_1)+\Delta_{\eqref{eq: rates mixed inputs BEFORE Union}}
\\&\leq \log(2)+\Delta_{\eqref{eq: rates mixed inputs BEFORE Union}},
\end{align*}
and similarly we have that 
\begin{align*}
\Delta_{R_2}&= \mug(\snr_{4,b,t})+ \frac{1-t}{2} \log \left( \frac{ \left (1+\inr+\frac{\snr}{1+\inr} \right)(1+\inr) }{1+\inr+\snr} \right)+t c- \log (\points(\snr_{4,b,t}))-\mug(\snr \delta_2)+\Delta_{\eqref{eq: rates mixed inputs BEFORE Union}}
\\&\leq  \log(2) +\log(2)+\Delta_{\eqref{eq: rates mixed inputs BEFORE Union}},
\end{align*}
since $tc\leq c \leq \log(2)$, where the parameter $c$ is defined in~\eqref{eq:out:evenmoreweired}, and $0\leq t\leq 1$.

So, we are left with bounding $\Delta_{\eqref{eq: rates mixed inputs BEFORE Union}}$ which is related to the minimum distances of the sum-set constellations.  In Appendix~\ref{sec:gaps:WeakType2:2R1+R2 face} we show that
\begin{align}
\min_{i \in[1:2]}
\frac{d_{\min ({S}_i)}^2}{12}
&\geq \kappa_{\gamma,N_1,N_2}^2 \cdot   \frac{1}{8}, 
\label{eq: min  WeakType2:2R1+R2 face with loglog gap}
\end{align}
where $\kappa_{\gamma,N_1,N_2}$ is given in~\eqref{eq:achregion for par:moderate kappakappa definitiontobeusedlater}, and
that $\max(N_1^2,N_2^2)-1\leq \inr$;
with this, we finally get that the gap for this face is bounded by
\begin{align}
{\gap}_{\eqref{eq:gap weak2 otherrate take1}}
        &\leq \max(\Delta_{R_1},\Delta_{R_2}) = 2\log(2) +\Delta_{\eqref{eq: rates mixed inputs BEFORE Union}} 
\notag\\&\leq \frac{1}{2}\log\left(\frac{16\pi\eu}{3} \right)+\frac{1}{2}\log\left(1+32 \cdot \frac{(1+1/2\ln(1+\min(\snr,\inr)))^2}{\gamma^2}\right).
\label{eq:gap weak2 otherrate take1}
\end{align}


\paragraph*{Overall Gap for Weak2}
To conclude the proof for this sub-regime, the gap is the maximum between the gaps of the different faces and is given by
\begin{align}
{\gap}_{\eqref{eq: gap weak 2 take 1}}
=\min(
{\gap}_{\eqref{eq:sec:gaps:WeakType2:R1+R2 face take1}}, 
{\gap}_{\eqref{eq:gap weak2 otherrate take1}}
) = {\gap}_{\eqref{eq:gap weak2 otherrate take1}}.
\label{eq: gap weak 2 take 1}
\end{align}

\paragraph*{Another Overall Gap for Weak2}
The choice of the mixed input parameters according to the {\it discrete$\to$common~map} 
in~\eqref{eq:parameters R1+R2 face weak type 2 take1} and
in~\eqref{eq:sec:gaps:WeakType2:R1+R2 face take1} 
led to the $\loglog$ gap  in~\eqref{eq: gap weak 2 take 1}.
This is so because we used Proposition~\ref{prop:card:dmin:measurebound} to bound the minimum distance.
A interesting question is whether Proposition~\ref{prop:combNOOUTAGE} could be used, possibly with a different choice of  mixed input parameters.

With a gDoF-type analysis, one can show that it is possible to verify the condition in Proposition~\ref{prop:combNOOUTAGE} 
        with the proposed choice of parameters in~\eqref{eq:parameters R1+R2 face weak type 2 take1}
but not with the input parameters in~\eqref{eq:sec:gaps:WeakType2:R1+R2 face take1}.
So, in this regime we are motivated to look at the {\it discrete$\to$private~map} as we hope to get a constant gap result for the whole region.
In Appendix~\ref{app:sec:CONSTANT GAP FOR WEAK 2} we show that in this regime it is possible to use Proposition~\ref{prop:combNOOUTAGE} and the {\it discrete$\to$private~map} to get a constant gap, namely, 
\begin{align}
{\gap}_{\eqref{eq: gap weak 2}} 
        &\leq \frac{1}{2}\log \left(\frac{608 \ \pi \eu}{27}\right) \approx 3.79. 
\label{eq: gap weak 2}
\end{align}

\subsection{Very weak interference, i.e., $\inr(1+\inr)\leq \snr$}
\label{par:veryweak}
In this regime the capacity of the classical G-IC is achieved to within a constant gap by Gaussian inputs, treating interference as noise and power control. This strategy is compatible without the {TINnoTS} scheme (i.e., set $N_1=N_2=1$ and vary $\delta_1$ and $\delta_2$), so the gap of 
\begin{align}
{\gap}_{\eqref{eq: gap very weak}}  \leq 1/2~\text{bit},
\label{eq: gap very weak}
\end{align}
as shown in~\cite{etkin_tse_wang} holds.

This concludes the proof of Theorem~\ref{thm:Cap:gap sym}.
\end{IEEEproof}

\section{Gap for some Asymmetric Channels}\label{sec:cap within gap asymmetric}

In this Section we generalize the gap result of Theorem~\ref{thm:Cap:gap sym} to some general asymmetric settings.
\begin{thm}
\label{thm:Cap:gap some asymmetric}
For the general G-IC, except for the regime
\begin{subequations}
\begin{align}
     \frac{|h_{22}|^2}{1+|h_{21}|^2} < |h_{12}|^2 < \frac{|h_{22}|^2}{1+|h_{21}|^2}(1+|h_{11}|^2), \\
     \frac{|h_{11}|^2}{1+|h_{12}|^2} < |h_{21}|^2 < \frac{|h_{11}|^2}{1+|h_{12}|^2}(1+|h_{22}|^2),
\end{align}
\label{eq:def:open gen}
\end{subequations}
akin to the moderately weak interference regime for the symmetric setting,
the {TINnoTS} achievable region 
and the outer bound in Proposition~\ref{prop:ETWouter} 
are to within an additive gap that is either constant or of the order $O\left(\log \ln \left(\max(|h_{11}|^2,|h_{22}|^2)\right)\right)$.
\end{thm}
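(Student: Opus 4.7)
\begin{IEEEproof}[Proof Sketch]
The plan is to mirror the structure of the proof of Theorem~\ref{thm:Cap:gap sym}, treating each sub-regime of the outer bound in Proposition~\ref{prop:ETWouter} separately and carefully choosing the mixed-input parameters $[N_1,N_2,\delta_1,\delta_2]$ in~\eqref{eq:GICmixedinput} so that the rates in Proposition~\ref{prop:ach-with-mixedinput} approximately match the corner points of $\mathcal{R}_{\text{out}}$. First I would identify, as in the symmetric case, a ``very strong'' asymmetric regime $\{|h_{21}|^2 \geq |h_{11}|^2(1+|h_{22}|^2),\ |h_{12}|^2 \geq |h_{22}|^2(1+|h_{11}|^2)\}$ and a ``very weak'' asymmetric regime $\{|h_{21}|^2(1+|h_{21}|^2) \leq |h_{11}|^2,\ |h_{12}|^2(1+|h_{12}|^2) \leq |h_{22}|^2\}$. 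In the former, we set $\delta_1=\delta_2=0$ and $N_i=\points(|h_{ii}|^2)$; the non-overlap condition of Proposition~\ref{prop:combNOOUTAGE} is verified exactly as in the very-strong symmetric analysis (with the roles of $\snr$ and $\inr$ replaced by the asymmetric gains), yielding a constant gap. In the latter, $N_1=N_2=1$ and Gaussian inputs with power control achieve a constant gap by~\cite{etkin_tse_wang}.

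Next I would handle mixed (strong/weak) regimes that fall outside the excluded box in~\eqref{eq:def:open gen}. The excluded regime is precisely where \emph{both} $2R_1+R_2$ and $R_1+2R_2$ bounds~\eqref{eq:out:2r1r2}--\eqref{eq:out:r12r2} are simultaneously active together with a nontrivial $R_1+R_2$ bound from~\eqref{eq:R upper classical IC etw}; outside of it, the dominant corner points of $\mathcal{R}_{\text{out}}$ involve at most one of the $2R_u+R_{u'}$ constraints paired with cut-set bounds~\eqref{eq:R upper classical IC cuset r1}--\eqref{eq:R upper classical IC cuset r2} and the Kramer-type sum-rate bounds~\eqref{eq:R upper classical IC kra1}--\eqref{eq:R upper classical IC kra2}. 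For each such sub-regime (mixed strong/weak at each user), I would enumerate the corner points, express the dominant face parametrically in $t\in[0,1]$ as in~\eqref{eq:outer bound in weak ALL}, and then apply the \emph{discrete$\to$common~map}: identify the ``private'' rate contribution of user $i$ as $\mug\bigl(|h_{ii}|^2 \delta_i/(1+|h_{i'i}|^2\delta_{i'})\bigr)$, set $\delta_i$ so that the private rate equals either $\mug\bigl(|h_{ii}|^2/(1+|h_{ii'}|^2)\bigr)$ (at or below the noise floor of the non-intended receiver) or $R_i/2$ (the Ozarow-Wyner-B bottleneck), and choose $N_i=\points(\snr_{i,t})$ so that $\log N_i$ absorbs the remaining ``common'' rate, exactly as in~\eqref{eq:achregion for par:moderate choiceALL: R1+R2sumrate} and~\eqref{eq:achregion for par:moderate choiceALL: 2R1+R2}.

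The remaining step is to control $\Delta_{\eqref{eq: rates mixed inputs BEFORE Union}}$ by lower bounding $\min_i d_{\min({S}_i)}^2$ for the received constellations in~\eqref{eq:GICmixedinputinnerBound}. When the asymmetry happens to align with the non-overlap condition in~\eqref{eq:thm:comb conditions} (which can be verified by a gDoF-type computation on $|h_{i1}|^2(1-\delta_1)N_1^2$ versus $|h_{i2}|^2(1-\delta_2)$ at each receiver), Proposition~\ref{prop:combNOOUTAGE} gives a constant-gap bound; otherwise I invoke Proposition~\ref{prop:card:dmin:measurebound}, which yields the $O\bigl(\log\ln\max(|h_{11}|^2,|h_{22}|^2)\bigr)$ penalty through the factor $\kappa_{\gamma,N_1,N_2}^{-1}$, valid up to an outage set of Lebesgue measure $\gamma$. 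Taking the maximum gap across all sub-regimes and all faces completes the argument.

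The main obstacle will be the bookkeeping: in the symmetric case one needed only two subcases (Weak1/Weak2) and could exploit $N_1$--$N_2$ symmetry to halve the work. In the asymmetric setting the partition $\{|h_{ij}|^2\}\setminus\{\text{excluded regime}\}$ splits into several qualitatively different sub-regimes (e.g., user 1 in strong while user 2 in weak, both in moderately strong but on opposite sides of the Kramer bound, etc.), and for each one must separately verify the power-split constraint $\delta_i \leq 1/(1+|h_{i'i}|^2)$ and derive a closed-form lower bound on $d_{\min({S}_i)}^2$ analogous to~\eqref{eq:sec:gaps:WeakType1:R1+R2 face}--\eqref{eq:sec:gaps:WeakType1:2R1+R2 face}. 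I expect these computations to be tedious but not conceptually new; the reason the excluded regime in~\eqref{eq:def:open gen} is genuinely harder is that there both $2R_1+R_2$ and $R_1+2R_2$ bounds are simultaneously binding and the symmetric trick of using a single $t$-parametrization no longer suffices to match all corner points with a single educated guess for $(N_1,N_2,\delta_1,\delta_2)$.
\end{IEEEproof}
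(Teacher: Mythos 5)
Your overall plan is the paper's: partition the channel gains into very strong, strong, mixed and very weak sub-regimes, parametrize the dominant face of the relevant outer bound by $t\in[0,1]$, pick $(N_1,N_2,\delta_1,\delta_2)$ via the \emph{discrete$\to$common} map, and control $\min_i d_{\min(S_i)}$ with Proposition~\ref{prop:combNOOUTAGE} when the non-overlap condition holds and with Proposition~\ref{prop:card:dmin:measurebound} (whence the $O(\log\ln\max(|h_{11}|^2,|h_{22}|^2))$ term) otherwise. However, your regime bookkeeping has a concrete flaw: the asymmetric ``very weak'' regime is not $\{|h_{21}|^2(1+|h_{21}|^2)\leq|h_{11}|^2,\ |h_{12}|^2(1+|h_{12}|^2)\leq|h_{22}|^2\}$ but the TIN-optimal condition of~\cite{TINweakKuser}, namely $|h_{12}|^2\leq|h_{22}|^2/(1+|h_{21}|^2)$ and $|h_{21}|^2\leq|h_{11}|^2/(1+|h_{12}|^2)$ as in~\eqref{eq:def:veryweak gen}. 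This is not cosmetic: those thresholds are exactly the lower boundary of the excluded regime~\eqref{eq:def:open gen}, so with the correct definition the very weak region abuts the excluded box and Gaussian inputs with power control and TIN are within $1/2$~bit of~\eqref{eq:def:veryweakOuter gen}; with your definition the sub-regimes do not tile the complement of~\eqref{eq:def:open gen}, the leftover gains land in your generic ``mixed/weak'' bucket, and your assertion that outside~\eqref{eq:def:open gen} at most one $2R_u+R_{u'}$ bound is active is unsubstantiated — the paper instead covers the remaining cases with the compound-MAC region (strong) and the Z-channel outer bound~\eqref{eq:upper classical IC Z} on the sub-regime~\eqref{eq:def specialmixed in general}, where no $2R_u+R_{u'}$ bound and at most one power split are needed, and explicitly leaves the two-sided weak case aside (Remark~\ref{rem:whatNOTinasymcase}).

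A second, smaller step fails as written: in very strong interference the choice $N_i=\points(|h_{ii}|^2)$ does not by itself verify the non-overlap condition. The sufficient condition~\eqref{eq:very strong asym dminP2 sufficinet} carries the factor $N_1^2/(N_1^2-1)>1$, and because of the floor in $\points(\cdot)$ the product can exceed one (e.g., $|h_{11}|^2$ slightly below $8$ gives $N_1=2$ and a factor about $1.18$, which $N_2^2-1\approx|h_{22}|^2$ cannot absorb). The paper fixes this by shrinking to $N_i=\points(\beta|h_{ii}|^2)$ with $\beta\approx0.83$, paying only an extra $\tfrac{1}{2}\log(1/\beta)$ in the gap; without such a shrink you would have to fall back on Proposition~\ref{prop:card:dmin:measurebound}, needlessly degrading the claimed constant gap in very strong interference to a $\log\ln$ gap with an outage set. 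Both issues are repairable within your framework, but as sketched the proof neither covers the claimed set of channel gains nor establishes the very strong case as asserted.
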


\begin{rem}[Why is the regime in~\eqref{eq:def:open gen} excluded?]
\label{rem:whatNOTinasymcase}
The regime identified in~\eqref{eq:def:open gen} involves numerous special cases,
whose analysis gets very tedious. 
We do however strongly believe that our gap result generalizes to this regime as well,
by using similar arguments to those developed so far.
We note that the analysis in the rest of this section for
the general asymmetric setting (which is characterized by four channel parameters)
is restricted to those cases where it suffices to consider at most one rate split
(thus reducing the number of parameters to be optimize for the mixed inputs)
and for which the approximately optimal rate region does not require bounds on $2R_1+R_2$ or $R_1+2R_2$
(thus reducing the achievability to the sum-capacity dominant face only).
\end{rem}

\begin{IEEEproof}
We shall treat different regimes separately in the rest of the section.

\subsection{Very Strong Interference}
\label{rem:very strong asym}
In the general asymmetric case, the very strong interference regime is the regime in which 
a receiver can decode the interfering message while treating its intended signal as noise 
at a higher rate than the intended receiver in the absence of interference;
this is the case when the channel gains satisfy~\cite{sato_strong}
\begin{subequations}
\begin{align}
&|h_{11}|^2(1+|h_{22}|^2) \leq |h_{21}|^2, 
\label{eq:def:verystrongRx1 gen}
\\
&|h_{22}|^2(1+|h_{11}|^2) \leq |h_{12}|^2.
\label{eq:def:verystrongRx2 gen}
\end{align}
\label{eq:def:verystrong gen}
\end{subequations}

\paragraph*{Outer Bound}
The capacity region of the classical G-IC in very strong interference coincides with that of two interference-free point-to-point links given by
\begin{align}
\mathcal{R}_{\text{out}}^{(\text{\ref{rem:very strong asym}})} 
&= \left \{ 
\begin{array}{l}
0\leq R_1 \leq \mug \left(|h_{11}|^2\right)
\\
0\leq R_2 \leq \mug \left(|h_{22}|^2\right)
\end{array} \right\}.
\label{eq:def:verystrongOuter gen}
\end{align}

\paragraph*{Inner Bound}
The outer bound in~\eqref{eq:def:verystrongOuter gen} can be matched to within a constant gap by our {TINnoTS} scheme by choosing, similarly to the symmetric case discussed in Section~\ref{par:verystrong}, the mixed inputs in~\eqref{eq:GICmixedinput} with
\begin{subequations}
\begin{align}
&N_1=\points\left( \beta |h_{11}|^2\right) : N_1^2 - 1 \leq  \beta |h_{11}|^2,
\label{eq:def:verystrongChoiseN1 gen}
\\
&N_2=\points\left( \beta |h_{22}|^2\right) : N_2^2 - 1 \leq  \beta |h_{22}|^2,
\label{eq:def:verystrongChoiseN2 gen}
\\
&\delta_1=0,
\\
&\delta_2=0,
\end{align}
for some $ \beta \leq 1.$
\label{eq:def:verystrongChoiseALL gen}
\end{subequations}
The reason for the factor $ \beta $ in~\eqref{eq:def:verystrongChoiseALL gen} will be clear shortly (in Appendix~\ref{app:sec:CONSTANT GAP FOR WEAK 2} we use $ \beta =3/4$ for similar reasons; we could have used here  $ \beta =3/4$ as well, but we will find next a value that gives a smaller gap).
 
We next show that Proposition~\ref{prop:combNOOUTAGE} is applicable for the choice of mixed input parameters as in~\eqref{eq:def:verystrongChoiseALL gen}. In particular, we aim to show that 
\begin{subequations}
\begin{align}
&N_1 |h_{11}| d_{\min(X_1)} \leq  |h_{12}| d_{\min(X_2)} \ \text{(for the received sum-set constellation at receiver~1)}
\\ 
&N_2 |h_{22}| d_{\min(X_2)} \leq  |h_{21}| d_{\min(X_1)} \ \text{(for the received sum-set constellation at receiver~2)},
\end{align}
\label{eq:very strong asym dminP2 starting point}
\end{subequations}
or equivalently that
\begin{subequations}
\begin{align}
&\frac{N_1^2}{N_1^2-1} \cdot \frac{|h_{11}|^2}{1+|h_{11}|^2} \cdot  \frac{N_2^2-1}{|h_{22}|^2} \leq  \frac{|h_{12}|^2}{|h_{22}|^2(1+|h_{11}|^2)},
\\ 
&\frac{N_2^2}{N_2^2-1} \cdot \frac{|h_{22}|^2}{1+|h_{22}|^2} \cdot  \frac{N_1^2-1}{|h_{11}|^2} \leq  \frac{|h_{21}|^2}{|h_{11}|^2(1+|h_{22}|^2)}.
\end{align}
\label{eq:very strong asym dminP2 equivalent}
\end{subequations}
The condition in~\eqref{eq:very strong asym dminP2 equivalent} is verified, given the channel gain relationship in~\eqref{eq:def:verystrong gen}, if
\begin{subequations}
\begin{align}
&\frac{N_1^2}{N_1^2-1} \cdot \frac{|h_{11}|^2}{1+|h_{11}|^2} \cdot  \frac{N_2^2-1}{|h_{22}|^2} \leq 1,
\\ 
&\frac{N_2^2}{N_2^2-1} \cdot \frac{|h_{22}|^2}{1+|h_{22}|^2} \cdot  \frac{N_1^2-1}{|h_{11}|^2} \leq 1.
\end{align}
\label{eq:very strong asym dminP2 sufficinet}
\end{subequations}
It can be easily seen that $ \beta =0.8277$ satisfies~\eqref{eq:very strong asym dminP2 sufficinet} whenever $2\leq \min(N_1,N_2)$.
For the found $ \beta $ we therefore have that the received constellations have $|{S}_1| =|{S}_2| = N_1 N_2$  equally likely points
and minimum distance
\begin{align}
&\min_{i\in[1:2]}  \frac{ d_{\min({S}_{i})}^2 }{12}  
= \min\left(
\frac{|h_{11}|^2}{N_1^2-1} ,  \frac{|h_{22}|^2}{N_2^2-1} \right) \geq \frac{1}{ \beta }.
\end{align}

Thus, by following similar steps as in Section~\ref{par:verystrong}, the achievable region becomes 
\begin{subequations}
\begin{align}
\mathcal{R}_{\text{in}}^{(\text{\ref{rem:very strong asym}})} 
&= \left \{ 
\begin{array}{l}
0\leq R_1 \leq r_1
\\
0\leq R_2 \leq r_2
\end{array} \right\}
\ \text{such that}
\\
r_1
  &\geq 
   \mud \left({S}_1 \right)
- \min \left( \log(N_2), \mug \left(|h_{12}|^2 \right) \right)
\geq \log(N_1) - {\Delta}_{\eqref{eq:very strong asym achreg}},
\\
r_2
  &\geq  \mud \left({S}_2 \right)
- \min \left( \log(N_1), \mug \left(|h_{21}|^2 \right) \right)
\geq \log(N_2) - {\Delta}_{\eqref{eq:very strong asym achreg}},
\\
{\Delta}_{\eqref{eq:very strong asym achreg}} 
 &\leq \frac{1}{2}\log\left(\frac{2\pi\eu}{12}\right)
 +    \frac{1}{2}\log\left(1+\frac{12}{\min_{i\in[1:2]}d_{\min({S}_i)}^2}\right)
 \leq \frac{1}{2}\log\left(\frac{2\pi\eu}{12}( \beta +1)\right).
\end{align}
\label{eq:very strong asym achreg}
\end{subequations}

\paragraph*{Gap} 
We can easily see, by comparing the inner bound in~\eqref{eq:very strong asym achreg}
with the outer bound in~\eqref{eq:def:verystrongOuter gen},
that for the general asymmetric G-IC in very strong interference the {TINnoTS} region is optimal to within
\begin{align}
{\gap}_{\eqref{eq:very strong asym gap}}
&\leq
{\Delta}_{\eqref{eq:very strong asym achreg}}
+\log(2)+\frac{1}{2}\log\left(\frac{1}{ \beta }\right)
\notag\\
&\leq \frac{1}{2}\log\left(\frac{2\pi\eu}{3}\ \frac{1+ \beta }{ \beta }\right) 
\stackrel{ \beta =0.8277}\approx 1.8260~\text{bits},
\label{eq:very strong asym gap}
\end{align}
where the term $\log(2)$ is the integrality gap and the term $\frac{1}{2}\log\left(\frac{1}{ \beta }\right)$ because of the reduced number of points in~\eqref{eq:def:verystrongChoiseALL gen}.

\subsection{Strong (but not Very Strong) Interference}
\label{rem:strong asym}
\begin{subequations}
For the general asymmetric case,  the strong interference regime is defined as
\begin{align}
|h_{21}|^2 &\geq |h_{11}|^2, 
\label{eq:def strong but not very strong in general user1}
\\
|h_{12}|^2 &\geq |h_{22}|^2. 
\label{eq:def strong but not very strong in general user2}
\end{align}
The strong (but not very strong) interference regime is the set of channel gains that satisfy the condition in~\eqref{eq:def strong but not very strong in general} but not the condition in~\eqref{eq:def:verystrong gen}.
\label{eq:def strong but not very strong in general}
\end{subequations}

\paragraph*{Outer Bound}
The capacity region of the general G-IC in the strong interference regime 
is given by the `compound MAC' region
\begin{align}
\mathcal{R}_{\text{out}}^{(\text{\ref{rem:strong asym}})} 
&= \left \{ 
\begin{array}{l}
0 \leq R_1 \leq \mug \left(|h_{11}|^2\right)
\\
0 \leq R_2 \leq \mug \left(|h_{22}|^2\right)
\\ 
R_1+R_2 \leq  \mug \left(\min \left(|h_{11}|^2+|h_{12}|^2,|h_{22}|^2+|h_{21}|^2\right) \right)
\end{array} \right\}.
\label{eq: cap: assymetric: strong int}
\end{align}

\paragraph*{Inner Bound}
The outer bound in~\eqref{eq: cap: assymetric: strong int} can be matched to within a gap by our {TINnoTS} scheme by choosing, similarly to the symmetric case discussed in detail in Section~\ref{par:strong}, the parameters of the mixed inputs as
\begin{subequations}
\begin{align}
&N_1=\points\left(\snr_{5,a,t}\right), \notag\\&
\snr_{5,a,t}= (1+|h_{11}|^2)^{1-t} \left(\frac{1+\min \left(|h_{11}|^2+|h_{12}|^2,|h_{22}|^2+|h_{21}|^2\right)}{1+|h_{22}|^2} \right)^t-1 \leq |h_{11}|^2, 
\label{eq:def:strongChoiseN1 gen}
\\
&N_2=\points\left(\snr_{5,b,t}\right), \notag\\&
\snr_{5,b,t}= (1+|h_{22}|^2)^{t} \left(\frac{1+\min \left(|h_{11}|^2+|h_{12}|^2,|h_{22}|^2+|h_{21}|^2\right)}{1+|h_{11}|^2} \right)^{1-t}-1 \leq |h_{22}|^2,
\label{eq:def:strongChoiseN2 gen}
\\
&\delta_1=0,
\\
&\delta_2=0,
\end{align}
where the upper bounds on $\snr_{5,a,t}$ and $\snr_{5,b,t}$ are a consequence of not being in very strong interference, i.e.,
\[
\min \left(1+|h_{11}|^2+|h_{12}|^2,1+|h_{22}|^2+|h_{21}|^2 \right) \leq (1+|h_{11}|^2)(1+|h_{22}|^2).
\]
\end{subequations}
Next, by using Proposition~\ref{prop:card:dmin:measurebound} we have
\begin{subequations}
\begin{align}
&\frac{d_{\min({S}_1)}^2}{12} \geq \kappa_{\gamma,N_1,N_2}^2\min\left(\frac{|h_{11}|^2}{N_1^2-1},\frac{ |h_{12}|^2}{N_2^2-1}, 
\max \left(\frac{|h_{12}|^2}{ N_1^2(N_2^2-1) },\frac{|h_{11}|^2}{ N_2^2(N_1^2-1) } \right)\right),
\label{eq:achregion for par:strong dminS1lowerbound}
\\
&\frac{d_{\min({S}_2)}^2}{12} \geq \kappa_{\gamma,N_1,N_2}^2 \min\left(\frac{|h_{21}|^2}{N_1^2-1},\frac{ |h_{22}|^2}{N_2^2-1}, \max \left(\frac{|h_{21}|^2 }{ N_1^2(N_2^2-1) },\frac{|h_{22}|^2}{ N_2^2(N_1^2-1) } \right)\right),
\label{eq:achregion for par:strong dminS2lowerbound}
\end{align}
where the bounds in~\eqref{eq:achregion for par:strong dminS2lowerbound qwertyuiop}
hold up to a set of measure $\gamma$ and where $\kappa_{\gamma,N_1,N_2}$ is defined in~\eqref{eq:achregion for par:moderate kappakappa definitiontobeusedlater}.
\label{eq:achregion for par:strong dminS2lowerbound qwertyuiop}
\end{subequations}
By recalling the channel gain relationship, by noting that 
\begin{align*}
N_1^2N_2^2 -1 \leq \min \left(|h_{11}|^2+|h_{12}|^2,|h_{22}|^2+|h_{21}|^2\right)
\end{align*}
and by combining the two bounds in~\eqref{eq:achregion for par:strong dminS2lowerbound qwertyuiop} we get
\begin{align*}
\min_{i\in[1:2]} \frac{d_{\min({S}_i)}^2}{12}
\geq \min\left(1, \frac{\max(|h_{11}|^2,|h_{12}|^2)}{|h_{11}|^2+|h_{12}|^2}, \frac{\max(|h_{21}|^2,|h_{22}|^2)}{|h_{22}|^2+|h_{21}|^2} \right) \geq \frac{1}{2}.
\end{align*}

\paragraph*{Gap}
By following the same reasoning and bounding steps as we did for the symmetric case in Section~\ref{par:strong}, we get that the proposed achievable scheme is optimal to within a gap of 
\begin{align}
{\gap}_{\eqref{eq:gap asym strong not very strong}}\leq
\frac{1}{2}\log\left(\frac{2\pi\eu}{3} \right)+
\frac{1}{2}\log\left(1+8 \cdot \frac{\left(1+ 1/2\ln\left(1+\max\big(|h_{11}|^2,|h_{22}|^2\big)\right) \right)^2}{\gamma^2}
\right)~\text{bits}.
\label{eq:gap asym strong not very strong}
\end{align}

\subsection{Mixed Interference}
\label{rem:mixed asym}
The mixed interference regime occurs when one receiver experiences strong interference while the other experiences weak interference.
This regime does not appear in the symmetric case, where both receiver are either in strong interference or in weak interference.
The mixed interference is defined as~\cite{etkin_tse_wang}
\begin{subequations}
\begin{align}
\text{either} \ \ 
\big\{ |h_{21}|^2 \geq |h_{11}|^2, \ |h_{12}|^2 \leq |h_{22}|^2  \big\},   
\label{eq:def mixed in general user1}
\\
\text{or} \ \ 
\big\{ |h_{21}|^2 \leq |h_{11}|^2, \ |h_{12}|^2 \geq |h_{22}|^2  \big\}.    
\label{eq:def mixed in general user2}
\end{align}
\label{eq:def mixed in general}
\end{subequations}
In this Section we shall only focus on the sub-regime
\begin{align}
|h_{21}|^2 \geq \frac{|h_{11}|^2}{1+|h_{12}|^2}(1+|h_{22}|^2), \ |h_{12}|^2 \leq |h_{22}|^2,
\label{eq:def specialmixed in general}
\end{align}
for which the rate region, as we shall see, does not require bounds on $2R_1+R_2$ or $R_1+2R_2$.
The regime $|h_{12}|^2 \geq \frac{|h_{22}|^2}{1+|h_{21}|^2}(1+|h_{11}|^2), \ |h_{21}|^2 \leq |h_{11}|^2$
can be analyzed similarly by swapping the role of the users.

\paragraph*{Outer Bound}
An outer bound to the capacity region of the general G-IC when~\eqref{eq:def specialmixed in general} holds is given by the `Z-channel' outer bound~\cite{sason}
\begin{align}
\mathcal{R}_{\text{out}}^{(\text{\ref{rem:mixed asym}})} 
&= \left \{ 
\begin{array}{l}
0\leq R_1 \leq \mug \left(|h_{11}|^2\right) \\
0\leq R_2 \leq \mug \left(|h_{22}|^2\right) \\
R_1+R_2   \leq \mug\left(|h_{22}|^2\right)+\mug\left(\frac{|h_{11}|^2}{1+|h_{12}|^2}\right) \\
\end{array} \right\} \notag\\
&= \bigcup_{t\in[0,1]}\left \{ 
\begin{array}{l}
0 \leq R_1 \leq (1-t)\mug \left(|h_{11}|^2\right)+t \mug\left(\frac{|h_{11}|^2}{1+|h_{12}|^2}\right) \\
\quad = \mug(\snr_{6,a,t}) \\
0 \leq R_2 \leq (1-t)\left( \mug\left(|h_{22}|^2\right) - \mug\left(|h_{11}|^2\right)
+ \mug\left(\frac{|h_{11}|^2}{1+|h_{12}|^2}\right) \right)+t  \mug(|h_{22}|^2) \\
\quad = \mug(\snr_{6,b,t}) + \frac{1}{2}\log\left(\frac{1+|h_{22}|^2}{1+|h_{12}|^2}\right)\\
\end{array} \right\}.
\label{eq:upper classical IC Z}
\end{align}

\paragraph*{Inner Bound}
The shape of the outer bound in~\eqref{eq:upper classical IC Z} suggests that a matching, to within a gap, inner region could be found by following steps similar to those used for the analysis of the strong interference regime (i.e., parameterize the points on the dominate sum-capacity face). The difference between this sub-regime  
and the strong interference regime is that here $R_2$ should be a combination of common and private rates
because receiver~1 experiences weak interference (while receiver~2 experiences strong interference). 
Note that the interfering channel gain at receiver~2, $h_{21}$, does not appear in the outer bound in~\eqref{eq:upper classical IC Z}.
We therefore set 
\begin{subequations}
\begin{align}
&N_1=\points\left(\snr_{6,a,t}\right) : 
\snr_{6,a,t}= \left(1+\frac{|h_{11}|^2}{1+|h_{12}|^2}\right)^{t}\left(1+|h_{11}|^2\right)^{1-t}-1
         \leq |h_{11}|^2,
\label{eq:def:mixedChoiseN1 gen}
\\
&N_2=\points\left(\snr_{6,b,t}\right) : 
\snr_{6,b,t}= \left(1+|h_{12}|^2\right)^{t} \left(1+\frac{|h_{12}|^2}{1+|h_{11}|^2}\right)^{1-t} -1
         \leq |h_{12}|^2,
\label{eq:def:mixedChoiseN2 gen}
\\
&\delta_1=0,
\label{eq:def:mixedChoised1 gen}
\\
&\delta_2=\frac{1}{1+|h_{12}|^2},
\label{eq:def:mixedChoised2 gen}
\end{align}
\label{eq:def:mixedChoiseALL gen}
\end{subequations}
in the achievable region in Proposition~\ref{prop:ach-with-mixedinput}, which becomes
\begin{subequations}
\begin{align}
\mathcal{R}^{(\text{\ref{par:mixed}})}_{\text{in}} 
&= \left \{ 
\begin{array}{l}
0\leq R_1 \leq \log(N_1)-\Delta_{(\text{\ref{par:mixed}})}
\\
0\leq R_2 \leq \log(N_2)
+ \mug \left(\frac{|h_{22}|^2}{1+|h_{12}|^2} \right)
-\Delta_{(\text{\ref{par:mixed}})}
\end{array} \right\},
\label{eq:achregion for mixed}
\\
\Delta_{(\text{\ref{par:mixed}})}&
= \frac{1}{2}\log\left(\frac{2\pi\eu}{12}\right) 
+ \frac{1}{2}\log\left(1+\frac{12}{\min_{i\in[1:2]} d_{\min(S_i)}^2}\right),
\label{par:mixed Delta expression}
\\
{S}_1 &= \frac{1}{\sqrt{1+ \frac{|h_{12}|^2}{1+|h_{12}|^2}}}\left( h_{11}X_{1D}+\sqrt{ \frac{|h_{12}|^2}{1+|h_{12}|^2}}h_{12}X_{2D}\right),
\label{par:mixed S1 expression}
\\
{S}_2 &= \frac{1}{\sqrt{1+|h_{22}|^2\frac{1}{1+|h_{12}|^2}}}\left( h_{21}X_{1D}+\sqrt{ \frac{|h_{12}|^2}{1+|h_{12}|^2}}h_{22}X_{2D}\right). 
\label{par:mixed S2 expression}
\end{align}
\label{par:mixed}
\end{subequations}
Next, by using Proposition~\ref{prop:card:dmin:measurebound}, we bound the minimum distance of the received constellations ${S}_1$ and ${S}_2$ as
\begin{align*}
\frac{d_{\min({S}_1)}^2}{12 \ \kappa_{\gamma,N_1,N_2}^2} 
&
\geq 
\frac{1}{1+|h_{12}|^2 \delta_2}\min\left(\frac{|h_{11}|^2}{N_1^2-1},\frac{ (1-\delta_2) |h_{12}|^2}{N_2^2-1}, \max \left(\frac{(1-\delta_2)|h_{12}|^2}{ N_1^2(N_2^2-1) },\frac{|h_{11}|^2}{ N_2^2(N_1^2-1) } \right)\right)
\\&
\stackrel{\rm (a)}{\geq} 
\frac{1}{1+|h_{12}|^2 \delta_2}\min\left(1,(1-\delta_2), \frac{\max \left((1-\delta_2)|h_{12}|^2,|h_{11}|^2\right)}{N_1^2N_2^2-1}\right)
\\&
\geq 
\frac{1-\delta_2}{1+|h_{12}|^2 \delta_2}\min\left(1, \frac{\max \left(|h_{12}|^2,|h_{11}|^2\right)}{N_1^2N_2^2-1}\right)
\\&
\stackrel{\rm (b)}{\ge} 
\frac{|h_{12}|^2}{2+|h_{12}|^2}\min\left(1,\frac{\max \left(|h_{12}|^2,|h_{11}|^2\right)}{|h_{12}|^2+|h_{11}|^2}\right)
\\& \stackrel{\rm (c)}{\ge} 
\frac{1}{3}\min\left(1,\frac{1}{2} \right) = \frac{1}{6},
\end{align*}
where the inequalities follow since:
(a) by using the bounds in~\eqref{eq:def:mixedChoiseN1 gen} and~\eqref{eq:def:mixedChoiseN2 gen}, 
(b) because $N_1^2 N_2^2-1 \leq |h_{12}|^2+|h_{11}|^2$ from~\eqref{eq:def:mixedChoiseN1 gen} and~\eqref{eq:def:mixedChoiseN2 gen}, and
(c) by assuming $|h_{12}|^2 \geq 1$.

Similarly we have that 
\begin{align*}
\frac{d_{\min({S}_2)}^2}{12 \ \kappa_{\gamma,N_1,N_2}^2} 
&\geq 
\frac{1}{1+|h_{22}|^2\delta_2} \min\left(\frac{|h_{21}|^2}{N_1^2-1},\frac{ (1-\delta_2)|h_{22}|^2}{N_2^2-1}, \max \left(\frac{|h_{21}|^2 }{ N_2^2(N_1^2-1) },\frac{(1-\delta_2)|h_{22}|^2}{ N_1^2(N_2^2-1) } \right)\right)
\\&
\stackrel{\rm (a)}{\geq} 
\frac{1-\delta_2}{1+|h_{22}|^2\delta_2} \min\left(\frac{|h_{21}|^2}{|h_{11}|^2},\frac{ |h_{22}|^2}{|h_{12}|^2}, \frac{ \max (|h_{21}|^2 ,|h_{22}|^2)}{ |h_{12}|^2+|h_{11}|^2 }\right)
\\&
\stackrel{\rm (b)}{\geq} 
\frac{1-\delta_2}{1+|h_{22}|^2\delta_2} 
\min\left( \frac{1+|h_{22}|^2}{1+|h_{12}|^2},\frac{|h_{22}|^2}{|h_{12}|^2}, 
\frac{ \max \left( |h_{11}|^2\frac{1+|h_{22}|^2}{1+|h_{12}|^2},|h_{12}|^2 \frac{|h_{22}|^2}{|h_{12}|^2} \right)}
{ |h_{12}|^2+|h_{11}|^2 }\right)
\\&
\stackrel{\rm (b)}{\geq} 
 \frac{1-\delta_2}{1+|h_{22}|^2\delta_2} \
 \frac{1+|h_{22}|^2}{1+|h_{12}|^2} \ \frac{1}{2}
\\&
= 
 \frac{1+|h_{22}|^2}{1+|h_{12}|^2+|h_{22}|^2 } \
 \frac{|h_{12}|^2}{1+|h_{12}|^2} \ \frac{1}{2}
\\&
\stackrel{\rm (c)}{\geq} 
\frac{2}{3} \ \frac{1}{2} \ \frac{1}{2} = \frac{1}{6},
\end{align*}
where the inequalities follow since:
(a) by using the bounds in~\eqref{eq:def:mixedChoiseN1 gen} and~\eqref{eq:def:mixedChoiseN2 gen} and because $N_1^2 N_2^2-1 \leq |h_{12}|^2+|h_{11}|^2$, 
(b) by the channel gain relationship in~\eqref{eq:def specialmixed in general}, and 
(c) by assuming $1 \leq |h_{12}|^2$ and since by assumption of this regime $|h_{12}|^2 \leq |h_{22}|^2$.
Note that the assumption $1 \leq |h_{12}|^2$ is without loss of generality since if
$|h_{12}|^2 < 1$ (i.e., interference below the noise floor of the receiver) then TIN with Gaussian inputs achieves the capacity outer bound
(in this case essentially two interference-free point-to-point links) to within 1/2~bit.

This shows that
\begin{align}
\min_{i \in[1:2]}
\frac{d_{\min ({S}_i)}^2}{12}
&\geq \kappa_{\gamma,N_1,N_2}^2 \cdot \frac{1}{6},
\label{eq:dminSi mixed asym}
\end{align}
up to an outage set of measure no more than $\gamma$, where $\gamma$ affects $\kappa_{\gamma,N_1,N_2}$.

\paragraph*{Gap}
By following the same reasoning and bounding steps as we did for the symmetric case, 
we get that the proposed achievable scheme is optimal to within a gap of 
\begin{align*}
\Delta_{R_1}
&\leq
\mug(\snr_{6,a,t})-\log(\points(\snr_{6,a,t}))+\Delta_{(\text{\ref{par:mixed}})}
\\&
\leq \Delta_{(\text{\ref{par:mixed}})}+\log(2),
\\
\Delta_{R_2}
&\leq
\mug(\snr_{6,b,t}) + \frac{1}{2}\log\left(\frac{1+|h_{22}|^2}{1+|h_{12}|^2}\right)
-\log(\points(\snr_{6,b,t}))
-\mug \left(\frac{|h_{22}|^2}{1+|h_{12}|^2} \right)
+\Delta_{(\text{\ref{par:mixed}})}
\\&
\leq \Delta_{(\text{\ref{par:mixed}})}+\log(2),
\end{align*}
where we used the fact that $\log(\points(x)) \geq \mug(x) - \log(2)$.
By including the minimum distance bound in~\eqref{eq:dminSi mixed asym}
into the expression for $\Delta_{(\text{\ref{par:mixed}})}$ in~\eqref{par:mixed Delta expression},
and by noticing that $\max(N_1^2,N_2^2)-1 \leq \max(|h_{11}|^2,|h_{12}|^2) \leq \max(|h_{11}|^2,|h_{22}|^2)$ 
by the channel gain relationship in~\eqref{eq:def specialmixed in general},
we finally get
\begin{align}
{\gap}_{\eqref{eq:gap asym smixed not all though}}
&\leq
  \frac{1}{2}\log\left(\frac{2\pi\eu}{3}\right) 
+ \frac{1}{2}\log\left(1+\frac{6}{\kappa_{\gamma,N_1,N_2}^2}\right),
\notag\\&\leq
\frac{1}{2}\log\left(\frac{2\pi\eu}{3} \right)+
\frac{1}{2}\log\left(1+24 \cdot \frac{\left(1+ 1/2\ln\left(1+\max\big(|h_{11}|^2,|h_{22}|^2\big)\right) \right)^2}{\gamma^2}
\right). 
\label{eq:gap asym smixed not all though}
\end{align}

\subsection{Weak Interference}
\label{rem:mod weak asym}

For the general asymmetric G-IC, the weak interference is defined as 
\begin{subequations}
\begin{align}
|h_{21}|^2 \leq |h_{11}|^2,     
\label{eq:def weak in general user1}
\\
|h_{12}|^2 \leq |h_{22}|^2,   
\label{eq:def weak in general user2}
\end{align}
which involves numerous special cases
whose analysis gets very tedious and is outside of the scope of this paper -- see also Remark~\ref{rem:whatNOTinasymcase}.
\label{eq:def weak in general}
\end{subequations}

\subsection{Very Weak Interference}
\label{rem:very weak asym}
The very weak interference regime characterized in as~\cite{TINweakKuser} is defined as
\begin{subequations}
\begin{align}
     |h_{12}|^2  \leq \frac{|h_{22}|^2}{1+|h_{21}|^2}, \\
     |h_{21}|^2  \leq \frac{|h_{11}|^2}{1+|h_{12}|^2}. 
\end{align}
\label{eq:def:veryweak gen}
\end{subequations}
In this regime, the outer bound to the capacity region of the classical G-IC is
\begin{align}
\mathcal{R}^{(\text{\ref{rem:very weak asym}})}_{\text{out}}
&= \left \{ 
\begin{array}{l}
R_1 \leq \mug \left(|h_{11}|^2\right)
\\
R_2 \leq \mug \left(|h_{22}|^2\right)
\\
R_1+R_2 \leq
  \mug\left(|h_{12}|^2+\frac{|h_{11}|^2}{1+|h_{21}|^2}\right)
+ \mug\left(|h_{21}|^2+\frac{|h_{22}|^2}{1+|h_{12}|^2}\right)
\end{array} \right\}
\label{eq:def:veryweakOuter gen}
\end{align}
and is achievable to within $1/2$~bit by Gaussian inputs with power control and TIN. 
Since the optimal strategy for the classical G-IC is compatible with our {TINnoTS} with mixed inputs,
we conclude that a mixed-input is optimal to within $1/2$~bit in this regime.

This concludes the proof of Theorem~\ref{thm:Cap:gap some asymmetric}.
\end{IEEEproof}

\section{{TINnoTS} is gDoF optimal}
\label{sec:gDoF:cap}
In this section we show one of the consequences of Theorem~\ref{thm:Cap:gap sym}, namely that {TINnoTS} is gDoF optimal almost surely.
The notion of gDoF has been introduced in~\cite{etkin_tse_wang} and has become an important metric that sheds lights on the behavior of the capacity  when exact capacity results are not available. The gDoF region is the set    
\begin{align}
\mathcal{D} := \left \{ (d_1,d_2) \in\mathbb{R}^2_+: 
d_i :=\lim_{\tiny
\begin{array}{l}
\inr=\snr^\alpha, \\
\snr \to \infty \\
\end{array}
} \frac{R_i}{\frac{1}{2}\log(1+\snr)},  i \in [1:2], \ (R_1,R_2) \ \text{is achievable}  \right\}.
\label{eq:low:gDofregion}
\end{align}
The $\loglog$ additive gap result of Theorem~\ref{thm:Cap:gap sym} implies that: 
\begin{thm}
\label{thm:gDoF:cap}
For the symmetric G-IC the {TINnoTS} achievable scheme with mixed inputs
is gDoF optimal for all channel gains up to a set of zero measure.
\end{thm}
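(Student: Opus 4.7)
\begin{IEEEproof}[Proof sketch]
The plan is to derive Theorem~\ref{thm:gDoF:cap} as a direct corollary of Theorem~\ref{thm:Cap:gap sym} by normalizing all rate bounds by $\frac{1}{2}\log(1+\snr)$ and passing to the limit $\snr\to\infty$ along $\inr=\snr^\alpha$.

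First, I would normalize the outer bound in Proposition~\ref{prop:ETWouter}: each $\mug(\cdot)$ term becomes a piecewise-linear function of $\alpha$ in the limit, recovering the standard gDoF outer region for the symmetric G-IC (the familiar W-curve shape).

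Second, I would normalize the TINnoTS achievable region. The key observation is that in all five regimes of Theorem~\ref{thm:Cap:gap sym} the additive gap is either a constant (independent of $\snr,\inr$) or of the order $\loglog$. In both cases the gap is $o(\log\snr)$, so dividing by $\frac{1}{2}\log(1+\snr)$ and letting $\snr\to\infty$ with $\inr=\snr^\alpha$ forces the normalized gap to vanish. Hence the inner-bound gDoF region equals the outer-bound gDoF region at every channel realization for which the gap bound of Theorem~\ref{thm:Cap:gap sym} applies, which establishes the claim except possibly on the outage set.

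Third, and this is the main technical point, I would argue that the exceptional set has Lebesgue measure zero. The moderately weak Type~1 and strong regimes carry an outage set $E(\gamma)$ of Lebesgue measure at most $\gamma$, with $\gamma>0$ a free parameter. The plan is to pick a sequence $\gamma_n\downarrow 0$ and consider $E_\infty := \bigcap_n E(\gamma_n)$, which has Lebesgue measure zero; for any channel gain outside $E_\infty$ the bound applies with some finite $\gamma_n$ and the gDoF equality follows. A subtlety is that $E(\gamma)$ inherits its definition from Proposition~\ref{prop:card:dmin:measurebound} and therefore depends on the mixed-input parameters $(N_1,N_2)$, which in turn depend on $\snr$ and on a time-sharing-like parameter $t\in[0,1]$. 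I would handle this by observing that $(N_1,N_2)\in\mathbb{N}^2$ takes countably many values, that the rate expressions are continuous in $t$ so it suffices to consider $t$ in a countable dense subset of $[0,1]$, and that a countable union of Lebesgue-null sets is again Lebesgue-null. Proposition~\ref{cor: mini dist 0} provides the final ingredient, certifying that the set on which the received sum-set constellation truly degenerates is itself of Lebesgue measure zero, so nothing pathological survives the limit.
\end{IEEEproof}
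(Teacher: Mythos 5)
Your overall skeleton matches the paper's proof: normalize by $\mug(\snr)$, dispose of the very strong, very weak and Weak2 regimes via their $O(1)$ gaps, and observe that in the strong and Weak1 regimes the gap of the form $\loglogG$ is $o(\log\snr)$, so it vanishes after normalization. The genuine difference is in how the outage set is converted into a zero-measure exceptional set. The paper couples the two limits by letting the outage parameter decay with SNR, choosing $\gamma(\snr)=1/\left(\log\min(\snr,\snr^\alpha)\right)^{p}$ for some fixed $p>0$, so that the gap remains $o(\log\snr)$ while the measure of the (SNR-dependent) outage set vanishes as $\snr\to\infty$; it stops there. You instead keep $\gamma$ fixed along a countable sequence $\gamma_n\downarrow 0$, intersect the corresponding outage sets of Proposition~\ref{prop:card:dmin:measurebound} for each fixed choice of $(N_1,N_2)$ and $t$, and then take a countable union over the integer pairs and a dense set of $t$'s, producing an explicit Lebesgue-null set off which some fixed $\gamma_n$ applies and the normalized $\loglogG$ gap still tends to zero. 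Your route buys a cleaner measure-theoretic conclusion than the paper's ``outage measure tends to zero'' (which, read strictly, needs an extra Borel--Cantelli-type step to exhibit an actual null set), at the cost of the bookkeeping you flag: be careful to intersect over $\gamma_n$ first, for each fixed parameter choice, and only then union over the countable parameter family, since each individual outage set has positive measure up to $\gamma_n$; also, Proposition~\ref{cor: mini dist 0} is not really needed once that intersection argument is in place. Both routes are sound at the level of rigor of the paper's own argument.
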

\begin{IEEEproof}
We must show that as $\snr \to \infty$ the gap between the {TINnoTS} inner bound 
and outer bound in Proposition~\ref{prop:ETWouter}, 
normalized by $\mug(\snr)$, goes to zero almost everywhere. 

In the proof of Theorem~\ref{thm:Cap:gap sym} we showed that for the very strong, the weak2 and the very weak interference regimes
the gap between inner and outer bounds is $O(1)$ everywhere. Therefore, since $\lim_{\snr \to \infty} \frac{O(1)}{\mug(\snr)}=0$, the result follows.  

For the strong and weak1 interference regimes the gap is of the form $\loglogG$ for any $\gamma \in (0,1]$. Therefore,  by choosing $\gamma$ to  be
\begin{align*}
\gamma(\snr) := \frac{1}{(\log \min(\snr,\snr^\alpha)^p}, \ \text{for some $p>0$ independent of $\snr$},
\end{align*}
we have that $\lim_{\snr \to \infty} \frac{\loglogG}{\mug(\snr)}=0 $  and the measure of the outage set $\gamma(\snr)$ vanishes as $\snr\to\infty$. 
This concludes the proof. 
\end{IEEEproof}

\section{Totally Asynchronous and Codebook Oblivious G-IC}
\label{sec: async and obl IC}
The only requirement for the implementation of the {TINnoTS} inner bound in~\eqref{eq:RL:TINnoTS} is to have symbol synchronization and knowledge of the channel gains at all the terminals. Therefore, our {TINnoTS} achievable strategy applies  to a large class of channels, besides the model considered thus far. Next, we outline two such examples for which very little was known in the past.

The first example is the block asynchronous G-IC, which is information unstable~\cite{totAsynchIC} and thus no single-letter capacity expression can be derived for it. Nonetheless, we are able to show that the capacity  of this channel is to within  a gap of the capacity of the fully synchronized channel.
The second example is the G-IC with partial codebook knowledge at both receivers~\cite{simion_codebook}, which prevents using joint decoding or successive interference cancellation at the decoders. Still, we are able to show that the capacity  of this channel is to within a gap of the capacity of the channel with full codebook knowledge.

The applications to oblivious and asynchronous ICs somewhat surprisingly implies that much less ``global coordination'' between nodes is needed than one might initially expect: synchronism and codebook knowledge might not be critical if one is happy with `approximate' capacity results. 

\subsection{Block Asynchronous G-IC}
Consider a G-IC with the following input-output relationship
\begin{subequations}
\begin{align}
Y_1^n&=h_{11}X_1^n+h_{12}X_2^{n-D_1}+Z_1^n, \\ 
Y_2^n&=h_{21}X_1^{n-D_2}+h_{22}X_2^n+Z_2^n, 
\end{align}
\label{eq:Def:Total Asynch G-IC}
\end{subequations}
where the delay $D_i$, $i \in[1:2]$, is chosen at the beginning of the transmission and held fixed thereafter.  The channel is termed totally asynchronous if delay is uniform on all $n$~\cite{totAsynchIC}. Except for the introduction of random delay  all  definitions are identical to those given in Section~\ref{sec:Intro}.  In~\cite{totAsynchIC} it has been shown that $\mathcal{R}_{\text{in}}^\text{TINnoTS}$ in~\eqref{eq:RL:TINnoTS} is achievable for the channel in~\eqref{eq:Def:Total Asynch G-IC}. Moreover,  because lack of synchronization can only harm communications, the outer bound 
in Proposition~\ref{prop:ETWouter} is a valid outer bound for the asynchronous G-IC.
Therefore, all of our previous results hold and we have: 
\begin{lem}
For the block asynchronous G-IC the {TINnoTS} achievable region is to within an additive gap of the capacity of the fully synchronized G-IC,
where the gap is given in Theorems~\ref{thm:Cap:gap sym} and~\ref{thm:Cap:gap some asymmetric}.
\end{lem}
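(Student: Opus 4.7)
The plan is to sandwich the capacity region of the block asynchronous G-IC in~\eqref{eq:Def:Total Asynch G-IC} between exactly the same inner and outer bounds used in the synchronous case, so that Theorems~\ref{thm:Cap:gap sym} and~\ref{thm:Cap:gap some asymmetric} apply verbatim with no new gap computations.

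First I would invoke the achievability result of~\cite{totAsynchIC} to assert that the region $\mathcal{R}_{\text{in}}^\text{TINnoTS}$ in~\eqref{eq:RL:TINnoTS} remains achievable in the presence of the per-block delays $(D_1,D_2)$. Structurally, this is not surprising: TINnoTS uses only single-user point-to-point decoders that treat the aggregated interference as additive noise, so codeword-level alignment between the two transmitters is irrelevant, and only the symbol synchronism (which is assumed) and the channel-gain knowledge are needed to evaluate each marginal mutual information $I(X_i;Y_i)$. Consequently Proposition~\ref{prop:ach-with-mixedinput}, evaluated at the mixed inputs of~\eqref{eq:GICmixedinput} with the parameter choices tabulated in Table~\ref{table:input}, transfers to this channel.

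Next I would establish that the outer bound region in Proposition~\ref{prop:ETWouter} still holds. This follows from a standard genie argument: revealing the delays $(D_1,D_2)$ to both transmitters and both receivers can only enlarge the capacity region, and conditioned on the delays the channel is stochastically equivalent to the classical memoryless G-IC studied in Proposition~\ref{prop:ETWouter}, since a deterministic shift of a codeword does not change its distribution or the additive Gaussian noise. Combining the preserved inner bound with the preserved outer bound immediately yields the same gap bounds derived in Theorems~\ref{thm:Cap:gap sym} and~\ref{thm:Cap:gap some asymmetric}.

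The main subtlety I would want to pin down is the meaning of ``capacity'' for a channel that, as noted in the text, is not information stable. I would interpret it operationally as the closure of the set of rate pairs $(R_1,R_2)$ for which the average probability of error vanishes for every realization of $(D_1,D_2)$; both the achievability claim borrowed from~\cite{totAsynchIC} and the genie-based outer bound are compatible with this operational definition, so no single-letter characterization of capacity is required for the argument to go through. Once this interpretation is fixed, the proof is essentially a one-line reduction to the two previously proved theorems.
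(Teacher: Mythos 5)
Your proposal is correct and follows essentially the same route as the paper: achievability of the {TINnoTS} region for the block asynchronous channel is imported from~\cite{totAsynchIC}, the outer bound of Proposition~\ref{prop:ETWouter} is retained because lack of synchronization cannot enlarge the capacity region, and the gap then follows directly from Theorems~\ref{thm:Cap:gap sym} and~\ref{thm:Cap:gap some asymmetric}. Your extra remarks on the genie argument and the operational meaning of capacity for this information-unstable channel are sound but not needed beyond what the paper already states.
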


\subsection{IC with No Codebook Knowledge}
IC with partial codebook knowledge, or oblivious receivers (IC-OR), has been introduced in~\cite{simion_codebook}. 
This channel model is practically relevant  because it  models the inability to use sophisticated decoding techniques such as joint decoding or successive inference cancellation.  Recently, in~\cite{DytsoCodebookJournal},  for the IC-OR with partial codebook knowledge at one receiver, it has been shown that   using  Gaussian input at the transmitter corresponding to the oblivious receiver and a mixed input at the transmitter corresponding to non-oblivious receiver is to within a constant gap from the capacity of the classical G-IC with full codebook knowledge. In~\cite{simion_codebook} it was shown that for IC-OR with both oblivious receivers the capacity is given by 
\begin{align}
\mathcal{C}^{\text{IC-OR}} &= \bigcup_{P_QP_{X_1|Q}P_{X_2|Q} }
\left\{ 
\begin{array}{l} 
 R_1 \leq I(X_1;Y_1|Q)\\
 R_2 \leq I(X_2;Y_2|Q)\\
\end{array} \right \}.
\label{cap:IC-OR}
\end{align} 
Note that the region in~\eqref{cap:IC-OR}  is very similiar to {TINoTS} region in~\eqref{eq:RL:TINwithT} and $\mathcal{C}^{\text{IC-OR}}$ is upper bounded by the classical G-IC outer bound in Proposition~\ref{prop:ETWouter}. 
The set of optimizing distributions for~\eqref{cap:IC-OR} and the cardinality bound for the alphabet of $Q$ are not known~\cite[Section III.A]{simion_codebook}. Based on our previous results, we have that:
\begin{lem}
For the  G-IC with partial codebook knowledge the {TINnoTS} achievable region is to within an additive gap of the capacity of the G-IC with full codebook knowledge, where the gap is given in Theorems~\ref{thm:Cap:gap sym} and~\ref{thm:Cap:gap some asymmetric}.
\end{lem}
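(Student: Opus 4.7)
The plan is to sandwich the IC-OR capacity $\mathcal{C}^{\text{IC-OR}}$ in~\eqref{cap:IC-OR} between a known inner bound (our {TINnoTS} region with mixed inputs) and a known outer bound (the G-IC outer bound of Proposition~\ref{prop:ETWouter}), and then invoke Theorems~\ref{thm:Cap:gap sym} and~\ref{thm:Cap:gap some asymmetric} to close the gap. Formally, I would establish the chain
\[
\mathcal{R}_{\text{in}} \;\subseteq\; \mathcal{R}_{\text{in}}^{\text{TINnoTS}} \;\subseteq\; \mathcal{C}^{\text{IC-OR}} \;\subseteq\; \mathcal{C}^{\text{G-IC}} \;\subseteq\; \mathcal{R}_{\text{out}},
\]
and then note that by the symmetric result in Theorem~\ref{thm:Cap:gap sym} (or its asymmetric counterpart Theorem~\ref{thm:Cap:gap some asymmetric}) the outermost and innermost sets already lie within the stated additive gap, which forces every intermediate set to lie within the same gap of $\mathcal{C}^{\text{G-IC}}$.

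The first inclusion is Proposition~\ref{prop:ach-with-mixedinput}; the mixed-input choice is a particular product distribution $P_{X_1}P_{X_2}$ meeting the power constraint, so plugging it into~\eqref{eq:RL:TINnoTS} yields $\mathcal{R}_{\text{in}}$. The second inclusion follows by specializing~\eqref{cap:IC-OR} to a constant time-sharing variable $Q=q_0$: conditioned on $Q=q_0$, the mutual information terms $I(X_i;Y_i|Q=q_0)$ coincide with the ``treating interference as noise'' expressions $I(X_i;Y_i)$ in~\eqref{eq:RL:TINnoTS}. The third inclusion is the most conceptually important step: providing the receivers with full codebook knowledge can never hurt, since any achievable IC-OR scheme is also achievable in the classical G-IC (the receivers may simply ignore the extra knowledge); hence $\mathcal{C}^{\text{IC-OR}}\subseteq \mathcal{C}^{\text{G-IC}}$. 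The fourth inclusion is Proposition~\ref{prop:ETWouter}.

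Once this chain is in place, the lemma is immediate: Theorem~\ref{thm:Cap:gap sym} (respectively Theorem~\ref{thm:Cap:gap some asymmetric}) states that every point of $\mathcal{R}_{\text{out}}$ lies within the stated ${\gap}$ (constant or $\loglog$, up to an outage set) of $\mathcal{R}_{\text{in}}$. Since $\mathcal{C}^{\text{IC-OR}}$ is squeezed between $\mathcal{R}_{\text{in}}$ and $\mathcal{R}_{\text{out}}$, every point of $\mathcal{C}^{\text{G-IC}}$ is within the same additive gap of a point of $\mathcal{C}^{\text{IC-OR}}$, which is exactly the claim. The only subtlety worth flagging is the third inclusion: one has to verify that any codebook/encoder pair that achieves a rate pair in the oblivious setting also achieves it in the fully informed setting with the same error probability, which is true because the decoders in the oblivious setting are a strict subclass of the decoders available with full codebook knowledge. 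Apart from this simple reduction argument, no new computations are needed; in particular we do not have to re-derive any cardinality, minimum-distance, or Ozarow-Wyner-type bound, because the heavy lifting has already been done in Sections~\ref{sec:ach}--\ref{sec:cap within gap asymmetric}.
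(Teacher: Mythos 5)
Your proposal is correct and follows essentially the same route as the paper: the paper's (very terse) argument is exactly the sandwich you write, namely that the {TINnoTS} region with mixed inputs is contained in $\mathcal{C}^{\text{IC-OR}}$ in~\eqref{cap:IC-OR} (constant $Q$, product inputs), that $\mathcal{C}^{\text{IC-OR}}$ is upper bounded by the classical outer bound of Proposition~\ref{prop:ETWouter}, and that Theorems~\ref{thm:Cap:gap sym} and~\ref{thm:Cap:gap some asymmetric} then give the stated gap. Your only addition is to spell out the intermediate inclusion $\mathcal{C}^{\text{IC-OR}}\subseteq\mathcal{C}^{\text{G-IC}}$ and the constant-$Q$ specialization, which the paper leaves implicit; both steps are sound.
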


\section{{TINnoTS} with Mixed Inputs in Practice}
\label{sec:practical}

\subsection{A Simple {TINnoTS} Receiver in Very Strong Interference} 
\label{sec: simple receiver}
In the Introduction we mentioned that the optimal MAP decoder in an additive non-Gaussian noise channel, 
which one could implement for TIN when treating a non-Gaussian interference as noise, 
could be very complex. In the following we give an example of an approximate
MAP decoder that is very simple to implement, thus making {TINnoTS} competitive in practical applications.

Let $X_1,X_2$ be from the $\pam\left(N, d\right)$ with  $N=2Q+1, Q\in\mathbb{N},$ and $d^2=\frac{12}{N^2-1}=\frac{3}{Q(Q+1)}$. 
The restrictions to an odd number of points is just for simplicity of writing the constellation points.
The received signal is
\begin{align*}
Y = \left(\sqrt{\snr} \ n_1  + \sqrt{\inr} \ n_2\right) d + Z_G, \ Z_G \sim \mathcal{N}(0,1),
\end{align*}
for some $(n_1,n_2)\in[-Q:Q]^2$ chosen independently with uniform probability. 
The condition in~\eqref{eq:thm:comb condition2} is verified when
\begin{align}
(2Q+1)^2\snr \leq \inr,
\label{eq:nonoverlap in practice very strong}
\end{align}
which corresponds to the very strong interference regime.
In the regime identified by~\eqref{eq:nonoverlap in practice very strong},
i.e., where the received points do not `overlap' as in Fig~\ref{fig:noOverLap},
 the decoder could simply ``modulo-out'' the interference 
by ``folding'' the signal $Y$ onto the interval $\mathcal{I} := [-\sqrt{\inr}d/2, +\sqrt{\inr}d/2]$.
By doing so the resulting signal, given by
\begin{align*}
Y^{\prime} 
=  \left[\sqrt{\snr} \ n_1 d + Z^{\prime}\right]_{\text{mod} \ \mathcal{I}}, \ Z^{\prime} := [Z_G]_{\text{mod} \ \mathcal{I}},
\end{align*}
would be interference-free.
Since
\begin{align*}
\Pr[Y^{\prime} \not= \sqrt{\snr} \ n_1 d + Z_G]
  &\leq \Pr[|Z_G| \geq\sqrt{\inr}d/2- \sqrt{\snr} Q d]
\\&\stackrel{\text{from~\eqref{eq:nonoverlap in practice very strong}}}{\leq} \Pr[|Z_G| \geq\sqrt{\snr}/2],
\end{align*}
and since $\Pr[|Z_G| \geq\sqrt{\snr}/2]$ is also an upper bound to the probability of error for PAM input on a Gaussian channel,
we see that the simple modulo operation at the receiver results in a symbol-error rate that is at most double that of an interference-free Gaussian channel with the same PAM input.

\subsection{Actual vs. Analytic Gap} 
\label{sec: Actual vs. Analytic Gap}
Here we compare the gap derived in Theorems~\ref{thm:Cap:gap sym} and~\ref{thm:Cap:gap some asymmetric} to the actual gap evaluated numerically.
The point is to show that our analytical closed-form (worst case scenario) bounds can be quite conservative and thus underestimate the actual achievable rates. 

For example, we showed that in the very strong interference regime
the {TINnoTS} achievable region with discrete inputs is at most $\frac{1}{2} \log\left(\frac{2 \pi \eu}{3}\right)$~bits from capacity;
the capacity in this case is the same as two parallel interference-free links.
Consider the symmetric G-IG in very strong interference and the symmetric rate $R_1=R_2=R_\text{sym}(\snr)$ with the same PAM input for each user, where the number of points is chosen as  in \eqref{eq:very strong:N}.
Fig.~\ref{fig:NumericalGapInVeryStrong} shows $\gap(\snr) := \mug(\snr)-R_\text{sym}(\snr)$ vs. $\snr$ expressed in dB, where
\begin{itemize}
\item
the red   line is the theoretical gap from Theorem~\ref{thm:Cap:gap sym}, approximately $\frac{1}{2} \log\left(\frac{2 \pi \eu}{3}\right)=1.25$~bits;
\item
the green line is the gap by lower bounding $R_\text{sym}(\snr)$ with the Ozarow-Wyner-B bound in Proposition~\ref{prop:lowbound OW generalized}, where the minimum distance of the received constellation was computed exactly (rather than lower bounded by Proposition~\ref{prop:combNOOUTAGE}); the gap in this case is approximately $0.75$~bits;
\item
the magenta line is the gap by lower bounding $R_\text{sym}(\snr)$ by the `full DTD-ITA'14 bound' in~\eqref{eq:ITA with Shaping}, the gap in this case is approximately $0.37$~bits;
\item
the cyan    line is the gap when $R_\text{sym}(\snr)$ is evaluated by Monte Carlo simulation;
the gap in this case tends to the ultimate ``shaping loss'' $\frac{1}{2} \log\left(\frac{\pi \eu}{6}\right)=0.25$~bits at large $\snr$;
this shows that the actual gap is about 1~bit lower than the theoretical gap;
\end{itemize}
The figure also shows that the lower bound in~\eqref{eq:ITA with Shaping} actually gives the tightest lower bound for the mutual information, but it is unfortunately not easy to deal with analytically.

We next consider the symmetric G-IG in strong interference.
Theorem~\ref{thm:Cap:gap sym} upper bounds the gap in this regime 
by $\gap(\snr) \leq \frac{1}{2}\log\left(\frac{2\pi\eu}{3}\right)+\frac{1}{2}\log\left(1+8\frac{\left(1+1/2\ln({1+\snr})\right)^2}{\gamma^2}\right)$
where $\gamma\in(0,1]$ is the measure of the outage set (i.e., those channel gains for which the gap lower bound is not valid).
If we were to make the measure of the outage set very small, %
then we could end up finding that the gap is actually larger than capacity.
Consider the case $\snr=30$~dB and $\inr=\snr^{1.49}=44.7$~dB;
with $\gamma=0.1$ it easy to see that $\frac{1}{2}\log\left(\frac{2\pi\eu}{3}\right)+\frac{1}{2}\log\left(1+8\frac{\left(1+1/2\ln({1+\snr})\right)^2}{\gamma^2}\right) = 6.977$~bits, which is larger than the interference-free capacity $\mug(\snr) = 4.9836$~bits.
This implies that our bounding steps, done for the sake of analytical tractability and especially meaningful  at high SNR, are too crude for this specific example (where 
our result states the trivial fact that zero rate for each user is achievable to within $\mug(\snr)$ bits). 
We aim to convey next that, despite the fact that the  closed-form gap result underestimates the achievable rates, it nonetheless provides valuable insights into the performance of practical systems, that is, that {TINnoTS} with discrete inputs performs quite well in the strong interference regime (where capacity is achieved by Gaussian codebooks and joint decoding of interfering and intended messages).
To this end, Fig.~\ref{fig:NumericalGapInStrong} shows the achievable rate region for the symmetric G-IC with $\snr=30$~dB and $\inr=\snr^{1.49}=44.7$~dB and where the users employ a PAM input with the number of points given by \eqref{eq:achregion for par:strong choiceN}. We observe
\begin{itemize}
\item
The navy blue line shows the pentagon-shaped capacity region in~\eqref{eq:outerregion for par:strong}.
\item
The red point at the origin is the lower bound on the achievable rates from Theorem~\ref{thm:Cap:gap sym} with $\gamma=0.1$.
\item
The green line is the achievable region when the rates are lower bounded by the Ozarow-Wyner-B bound in Proposition~\ref{prop:lowbound OW generalized}, where the minimum distances of the received constellations were computed exactly (rather than lower bounded by Proposition~\ref{prop:card:dmin:measurebound}). 
\item
For the magenta line we used the DTD-ITA'14-A lower bound in~\eqref{eq:ITA with Shaping}; 
\item
For the cyan line we evaluated the rates by Monte Carlo simulation.
\end{itemize}
The reason why the green region has so many `ups and downs' is because the Ozarow-Wyner-B bound in Proposition~\ref{prop:lowbound OW generalized} depends on the constellation through its minimum distance; as we already saw in Fig.~\ref{fig:minimumDistanceBehaviour}, the minimum distance is very sensitive to the fractional values of the channel gains, which makes the corresponding bound looks very irregular.
On the other hand, the magenta region is based on the lower bound in~\eqref{eq:ITA with Shaping}, which depends on the whole distance spectrum of the received constellation and as a consequence the corresponding bound looks smoother.
The cyan region is the smoothest of all; its largest gap occurs at the symmetric rate point and is less than 0.7~bits -- as opposed to the theoretical gap of 4.9836~bits.
We thus conclude that, despite the large theoretical gap, a PAM input is quit competitive in this example.

\begin{figure}
        \centering
        \begin{subfigure}[a]{0.5\textwidth}
                \includegraphics[width=8cm]{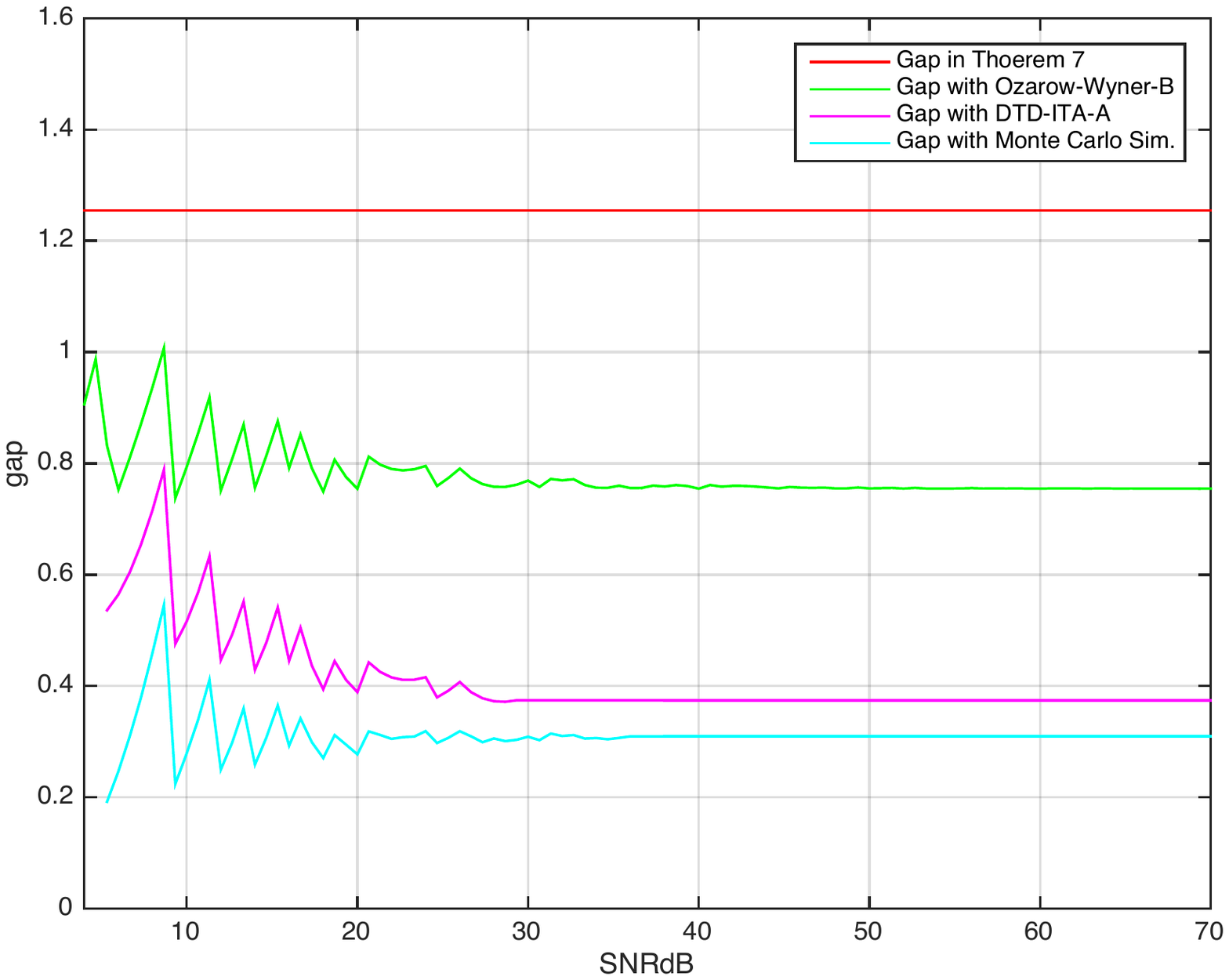}
                \caption{Gap in the very strong interference regime vs. $\snr$ for $\inr=(\snr \ (1+\snr))^{1.2} \approx \snr^{2.4}$.}
                \label{fig:NumericalGapInVeryStrong}
        \end{subfigure}%
        ~ 
        \begin{subfigure}[a]{0.5\textwidth}
                \includegraphics[width=8cm]{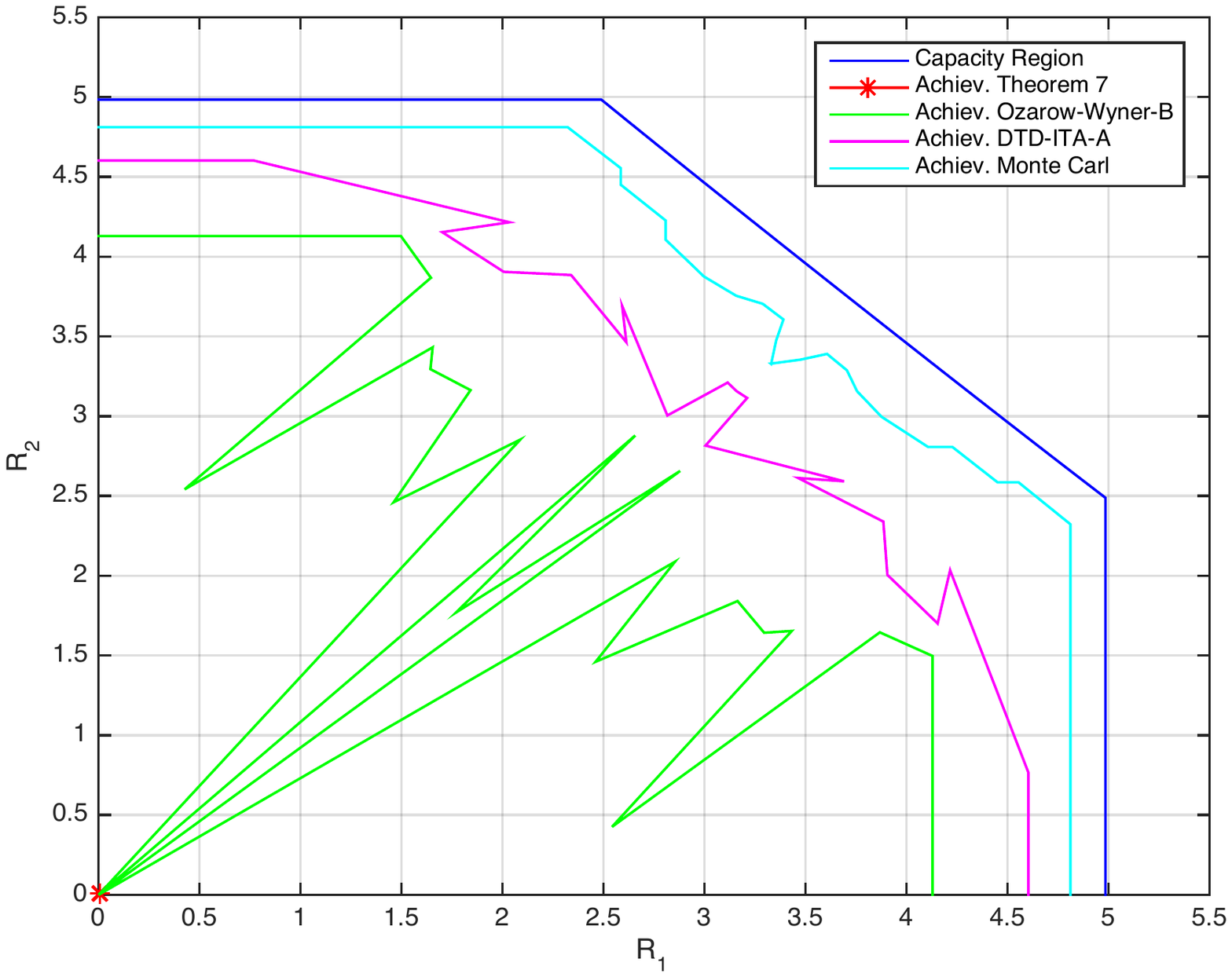}
                \caption{Rate region in the strong interference regime for $\snr=30$~dB and $\inr=\snr^{1.49}=44.7$~dB.}
                \label{fig:NumericalGapInStrong}
        \end{subfigure}
        ~ 
        \caption{Comparing Analytic with Numerical  Gaps.}
        \label{fig:whatthe*isthis}
\end{figure}

\subsection{Mixed (Gaussian+Discrete) vs. Discrete (Discrete+Discrete) Inputs}
\label{sec: Mixed (Gaussian+Discrete) vs. Discrete (Discrete+Discrete) Inputs}
In the previous Sections we showed that 
{TINnoTS} with mixed (Gaussian+Discrete) inputs achieves the capacity to within a gap for several channels of interest. 
Practically, it may be interesting to understand what performance can be guaranteed when inputs are fully discrete, i.e., they do not contain a Gaussian component.

For the symmetric G-IC the following can be shown.
Consider the {TINnoTS} region with $X_u\sim \pam(N_u,d_u)$ such that the power constraints are met, that is, $\frac{N_u^2-1}{12}d_u^2\leq 1$ for all $u\in[1:2],$ and lower bound the mutual informations with Proposition~\ref{prop:lowbound OW generalized}. Then, {TINnoTS} achieves the outer bound in Proposition~\ref{prop:ETWouter} in very weak and in strong interference only, that is, for those regimes where `rate splitting' was not used in Theorem~\ref{thm:Cap:gap sym}. The proof of this result is omitted for sake of space. Thus it appears that  in the moderately weak interference regime mixed inputs composed of `two-layers' are necessary.

The next question we ask is thus whether we can show the same gap result of Theorem~\ref{thm:Cap:gap sym} for the moderately weak interference regime by using inputs that are the superposition of two PAM constellations, rather than a PAM and a Gaussian.
The next proposition shows that 
the answer is in the affirmative, i.e., 
it is possible to `switch' between Gaussian+Discrete and Discrete+Discrete inputs up to an additive gap.
\begin{prop}
\label{prop:switch D with G}

Let 
\begin{align*}
X_D&:=\  X_\text{c}+ X_\text{p}, \\
\text{ where } 
X_\text{c} &\sim \text{ discrete} : 
 d_{\min(X_\text{c})}>0, \\
X_\text{p} &\sim \text{ discrete} : %
 d_{\min(X_\text{p})}>0, \\
X_M&:=\  X_\text{c}+ X_\text{g}, 
\\
\text{ where } X_\text{g} &\sim \mathcal{N}(0,\mathbb{E}[|X_\text{g}|^2]) \text{ such that }  \mathbb{E}[|X_\text{p}|^2]=\mathbb{E}[|X_\text{g}|^2], 
\end{align*}
where $X_\text{c}, X_\text{g}$ and $X_\text{p}$ are mutually independent.
Then, for $Z_G\sim \mathcal{N}(0,1)$ independent of everything else, we have
\begin{align*}
I(X_D;gX_D+Z_G)-I(X_M;gX_M+Z_G) &\leq \frac{1}{2}\log(2), \\
I(X_M;gX_M+Z_G)-I(X_D;gX_D+Z_G) &\leq \frac{1}{2}\log \left( \frac{\pi \eu}{3}\right)+\frac{1}{2} \log \left(1+\frac{12}{g^2 \  d_{\min(X_{D})}^2}\right).
\end{align*}
\end{prop}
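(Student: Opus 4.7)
The plan is to reduce both inequalities to differential-entropy estimates by applying $I(X;gX+Z_G) = h(gX+Z_G) - h(Z_G)$ to $X \in \{X_D, X_M\}$, so that both bounds become two-sided control of the single quantity $D := h(gX_D+Z_G) - h(gX_M+Z_G)$. Exploiting the shared discrete ``backbone'' $X_c$, I would decompose each entropy by conditioning on the respective ``top layer'': $h(gX_D+Z_G) = h(gX_c+Z_G) + I(X_p; gX_D+Z_G)$ and $h(gX_M+Z_G) = h(gX_c+Z_G) + I(X_g; gX_M+Z_G)$, where the conditional differential entropies both equal $h(gX_c+Z_G)$ by the independence of $X_p$ (resp.\ $X_g$) from $(X_c, Z_G)$ together with translation invariance of differential entropy. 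The common term cancels, leaving $D = I(X_p; gX_D+Z_G) - I(X_g; gX_M+Z_G)$.

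For the first inequality, I would upper bound $I(X_p; gX_D+Z_G) \leq I(X_p; gX_p+Z_G)$ by the data-processing inequality (adding the independent $gX_c$ as extra noise cannot help), and then use the channel-coding converse $I(X_p; gX_p+Z_G) \leq I(X_g; gX_g+Z_G) = \frac{1}{2}\log(1+g^2\mathbb{E}[X_g^2])$ since $X_g$ is a capacity-achieving Gaussian input of matching second moment. For a matching lower bound on $I(X_g; gX_M+Z_G)$, I would invoke the Gaussian worst-case-noise theorem: because $X_g$ is Gaussian, replacing the non-Gaussian effective noise $gX_c+Z_G$ (of variance $g^2\mathbb{E}[X_c^2]+1$) by a Gaussian of the same variance can only decrease the mutual information, yielding $I(X_g; gX_M+Z_G) \geq \frac{1}{2}\log\bigl(1 + g^2\mathbb{E}[X_g^2]/(1+g^2\mathbb{E}[X_c^2])\bigr)$. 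Putting these together, the first difference is at most $\frac{1}{2}\log\bigl((1+a)(1+b)/(1+a+b)\bigr)$ with $a=g^2\mathbb{E}[X_g^2]$, $b=g^2\mathbb{E}[X_c^2]$, and the target $\frac{1}{2}\log 2$ reduces to the algebraic estimate $(1+a)(1+b) \leq 2(1+a+b)$, which I expect to close using the per-user power normalization implicit in the paper's mixed-input construction (where the total input variance is bounded by unity).

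For the second inequality, I would apply Proposition~\ref{prop:lowbound OW generalized} (Ozarow-Wyner-B) directly to obtain $I(X_D; gX_D+Z_G) \geq H(X_D) - \frac{1}{2}\log(2\pi\eu/12) - \frac{1}{2}\log\bigl(1+12/(g^2 d_{\min(X_D)}^2)\bigr)$, and pair it with the Gaussian upper bound $I(X_M; gX_M+Z_G) \leq \frac{1}{2}\log(1+g^2 \mathbb{E}[X_D^2])$ which follows from $\mathbb{E}[X_M^2] = \mathbb{E}[X_c^2]+\mathbb{E}[X_g^2] = \mathbb{E}[X_c^2]+\mathbb{E}[X_p^2] = \mathbb{E}[X_D^2]$ and the Gaussian maximum-entropy property. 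The claimed gap then reduces to the entropy--variance inequality $\frac{1}{2}\log(1+g^2\mathbb{E}[X_D^2]) - H(X_D) \leq \frac{1}{2}\log 2$, which I anticipate will be the main technical obstacle: it requires $X_D$ to be sufficiently regular (essentially, approximately uniform over a minimum-distance-$d_{\min(X_D)}$ support of the appropriate size), a property that is automatic for the PAM-type constellations used throughout Section~\ref{sec:ach} but that is delicate to verify for general discrete $X_c, X_p$ with positive minimum distance.
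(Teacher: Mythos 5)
Your reduction to $D:=h(gX_D+Z_G)-h(gX_M+Z_G)=I(X_\text{p};gX_\text{p}+N)-I(X_\text{g};gX_\text{g}+N)$ with $N:=gX_\text{c}+Z_G$ is exactly the paper's chain-rule decomposition (the common term $I(X_\text{c};gX_\text{c}+Z_G)$ cancels), but the way you compare the two remaining terms breaks the first inequality. You bound $I(X_\text{p};gX_\text{p}+N)\le I(X_\text{p};gX_\text{p}+Z_G)\le\frac{1}{2}\log(1+a)$ with $a=g^2\mathbb{E}[X_\text{p}^2]$, i.e.\ by the capacity of the channel with the interference $gX_\text{c}$ stripped out, and you bound $I(X_\text{g};gX_\text{g}+N)\ge\frac{1}{2}\log\left(1+\frac{a}{1+b}\right)$ with $b=g^2\mathrm{Var}[X_\text{c}]$ via the worst-case-noise argument. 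The leftover is $\frac{1}{2}\log\frac{(1+a)(1+b)}{1+a+b}$, and the algebraic step you hope for, $(1+a)(1+b)\le 2(1+a+b)$, is false whenever $ab>1+a+b$ (e.g.\ $a=b=10$); the leftover in fact grows without bound, roughly like $\frac{1}{2}\log\min(a,b)$. No power normalization rescues this: the proposition carries no power constraint at all, and even under the paper's unit-power split one typically has $b\approx\snr$ and $a\approx\snr/(1+\inr)$, both large in weak interference. The paper avoids the loss by never leaving the common channel: it applies the Zamir--Erez half-bit theorem~\cite{ZamirGaussianIsNotToBad} to the additive-noise channel with noise $N$ itself, giving $I(X_\text{p};gX_\text{p}+N)\le\sup_{\mathbb{E}[X^2]\le\mathbb{E}[X_\text{p}^2]}I(X;gX+N)\le I(X_\text{g};gX_\text{g}+N)+\frac{1}{2}\log(2)$, which yields the claimed $\frac{1}{2}\log(2)$ uniformly in $g$, $X_\text{c}$, $X_\text{p}$. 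This substitution is the missing idea in your first bound.

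For the second inequality your route (Proposition~\ref{prop:lowbound OW generalized} against the maximum-entropy bound $I(X_M;gX_M+Z_G)\le\mug(g^2\mathbb{E}[X_D^2])$) is the same as the paper's, and the step you flag as the obstacle, $\frac{1}{2}\log(1+g^2\mathbb{E}[X_D^2])-H(X_D)\le\frac{1}{2}\log(2)$, is genuinely not supplied by Proposition~\ref{prop:lowbound OW generalized}: the paper's step (b) assumes it tacitly, and it can fail for arbitrary discrete layers with positive minimum distance (take $X_\text{c}$ uniform on $\{\pm A\}$ and $X_\text{p}$ uniform on $\{\pm 1\}$ with $A$ and $g$ large; then the left-hand side of the second inequality grows like $\frac{1}{2}\log(g^2)$ while the stated right-hand side stays near $\frac{1}{2}\log\left(\frac{\pi\eu}{3}\right)$). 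So your instinct that extra structure is needed (e.g.\ PAM-type layers tying entropy, variance and minimum distance together) is correct, but as written neither your argument nor the paper's one-line justification closes this step, so your proof of the second bound is incomplete in the same place the paper's is, and your proof of the first bound has a concrete error at the final algebraic reduction.
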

\begin{IEEEproof}
The first inequality follows since
\begin{align*}
I(X_D;gX_D+Z_G)
  &= I(X_\text{c}, X_\text{p};g X_\text{c}+g X_\text{p}+Z_G)
\\&= I(X_\text{p};g X_\text{c}+g X_\text{p}+Z_G)
   + I(X_\text{c};g X_\text{c}+g X_\text{p}+Z_G|X_\text{p})
\\&= I\left(X_\text{p};g X_\text{p}+N\right)|_{N:=g X_\text{c}+Z_G} +I(X_\text{c};g X_\text{c}+Z_G)
\\& \stackrel{\rm (a)}{\le}  I\left(X_\text{g};g X_\text{g}+N\right)|_{N:=g X_\text{c}+Z_G}
  +\frac{1}{2}\log(2)+I(X_\text{c};g X_\text{c}+Z_G)\\
&= I(X_M;gX_M+Z)+\frac{1}{2}\log(2),
\end{align*}
where in (a) we used~\cite[Theorem 1]{ZamirGaussianIsNotToBad}, which states that a Gaussian input for non-Gaussian additive noise channel results in at most $1/2$ bit loss. 

The second inequality follows since
\begin{align*}
I(X_M;gX_M+Z_G) 
  & \leq \mug(g^2 {\rm Var}[X_M]) = \mug(g^2 {\rm Var}[X_D]) 
\\& \stackrel{\rm (b)}{\leq} I(X_D;g X_D+Z_G)+\frac{1}{2} \log \left( \frac{\pi \eu}{3}\right)+\frac{1}{2} \log \left(1+\frac{12 }{g^2 \   d_{\min(X_{D})}^2}\right),
\end{align*}
where in (b) we used the bound in Proposition~\ref{prop:lowbound OW generalized}.
\end{IEEEproof}

The question left is thus why `two-layer' inputs, i.e., that comprise two random variables, are needed for approximate optimality in the moderately weak interference regime. 
Although at this point we do not have an answer for this question, the intuition for the moderately weak interference regime  is as follows.  With `single-layer' PAM inputs and for the given power constraints, the number of points needed to attain a desired rate pair on the convex closure of the outer bound  result in a minimum distance at the receivers that is too small.  It may be that with `two-layer' PAM inputs one effectively soft-estimates one of the layers whose effect can thus be removed from the received signal, thereby behaving as if there was an interfering common message jointly decoded at the non-intended receiver. Further investigation is needed to understand whether `multi-layer' inputs are indeed necessary.

\section{Conclusion} 
\label{sec:concl}

We evaluated a very simple, generally applicable lower bound, that neither requires joint decoding nor block synchronization, to the capacity of the Gaussian interference channel.

This treating-interference-as-noise lower bound without time-sharing was evaluated for inputs that are a mixture of discrete and Gaussian random variables. 
We showed that, through careful choice of the mixed input parameters, namely the number of points of the discrete part and the amount of power assigned to the Gaussian part (that in general depends on the channel gains and on which point on the convex closure of the outer bound one wants to attain)

the capacity of the classical Gaussian interference channel can be attained to within a gap.  
This result is of interest in several channels where this lower bound applies, such as block asynchronous channels and channels with partial codebook knowledge. 
Extension to other channel models and to more than two users are the subject of current investigation.

\section*{Acknowledgment}
The authors wish to thank Prof. Ramin Takloo-Bighash of the Department of Mathematics, Statistics and Computer Science at the University of Illinois at Chicago for numerous valuable discussions. 
The authors would also like to acknowledge interesting discussions on the meaning of ``treat interference as noise''  and its practical implications during ITW'2105 with Profs. Gerhard Kramer, Giuseppe Caire and  Shlomo Shamai.

\appendices

\section{Proof of~\eqref{eq:AD gen}}
\label{app:ADlowerbound}
To prove the lower bound in~\eqref{eq:AD gen} we first find a lower bound on the differential entropy of  $Y= X_D+Z_G$. To that end let $p_i:= \mathbb{P}[X_D=s_i], i\in[1:N]$, then $Y$ has the following Gaussian mixture density\begin{align}
Y \sim  P_{Y}(y) := \sum_{i\in[1:N]} p_i\mathcal{N}(y; s_i,1).
\label{eq:gauss mixture PY}
\end{align}
where
\[
\mathcal{N}(x; \mu,\sigma^2) := \frac{1}{\sqrt{2\pi\sigma^2}}\eu^{-\frac{(x-\mu)^2}{2\sigma^2}}, \quad x\in\mathbb{R}.
\]
We have
\begin{align*}
-h(Y)&=\int P_{Y}(y) \log( P_{Y}(y) ) dy\\
&\stackrel{ \rm (a)}{ \le}\log \int P_{Y}(y) P_{Y}(y) dy\\
&  = \log \int \left( \sum_{i\in[1:N]} p_i\mathcal{N}(y;  s_i,1) \right)^2 dy\\
&  = \log  \left( \sum_{(i,j)\in[1:N]^2} p_i p_j \ \int \mathcal{N}(y;   s_i,1)  
 \mathcal{N}(y;    s_j,1) dy \right) \\
&  = \log  \left( \sum_{(i,j)\in[1:N]^2} p_i p_j \frac{1}{\sqrt{4\pi}} \eu^{\frac{- (s_i-s_j)^2}{4}}
 \int \mathcal{N}\left(y;    \frac{s_i+s_j}{2},\frac{1}{2}\right)  dy \right) \\
& \stackrel{ \rm (b)}{=} \log  \left( \sum_{(i,j)\in[1:N]^2} p_i p_j \frac{1}{\sqrt{4\pi}} \eu^{-\frac{(s_i-s_j)^2}{4}}\right)  \\
& \stackrel{ \rm (c)}{\le} \log  \left( 
 \sum_{i\in[1:N]} p_i^2 \frac{1}{\sqrt{4\pi}}
+\sum_{i\in[1:N]} p_i(1-p_i) \frac{1}{\sqrt{4\pi}} \eu^{-\frac{  d_{\min(X_D)}^2}{4}}\right) \\
& \stackrel{ \rm (d)}{\le} -\log(N\sqrt{4\pi}) + \log  \left( 
 1 + (N-1) \eu^{-\frac{  d_{\min(X_D)}^2}{4}}\right) ,
\end{align*}
which implies
\begin{align}
 I(X_D;  X_D+Z_G)&=h(X_D+Z_G)- h(Z_G) \geq \log(N)-{\gap}_{\text{\eqref{eq:gapAD}}},
\notag\\
{\gap}_{\text{\eqref{eq:gapAD}}} &:= \frac{1}{2}\log\left(\frac{\eu}{2}\right)
 + \log  \left( 1 + (N-1) \eu^{-\frac{ d_{\min(X_D)}^2}{4}}\right),
\label{eq:gapAD}
\end{align}
where the (in)equalities follow from:
(a) Jensen's inequality, 
(b) $\int \mathcal{N}(y;  \mu,\sigma^2)  dy=1$, 
(c) 
$d_{\min(X_D)} \leq |s_i-s_j|, \forall i\not=j$, 
(d) by maximizing over the $\{p_i, i\in[1:N]\}$. 
Combining this bound with the fact that mutual information is non-negative proves the claimed lower bound.

\section{Proof of Proposition~\ref{prop:card:dmin:measurebound}}
\label{app:Proof of Minimum Distance measure bound}
For convenience let $\mathcal{S}:=\supp(h_xX+h_yY)$. 
To proof that $|S| = |X||Y|$ a.e. we look at the measure of the set such  that $|S| \neq |X||Y|$,  that is, a set for which there exists $s_i=h_xx_{i}+h_yy_{i}$ and $s_j=h_xx_{j}+h_yy_{j}$  such that $s_i=s_j$ for some $i \neq j$; 
hence, we are interested in characterizing the measure of the set   
\begin{align}
A := \Big\{(h_x,h_y) \in \mathbb{R}^2:  \begin{array}{c} h_xx_{i}+h_yy_{i}= h_xx_{j}+h_yy_{j}\\	 (x_{i},y_{i}) \neq (x_{j},y_{j}) \end{array},
 \forall  x_{i}, x_{j} \in X  \text{ and } 	\forall  y_{i}, y_{j} \in Y\, \,  \Big\}.
 \label{eq: Bad set}
\end{align}

Define 
\begin{align}
A(i,j)= \left\{(h_x,h_y) \in \mathbb{R}^2: h_xx_{i}+h_yy_{i}= h_xx_{j}+h_yy_{j}, \text{ s.t. } (x_{i},y_{i}) \neq (x_{j},y_{j}) \right \}.
\end{align}
By the sub-additivity of measure we have
\begin{align}
m(A)=m \left( \bigcup_{ i,j} A(i,j) \right) \leq \sum_{ i,j } m(A(i,j)) \label{eq:measure:|X||Y|}.
\end{align}
For fixed $x_{i},x_{j},y_{i},y_{j}$ the  set $A(i,j)$ is a line in $(h_x,h_y) \in \mathbb{R}^2$  and hence  
\[
m \left( A(i,j) \right)=0.
\] 
Thus, in~\eqref{eq:measure:|X||Y|} we have a countable sum of sets of measure zero,  which implies that  $m(A)=0$. 

Next, we bound the minimum distance 
$d_{\min(\mathcal{S})}:=\min_{i\not=j} \{|s_i-s_j|: s_i,s_j \in \mathcal{S} \}$
with $|s_i-s_j|=|h_xx_{i} +h_y y_{i}-h_xx_{j}-h_y y_{j}|$. 
We distinguish two cases: 
\begin{enumerate}
\item[Case 1)] 
$x_{i}=x_{j}$ and  $y_{i}\neq y_{j}$, or  $x_{i}\neq x_{j}$ and  $y_{i}= y_{j}$: 
then trivially
\begin{align*}
&|s_i-s_j|
\geq |h_y| d_{\min(Y)}, \ \text{or}
\\
&|s_i-s_j|
\geq |h_x| d_{\min(X)}.
\end{align*}
\item[Case 2)] 
$x_{i}\neq x_{j}$ and  $y_{i} \neq y_{j}$: Let $z_* \in \mathbb{Z}$, then 
\begin{align*}
|s_i-s_j|
&=|h_xx_{i} +h_y y_{i}-h_xx_{j}-h_y y_{j}|\\
&=|h_x (x_{i}- x_{j})-h_y(y_{j}-y_{i})|\\
&=|h_xd_{{\rm min}(X)} (z_{xi}- z_{xj})-h_yd_{{\rm min}(Y)}(z_{yj}-z_{yi})|\\ %
&=|a z_x-bz_y|
\end{align*}
where
$a=h_xd_{{\rm min}(X)}$,
$b=h_yd_{{\rm min}(Y)}$,
$z_x=(z_{xi}-z_{xj})$ and 
$z_y=(z_{yj}-z_{yi})$.
Hence, by Lemma~\ref{lem:card:dmin:measurebound} in Appendix~\ref{app:ProofOfAuxiliaryLemma}  we have that 
\begin{align*}
|s_i-s_j| &\geq \gamma \max \left(\frac{|h_x|d_{{\rm min}(X)}}{ 2 |Y|(1+\log(|X|))},\frac{|h_y|d_{{\rm min}(Y)}}{ 2 |X|(1+\log(|Y|))} \right)\\
& \geq \kappa_{\gamma,|X|,|Y|} \max \left(\frac{|h_x|d_{{\rm min}(X)}}{  |Y|},\frac{|h_y|d_{{\rm min}(Y)}}{  |X|} \right)
\end{align*}
up to an outage set of measure $\gamma$ where $\kappa_{\gamma,|X|,|Y|}:=\frac{\gamma}{1+\ln (\max(|X|,|Y|)}$ and $\gamma \in (0,1]$.
Next, by taking the minimum over both cases we arrive at the result in Proposition~\ref{prop:card:dmin:measurebound}.
\end{enumerate}

\section{Minimum Distance Auxiliary Lemma}
\label{app:ProofOfAuxiliaryLemma}
\begin{lem}
\label{lem:card:dmin:measurebound}
Let $a,b \in \mathbb{R}$  and $z_x,z_y \in \mathbb{Z}$.
The function
\begin{align*}
f(z_x,z_y)= \min |az_x-bz_y|
\end{align*}
subject to the constrains 
\begin{align*}
 &z_x \in [-N_x:N_x] / \{0\},\\
 &z_y \in [-N_y:N_y] / \{0\},
\end{align*}
satisfies
\begin{align*}
f(z_x,z_y) \geq  \gamma \max \left(\frac{b}{ 2 N_x(1+\ln(N_y))},\frac{a}{ 2 N_y(1+\ln(N_x))} \right)
\end{align*}
for all $(a,b) \in \mathbb{R}^2$ except for an outage set of measure $\gamma$ for any $\gamma \in (0,1]$.
\end{lem}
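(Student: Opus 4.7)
\textbf{Proof proposal for Lemma~\ref{lem:card:dmin:measurebound}.}

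The plan is a union-bound argument of Diophantine flavor: for each admissible pair $(z_x,z_y)$ the set of coefficients $(a,b)$ on which $|a z_x - b z_y|$ is small forms a thin strip around the line $\{a z_x = b z_y\}$. Summing the measures of those strips over all $O(N_x N_y)$ pairs and using the harmonic estimate $\sum_{k=1}^{N} 1/k \le 1 + \ln N$ will produce the stated logarithmic factors. By the symmetry $(z_x,z_y)\mapsto(-z_x,-z_y)$ it is enough to consider $(z_x,z_y)$ with, say, $z_x > 0$ (which halves all counts), and by rescaling we may fix one of $a,b$ (say $b=1$) and work with the single variable $t := a/b$; the claim then becomes a uniform Diophantine lower bound on $|t z_x - z_y|$.

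First I would treat the two quantities inside the $\max$ separately, since
\[
f(z_x,z_y) \ge \gamma \max(A,B) \quad\Longleftrightarrow\quad f \ge \gamma A \ \text{and}\ f \ge \gamma B.
\]
So the total outage is contained in $E_A \cup E_B$, and it suffices to bound each $E_A,E_B$ by $\gamma/2$. For the first piece, $|t z_x - z_y| < \gamma/(2 N_x(1+\ln N_y))$ puts $t$ into an interval of length $\gamma/(N_x (1+\ln N_y)\,|z_x|)$ around $z_y/z_x$. Summing over $z_y \in [-N_y:N_y]\setminus\{0\}$ (a factor $2 N_y$) and then over $z_x \in [1:N_x]$ using $\sum_{z_x=1}^{N_x} 1/z_x \le 1+\ln N_x$ yields a total measure that, after choosing the numerical constants in the denominators carefully, is bounded by $\gamma/2$. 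The second piece, $|t z_x - z_y| < \gamma\,|t|/(2 N_y(1+\ln N_x))$, is handled symmetrically by changing variables to $u := 1/t = b/a$ and repeating the harmonic-sum bound with $(N_x,N_y)$ and $(z_x,z_y)$ swapped. A union bound then gives $m(E_A \cup E_B) \le \gamma$, which is exactly the ``outage set'' claimed.

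The main obstacle I foresee is purely bookkeeping: the asymmetric placement of $(1+\ln N_y)$ in the first bound and $(1+\ln N_x)$ in the second is tailored precisely so that, after multiplying by the harmonic sums $H_{N_x}$ and $H_{N_y}$ respectively in the union-bound step, the resulting numerical constants collapse to $\gamma/2$ each. Chasing those constants while keeping the normalization consistent (and handling the two degenerate sub-cases $z_x=0$ or $z_y=0$ in the larger Proposition~\ref{prop:card:dmin:measurebound} separately, as is already done there) is where the argument is most delicate. A secondary subtlety is that the outage set is a subset of $\mathbb{R}^2$ and is naturally scale-invariant along rays $(a,b)\mapsto(\lambda a,\lambda b)$, so the Lebesgue measure statement should be read after the normalization $b=1$ (or, equivalently, after restricting to a fundamental domain of the scaling action); this is consistent with how the lemma is invoked in the proof of Proposition~\ref{prop:card:dmin:measurebound}, where one ultimately works with the ratio of channel gains.

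Once the two strip-measure estimates are in place, the lower bound on $f$ in each of the two cases follows by inspection: outside $E_A\cup E_B$ one has simultaneously $|a z_x - b z_y| \ge \gamma b/(2 N_x(1+\ln N_y))$ and $|a z_x - b z_y| \ge \gamma a/(2 N_y(1+\ln N_x))$, and hence $f \ge \gamma \max(\cdot,\cdot)$ as claimed. No deep tool beyond a union bound and the elementary harmonic estimate is needed; the art is entirely in the choice of constants that make the two pieces add to $\gamma$.
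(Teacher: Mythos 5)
Your high-level plan---bad coefficient pairs lie in thin strips around the lines $\{a z_x = b z_y\}$, union-bound the strips, and use $\sum_{k\le N}1/k \le 1+\ln N$---is indeed the engine of the paper's proof. But the execution as sketched has a genuine gap. After you normalize $b=1$ and measure in $t=a/b$, the bad interval around $z_y/z_x$ has length proportional to $1/z_x$, so the harmonic sum is forced onto $z_x$ and the linear count onto $z_y$; with your threshold $\gamma/(2N_x(1+\ln N_y))$ the total measure of $E_A$ comes out of order $\gamma\,N_y(1+\ln N_x)/\bigl(N_x(1+\ln N_y)\bigr)$, which is not $\le \gamma/2$ (take $N_x=N_y$, or $N_y\gg N_x$). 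In the one-dimensional coordinate $t$ you can only ever produce the pairing $N_y(1+\ln N_x)$, and handling the other piece "symmetrically in $u=1/t$" does not repair the union bound, because $t\mapsto 1/t$ does not preserve Lebesgue measure: you would be adding outage measures computed in two different coordinate systems, which is not a valid bound on the measure of $E_A\cup E_B$ in any single space. Moreover, even if both events were controlled in a common space, a two-event union bound with the constants as stated gives total outage $2\gamma$, not $\gamma$; the constants are fixed by the lemma, so they cannot be "chosen carefully" to make each piece $\gamma/2$.

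The paper avoids both problems by staying two-dimensional. After reducing to $a,b>0$ and $z_x\in[1:N_x]$, $z_y\in[1:N_y]$, it writes $a=a_{xx}\lceil a\rceil$, $b=b_{yy}\lceil b\rceil$ and works with $(a_{xx},b_{yy})\in[0,1]^2$ (this normalization is also what makes the measure statement meaningful despite the scale invariance you correctly flag). The area of the \emph{single} bad set $\{\exists(z_x,z_y): |a z_x-b z_y|\le\epsilon\}$ is then bounded in two different ways: slicing each strip along the $b_{yy}$ direction gives width $2\epsilon/(\lceil b\rceil z_y)$ and total $2\epsilon N_x(1+\ln N_y)/\lceil b\rceil$, while slicing along $a_{xx}$ gives $2\epsilon N_y(1+\ln N_x)/\lceil a\rceil$. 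Since both expressions bound the measure of the same set, one takes their minimum, sets it equal to $\gamma$, and solves for $\epsilon$, which yields $f\ge \gamma\max\bigl(b/(2N_x(1+\ln N_y)),\,a/(2N_y(1+\ln N_x))\bigr)$ outside one outage set of measure $\gamma$, with no factor-of-two loss. To salvage your splitting strategy you would need to (i) measure both events in the same normalized two-dimensional space, pairing the $b$-proportional threshold with the $b$-direction slicing and the $a$-proportional threshold with the $a$-direction slicing, and (ii) either halve the constants in the lemma or accept an outage of measure $2\gamma$.
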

\begin{IEEEproof}
First observe that w.l.o.g. we can assume that $a,b \in \mathbb{R}^{+}$ and $z_x \in [1:N_x]$ and $z_y \in [1:N_y]$.
This is because if ${\rm sign}(az_x) \neq {\rm sign}(bz_y)$ then the function is minimized by $|z_x|=1$ and $|z_y|=1$ 
and attains a value of $f=|a|+|b|$.
Define 
\begin{align*}
a_{xx}&=\frac{a}{\lceil a \rceil} \in [0, 1],\\
b_{yy}&=\frac{b}{\lceil b \rceil} \in [0, 1],\\
\bar{a}&=\lceil a \rceil \in \mathbb{N},\\
\bar{b}&=\lceil b \rceil \in \mathbb{N},
\end{align*}
and let 
\begin{align*}
A_{\epsilon}&= \left \{ (a_{xx},b_{yy}) \in [0,1]^2: \min_{1 \leq z_x \leq N_x, \, 1 \leq z_y \leq N_y} |a_{xx}\bar{a}z_x-b_{yy}\bar{b}z_y| > \epsilon \right\}\\
&= \bigcap_{1 \leq z_x \leq N_x, \, 1 \leq z_y \leq N_y} \left \{ (a_{xx},b_{yy}) \in [0,1]^2:  |a_{xx}\bar{a}z_x-b_{yy}\bar{b}z_y| > \epsilon \right\}\\
&=\bigcap_{1 \leq z_x \leq N_x, \, 1 \leq z_y \leq N_y} A(z_x,z_y),
\end{align*}
 where $A_{\epsilon}(z_x,z_y)=\left \{ (a_{xx},b_{yy}) \in [0,1]^2:  |a_{xx}\bar{a}z_x-b_{yy}\bar{b}z_y| > \epsilon \right\}$ and for some $\epsilon>0$.
The shape of $A_{\epsilon}(z_x,z_y)$ is shown on Fig.~\ref{fig:outageStrip}.
Let $A_\epsilon^c$ be the complement of $A_\epsilon$ where we have 
\begin{align}
 A^c_{\epsilon}&=\bigcup_{1 \leq z_x \leq N_x, \, 1 \leq z_y \leq N_y} A_\epsilon^c(z_x,z_y)
\end{align}
where  $A^c_{\epsilon}(z_x,z_y)=\left \{ (a_{xx},b_{yy}) \in [0,1]^2:  |a_{xx}\bar{a}z_x-b_{yy}\bar{b}z_y| \leq \epsilon \right\}$.
 
Next, we find the measure of the set   $ A_\epsilon^c$ as follows:
\begin{align*}
m(A^c_{\epsilon})
  &= m \left( \bigcup_{1 \leq z_x \leq N_x, \, 1 \leq z_y \leq N_y} A_\epsilon^c(z_x,z_y)\right) 
\\&\leq \sum_{1 \leq z_x \leq N_x, \, 1 \leq z_y \leq N_y} m( A_\epsilon^c(z_x,z_y))
\end{align*}
where to the inequality are  due to the sub-additive of measure.

\begin{figure}
\center
\includegraphics[width=12cm]{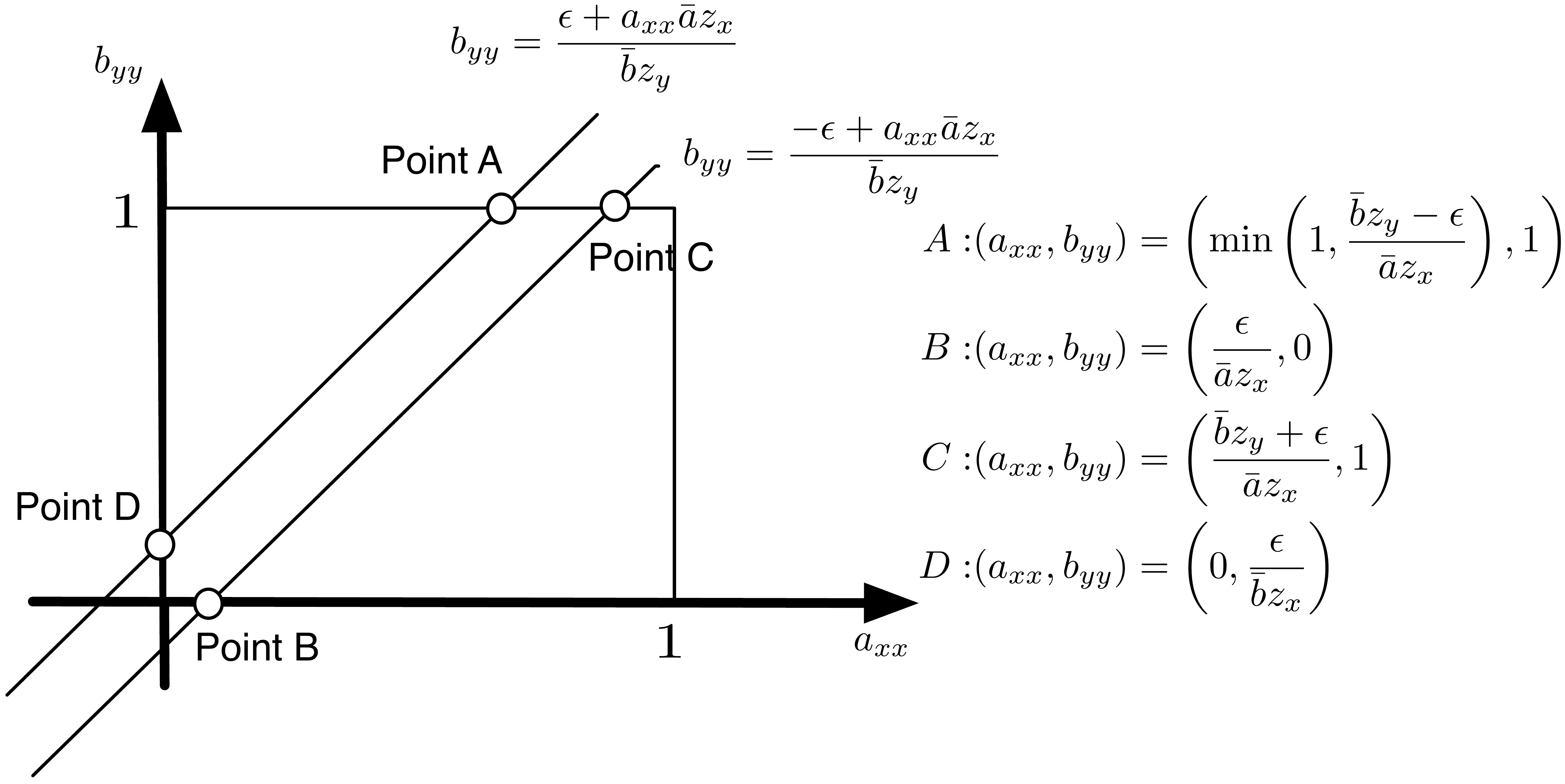}
\caption{Shape of the outage strip.}
\label{fig:outageStrip}
\end{figure}

Next, we compute $m(A_\epsilon^c(z_x,z_y))$ as follows
\begin{align*}
&m(A_{\epsilon}^c(z_x,z_y))
\\&=\int_{a_{xx}=0}^{a_{xx}=\text{Point} A} \frac{\epsilon+a_{xx}\bar{a}z_x}{b z_y}  da_{xx}-\int_{a_{xx}=\text{Point B}}^{a_{xx}=\text{Point A}}\frac{-\epsilon+a_{xx}\bar{a}z_x}{\bar{b} z_y}  da_{xx}\\
&=\int_{a_{xx}=0}^{a_{xx}=\min \left(1,\frac{\bar{b}z_y-\epsilon}{\bar{a}z_x} \right)} \frac{\epsilon+a_{xx}\bar{a}z_x}{b z_y}  da_{xx}-\int_{a_{xx}=\frac{\epsilon}{\bar{a}z_x}}^{a_{xx}=\min \left(1,\frac{\bar{b}z_y-\epsilon}{\bar{a}z_x} \right)}\frac{-\epsilon+a_{xx}\bar{a}z_x}{\bar{b} z_y}  da_{xx}\\
&=\left. \frac{{a_{xx}}\, \left(2\, \epsilon + \bar{a}\, {a_{xx}}\, {z_x}\right)}{2\, \bar{b}\, {z_y}} \right |^{a_{xx}=\min \left(1,\frac{\bar{b}z_y-\epsilon}{\bar{a}z_x} \right)}_{a_{xx}=0}-   \left( -\left. \frac{{a_{xx}}\, \left(2\, \epsilon - \bar{a}\, {a_{xx}}\, {z_x}\right)}{2\, \bar{b}\, {z_y}} \right |^{a_{xx}=\min \left(1,\frac{\bar{b}z_y-\epsilon}{\bar{a}z_x} \right)}_{a_{xx}=\frac{\epsilon}{\bar{a}z_x}}  \right)\\
&=\frac{\min\left(1,- \frac{\left(\epsilon - \bar{b}\, {z_y}\right)}{\bar{a}\, {z_x}}\right)\, \left(2\, \epsilon + \bar{a}\, {z_x}\, \min\left(1,- \frac{\left(\epsilon - \bar{b}\, {z_y}\right)}{a\, {z_x}}\right)\right)}{2\, \bar{b}\, {z_y}}-\frac{{\left(\epsilon - \bar{a}\, z_x\, \min\left(1,- \frac{\left(\epsilon - \bar{b}\, {z_y}\right)}{\bar{a}\, {z_x}}\right)\right)}^2}{2\, \bar{a}\, \bar{b}\, {z_x}\, {z_y}}\\
&=- \frac{\epsilon\, \left(\epsilon - 4\, \bar{a}\, {z_x}\, \min\left(1,- \frac{\left(\epsilon - \bar{b}\, {z_y}\right)}{\bar{a}\, {z_x}}\right)\right)}{2\, \bar{a}\, \bar{b}\, {z_x}\, {z_y}}.
\end{align*}
Next, compute $m(A_{\epsilon}^c)$ as follows
\begin{align*}
&m(A_{\epsilon}^c)
\\&= \sum_{1 \leq z_x \leq N_x, \, 1 \leq z_y \leq N_y} - \frac{\epsilon\, \left(\epsilon - 4\, \bar{a}\, {z_x}\, \min\left(1,- \frac{\left(\epsilon - \bar{b}\, {z_y}\right)}{\bar{a}\, {z_x}}\right)\right)}{2\, \bar{a}\, \bar{b}\, {z_x}\, {z_y}}\\
&=\sum_{1 \leq z_x \leq N_x, \, 1 \leq z_y \leq N_y} \frac{-\epsilon^2}{2\, \bar{a}\, \bar{b}\, {z_x}\, {z_y}}+\sum_{1 \leq z_x \leq N_x, \, 1 \leq z_y \leq N_y} \frac{4 \epsilon \, \bar{a}\, {z_x}\, \min\left(1,- \frac{\left(\epsilon - \bar{b}\, {z_y}\right)}{\bar{a}\, {z_x}}\right)}{2\, \bar{a}\, \bar{b}\, {z_x}\, {z_y}}\\
& \leq \sum_{1 \leq z_x \leq N_x, \, 1 \leq z_y \leq N_y} \frac{4 \epsilon \, \bar{a}\, {z_x}\, \min\left(1,- \frac{\left(\epsilon - \bar{b}\, {z_y}\right)}{\bar{a}\, {z_x}}\right)}{2\, \bar{a}\, \bar{b}\, {z_x}\, {z_y}}.
\end{align*}

The term $\min\left(1,- \frac{\left(\epsilon - \bar{b}\, {z_y}\right)}{\bar{a}\, {z_x}}\right)$ can be upper bounded in two different ways
\begin{align}
\min\left(1,- \frac{\left(\epsilon - \bar{b}\, {z_y}\right)}{\bar{a}\, {z_x}}\right) &\leq 1,
\label{eq:first bound on min}\\
\min\left(1,- \frac{\left(\epsilon - \bar{b}\, {z_y}\right)}{\bar{a}\, {z_x}}\right) &\leq - \frac{\left(\epsilon - \bar{b}\, {z_y}\right)}{\bar{a}\, {z_x}}.
\label{eq:second bound on min}
\end{align}
With the first upper bound in~\eqref{eq:first bound on min} we get
\begin{align}
m(A_{\epsilon}^c)
  &\leq \sum_{1 \leq z_x \leq N_x, \, 1 \leq z_y \leq N_y} \frac{4 \epsilon \, \bar{a}\, {z_x}\,}{2\, \bar{a}\, \bar{b}\, {z_x}\, {z_y}} \notag
\\&\leq \frac{2 \epsilon N_x (1+\ln(N_y))}{\bar{b}} \label{eq: fist bound on measure}
\end{align}
where for the last inequality we have used  $\sum_{zy=1}^{N_y} \frac{1}{z_y} \leq 1+\ln(N_y)$.
With the second upper bound in~\eqref{eq:second bound on min} we get 
\begin{align}
m(A_{\epsilon}^c)
  &\leq \sum_{1 \leq z_x \leq N_x, \, 1 \leq z_y \leq N_y} \frac{4 \epsilon \,  (\bar{b}z_y-\epsilon)\,}{2\, \bar{a}\, \bar{b}\, {z_x}\, {z_y}}  \notag
\\&\leq \sum_{1 \leq z_x \leq N_x, \, 1 \leq z_y \leq N_y} \frac{4 \epsilon \,  }{2\, \bar{a}\,  {z_x}} \notag
\\&\leq \frac{2 \epsilon N_y (1+\ln(N_x))}{\bar{a}}.  \label{eq: second bound on measure}
\end{align}
So by taking the tightest of the two bounds  in~\eqref{eq: fist bound on measure} and in~\eqref{eq: second bound on measure} we get 
\begin{align*}
m(A_{\epsilon}^c)& \leq \min \left(\frac{2 \epsilon N_x (1+\ln(N_y))}{\bar{b}} ,\frac{2 \epsilon N_y (1+\ln(N_x))}{\bar{a}} \right).
\end{align*}

Now let $m(A_{\epsilon}^c)=\gamma$ for some $\gamma \in [0,1]$ then we have that
\begin{align*}
\gamma \leq \epsilon \min \left(\frac{ 2 N_x(1+\ln(N_y))}{\bar{b}},\frac{ 2 N_y(1+\ln(N_x))}{\bar{a}} \right).
\end{align*}
Next, by solving for $\epsilon$  in terms of measure of the outage,
\begin{align*}
\epsilon 
  &\geq  \frac{\gamma}{\min \left(\frac{ 2 N_x(1+\ln(N_y))}{\bar{b}},\frac{ 2 N_y(1+\ln(N_x))}{\bar{a}} \right) }
\\&=\gamma \max \left(\frac{\bar{b}}{ 2 N_x(1+\ln(N_y))},\frac{\bar{a}}{ 2 N_y(1+\ln(N_x))} \right)
\\& \geq  \gamma \max \left(\frac{b_{yy}\bar{b}}{ 2 N_x(1+\ln(N_y))},\frac{a_{xx}\bar{a}}{ 2 N_y(1+\ln(N_x))} \right)
\\&=\gamma \max \left(\frac{b}{ 2 N_x(1+\ln(N_y))},\frac{a}{ 2 N_y(1+\ln(N_x))} \right).
\end{align*}
This concludes the proof.
\end{IEEEproof}

\section{Gap for $\inr \leq \snr \leq 1+\inr$}
\label{sec:gaps s=i TDMA}

\paragraph*{Outer Bound for $\inr \leq \snr \leq 1+\inr$}
It is well know that when $\snr\approx\inr$ time-division is approximately optimal~\cite{etkin_tse_wang}.
In this regime we outer bound the capacity region by the sum-rate constraint in~\eqref{eq:R upper classical IC kra1} only, which in the symmetric case is
\begin{align*}
R_1 + R_2 
  &\leq \mug\left(\snr\right)-\mug\left(\inr\right)+\mug\left(\snr+\inr\right)
\\&=    \mug\left(\snr\right)+\mug\left(\frac{\snr}{1+\inr}\right)
\\&\leq \mug\left(\snr\right)+\frac{1}{2}\log(2),
\end{align*}
that is
\begin{align*}
\mathcal{R} _{\text{out}}^{(\text{\ref{sec:gaps s=i TDMA}})}
&= \bigcup_{t \in [0,1]} \left \{ \begin{array}{l}
R_1 \leq t       \left(\mug\left(\snr\right)+\frac{1}{2}\log(2)\right) \\
R_2 \leq + (1-t) \left(\mug\left(\snr\right)+\frac{1}{2}\log(2)\right) \\
\end{array} \right\}.
\end{align*}

\paragraph*{Inner Bound for $\inr \leq \snr \leq 1+\inr$}
We only use the discrete part of the mixed inputs and set
\begin{subequations}
\begin{align}
N_1 &= \points\left(\snr_{1,t}\right), \
\snr_{1,t} := \left(1+\snr\right)^{t}-1 \leq \snr,
\\
N_2 &= \points\left(\snr_{2,t}\right), \
\snr_{2,t} := \left(1+\snr\right)^{1-t}-1 \leq \snr,
\\
\delta_1&=0,
\\
\delta_2&=0.
\end{align}
Note that 
\begin{align}
N_1^2N_2^2-1 \leq (1+\snr_{1,t})(1+\snr_{2,t})-1 = \snr.
\end{align}
\label{eq:all N bounds for snr=inr}
\end{subequations}

We lower bound the minimum distance of the sum-set constellations as in~\eqref{eq:achregion for par:strong mindmin} and we get

\begin{align*}
\min_{i\in[1:2]}\frac{d_{\min ({S}_i)}^2}{12} 
&\geq   \kappa_{\gamma,N_1,N_2}^2 \min\left( 
\frac{\min(\snr,\inr)}{\max(N_1^2,N_2^2)-1},
\frac{\max(\snr,\inr)}{N_1^2N_2^2-1} \right) 
\\& \stackrel{\text{for $\inr \leq \snr$ and~\eqref{eq:all N bounds for snr=inr}}}{\geq}  
\kappa_{\gamma,N_1,N_2}^2 \min\left( 
\frac{\inr}{\snr},
\frac{\snr}{\snr} \right)
\\& \stackrel{\text{for $\snr \leq 1+\inr$}}{\geq}   
\kappa_{\gamma,N_1,N_2}^2 \min\left( 
\frac{\inr}{1+\inr},
1 \right)
\\& = 
\kappa_{\gamma,N_1,N_2}^2 
\frac{\inr}{1+\inr}
\\& \stackrel{\text{for $1 \leq \inr$}}{\geq} 
\kappa_{\gamma,N_1,N_2}^2 
\frac{1}{2}.
\end{align*}

\paragraph*{Gap for $\inr \leq \snr \leq 1+\inr$}
Similarly to the strong interference regime, we can upper bound the difference between the upper and lower bounds as
\begin{align*}
\gap 
&\leq \max \left(
  \mug\left(\snr_{1,t}\right) +\frac{t}{2}\log(2)  - \log(\points\left(\snr_{1,t}\right)), \
  \mug\left(\snr_{2,t}\right) +\frac{1-t}{2}\log(2)- \log(\points\left(\snr_{2,t}\right))
\right) 
\\&\quad
+ \frac{1}{2}\log\left(\frac{2\pi\eu}{12}\right) 
+ \frac{1}{2}\log\left(1+\frac{2}{\kappa_{\gamma,N_1,N_2}^2} \right) 
\\&\leq
  \frac{1}{2}\log\left(\frac{4\pi\eu}{3}\right) 
+ \frac{1}{2}\log\left(1+8 \cdot \frac{\left(1+1/2\ln({1+\snr})\right)^2}{\gamma^2}\right).
\end{align*}

\section{Auxiliary Results for Regime Weak1}
\label{sec:gaps weak1}
We derive here some auxiliary results for the regime in~\eqref{eq: condition for Weak1}, namely
\begin{align*}
&(1+\inr) \leq \snr \leq \inr(1+\inr),
\\
&\frac{1+\snr}{1+\inr+\frac{\snr}{1+\inr}} \leq \frac{1+\inr+\frac{\snr}{1+\inr}}{1+\frac{\snr}{1+\inr}}.
\end{align*}

\subsection{Derivation of~\eqref{eq:sec:gaps:WeakType1:R1+R2 face}}
\label{sec:gaps:WeakType1:R1+R2 face}

The parameters of the mixed inputs are given in~\eqref{eq:achregion for par:moderate choiceALL: R1+R2sumrate}. 
We aim to derive bounds on $\max(N_1^2,N_2^2)$ and $N_1^2N_2^2$ and used them 
to find the lower bound  on minimum distance in~\eqref{eq:sec:gaps:WeakType1:R1+R2 face}.

The mixed input parameters are given in~\eqref{eq:achregion for par:moderate choiceALL: R1+R2sumrate}.
We have
\begin{align}
\max(N_1^2,N_2^2)-1 
  &\leq \max(\snr_{1,a,t},\snr_{1,b,t}) 
\notag\\&
\leq \max \left(\frac{1+\inr +\frac{\snr}{1+\inr}}{1+\frac{\snr}{1+\inr}},\frac{1+\snr }{1+\inr+\frac{\snr}{1+\inr}}\right) -1
\notag\\&
\stackrel{\text{from~\eqref{eq: condition for Weak1}}}{\leq}
\frac{1+\inr +\frac{\snr}{1+\inr}}{1+\frac{\snr}{1+\inr}}-1
\notag\\&
= \frac{\inr}{1+\frac{\snr}{1+\inr}}
\label{eq: max(N1,N2)bound {sec:gaps:WeakType1:R1+R2 face}}
\\& \leq \inr = \min(\snr,\inr),
\notag
\end{align}
and
\begin{align}
N_1^2N_2^2-1 
&\leq (\snr_{1,a,t}+1)(\snr_{1,b,t}+1)-1
\notag\\&
\stackrel{\text{from~\eqref{eq:achregion for par:moderate choiceALL: R1+R2sumrate}}}{\leq}
\left( \frac{1+\inr +\frac{\snr}{1+\inr}}{1+\frac{\snr}{1+\inr}}\right)\left( \frac{1+\snr}{1+\inr+\frac{\snr}{1+\inr}} \right)-1
\notag\\&
= \frac{1+\snr}{1+\frac{\snr}{1+\inr}} - 1
\notag\\&
=\frac{\inr\frac{\snr}{1+\inr}}{1+\frac{\snr}{1+\inr}}
\label{eq: N1*N2bound {sec:gaps:WeakType1:R1+R2 face}}
\\&
\leq \inr. 
\notag
\end{align}

Recall the definition of $\kappa_{\gamma,N_1,N_2}$ in~\eqref{eq:achregion for par:moderate kappakappa definitiontobeusedlater}.
By plugging the bounds in~\eqref{eq: max(N1,N2)bound {sec:gaps:WeakType1:R1+R2 face}}-\eqref{eq: N1*N2bound {sec:gaps:WeakType1:R1+R2 face}}
into~\eqref{eq:achregion for par:moderate kappakappa definitiontobeusedlater tuttituttissimi} we get
\begin{align}
\min_{i \in[1:2]} \frac{d_{\min ({S}_i)}^2}{12 \ \kappa_{\gamma,N_1,N_2}^2 }
&\geq 
\frac{1-\max(\delta_1,\delta_2)}{1+(\snr+\inr)\max(\delta_1,\delta_2)}
\min\left(
\frac{\inr}{\max(N_1^2,N_2^2)-1}, 
\frac{\snr}{N_1^2 N_2^2-1}
\right)
\notag\\
&\geq %
\frac{\inr}{1+\snr+2\inr}
\min\left(
\frac{\inr(1+\snr+\inr)}{\inr(1+\inr)}, 
\frac{\snr(1+\inr+\snr)}{\snr\inr}
\right)
\notag\\
&=
\frac{1+\snr+\inr}{1+\snr+2\inr}\cdot\frac{\inr}{1+\inr}
\notag\\
&\stackrel{1\leq \inr\leq\snr}{\geq}
\frac{1+2\inr}{1+3\inr}\cdot\frac{\inr}{1+\inr}
\geq \frac{3}{8}.
\label{eq: min(dmin)bound {sec:gaps:WeakType1:R1+R2 face}}
\end{align}

Note that the above derivation assumes $1\leq \inr$; this restriction is without loss of generality since for 
$\inr \leq 1$ TIN with Gaussian codebooks is optimal to within $1/2$~bit~\cite{etkin_tse_wang}.
Note also that the minimum distance lower bound holds up to an outage set of measure less than $\gamma$,
where $\gamma$ is a tunable parameter;
the reason why we need an outage set in this regime is the same as in Remark~\ref{rem:why cannot use Prop nooutage}.

\subsection{Derivation of~\eqref{eq:sec:gaps:WeakType1:2R1+R2 face}}
\label{sec:gaps:WeakType1:2R1+R2 face}

We aim to derive different bounds involving $N_1^2$ and $N_2^2$ and used them 
in the minimum distance lower bound in~\eqref{eq:achregion for par:moderate kappakappa definitiontobeusedlater tuttituttissimi}.

From~\eqref{eq:achregion for par:moderate choiceN1: 2R1+R2} we have
\begin{align}
N_1^2 -1 
&\leq \snr_{2,a,t} \leq \max(\snr_{2,a,0},\snr_{2,a,1})
\notag\\&
\leq 
\frac{\max\left(1+\inr+\frac{\snr}{1+\inr}, 1+\snr\right)}{1+\frac{\snr}{1+\inr}}-1
\notag\\&
\stackrel{\text{for $1+\inr \leq \snr$ }}{\leq}
\frac{1+\snr}{1+\frac{\snr}{1+\inr}}-1
\notag\\&
= \inr \cdot \frac{\frac{\snr}{1+\inr}}{1+\frac{\snr}{1+\inr}}
\leq \inr;
\label{eq: N1bound {sec:gaps:WeakType1:2R1+R2 face}}
\end{align}
from~\eqref{eq:achregion for par:moderate choiceN2: 2R1+R2} we have
\begin{align}
N_2^2 -1 
&\leq \snr_{2,b,t} \leq \max(\snr_{2,b,0},\snr_{2,b,1})
\notag\\&
\leq %
\frac{1+\inr+\snr}{1+\inr+\frac{\snr}{1+\inr}}
-1
\notag\\&
= \frac{\inr \cdot \frac{\snr}{1+\inr}}{1+\inr+\frac{\snr}{1+\inr}}
\notag\\&
\leq \min\left( \inr, \frac{\snr}{1+\inr} \right) 
=\frac{\snr}{1+\inr},
\label{eq: N2bound {sec:gaps:WeakType1:2R1+R2 face}}
\end{align}
finally
\begin{align}
\max(N_1^2,N_2^2)-1\leq \max\left( \inr, \frac{\snr}{1+\inr} \right) = \inr = \min(\snr,\inr).
\label{eq: max(N1,N2)bound {sec:gaps:WeakType1:2R1+R2 face}}
\end{align}
We also have
\begin{align}
N_1^2N_2^2-1 
  &\leq (\snr_{2,a,t}+1)(\snr_{2,b,t}+1)-1
\notag\\& = \frac{ \left(1+\snr\right)^{1-t} \left(  1+\inr+\snr\right)^{t} }{1+\frac{\snr}{1+\inr}}
        - 1
\notag\\&\leq \frac{ 1+\inr+\snr }{1+\frac{\snr}{1+\inr}}
        - 1 = \inr.
\label{eq: N1*N2bound {sec:gaps:WeakType1:2R1+R2 face}}
\end{align}

In this regime, as we shall soon see, it also important to bound
\begin{align}
(1+\snr \delta_2)(1+\snr_{2,a,t})
  &= \frac{1+\snr}{1+\frac{\snr}{1+\inr}}
  \left(  \frac{1+\inr+\frac{\snr}{1+\inr}}{1+\inr} \right)^{t}
\notag\\&
\leq \frac{1+\snr}{1+\snr+\inr}
 \left(1+\inr+\frac{\snr}{1+\inr}\right).%
\label{eq: N1*d2bound {sec:gaps:WeakType1:2R1+R2 face}}
\end{align}

We next bound the minimum distances. 
Recall that $\kappa_{\gamma,N_1,N_2}$ is given in~\eqref{eq:achregion for par:moderate kappakappa definitiontobeusedlater}.

With~\eqref{eq:achregion for par:moderate dminS1lowerbound} we have
\begin{align}
\frac{d_{\min ({S}_1)}^2}{12 \  \kappa_{\gamma,N_1,N_2}^2} 
  &\geq \frac{1-\max(\delta_1,\delta_2)} {1+\snr \delta_1+\inr \delta_2} 
  \min\left( \frac{\inr}{N_2^2-1},\frac{\snr}{N_1^2N_2^2-1} \right)
\notag\\&
 \stackrel{\rm (a)}{\ge} \frac{\inr }{1+\snr +2\inr }
  \min\left( \frac{\inr}{N_2^2-1},\frac{\snr}{N_1^2N_2^2-1} \right)
\notag\\&
 \stackrel{\rm (b)}{\ge} \frac{\inr }{1+\snr +2\inr }
\min\left( \frac{\inr\left(1+\inr\right) }{\snr},\frac{\snr}{\inr} \right)
\notag\\&
=
\min\left( \frac{\inr^2\left(1+\inr\right) }{\snr(1+\snr +2\inr)},\frac{\snr}{1+\snr +2\inr} \right)
\notag\\&
 \stackrel{\rm (c)}{\ge} \min\left( \frac{\inr^2\left(1+\inr\right) \left(1+\frac{\snr}{1+\inr} \right) }{(1+\snr +2\inr) \left(1+\inr+\frac{\snr}{1+\inr} \right)^2},\frac{\snr}{1+\snr +2\inr} \right)
\notag\\&
  \stackrel{\rm (d)}{\ge} \min\left( \frac{\inr^2 \left(1+\inr+\snr \right) }{(1+\snr +2\inr) {\left(1+2\inr \right)^2}},
  {\frac{1+\inr}{2+3\inr}} 
  \right)
\notag\\&
  \stackrel{\rm (e)}{\ge} {\min\left( \frac{2\inr^2 \left(1+\inr\right) }{(2+3\inr) \left(1+2\inr \right)^2},
  \frac{1+\inr}{2+3\inr} 
  \right)}
\notag\\&
  \stackrel{\rm (f)}{\ge} {\min \left(\frac{4}{45}, \frac{1}{3} \right) = \frac{4}{45}},
\label{eq: dmin(S1)bound {sec:gaps:WeakType1:2R1+R2 face}}
\end{align}
where the inequalities follow since:
(a) $\delta_2\leq \delta_1 = \frac{1}{1+\inr}$,
(b) from~\eqref{eq: N2bound {sec:gaps:WeakType1:2R1+R2 face}} and~\eqref{eq: N1*N2bound {sec:gaps:WeakType1:2R1+R2 face}}, 
(c) from \eqref{eq: condition for Weak2} we have $\snr \le 1+\snr \le  \frac{\left(1+\inr+\frac{\snr}{1+\inr}\right)^2}{1+\frac{\snr}{1+\inr}}$, and
(d) where we have used $1+\inr \le \snr, \ \frac{\snr}{1+\inr} \leq \inr$ 
for 1st term must use largest $\frac{\snr}{1+\inr}$ while for 2nd smallest $\snr$ which $1+\inr$,
(e) since $1+\inr \le \snr$,
(f) comes from using $1 \le \inr$.

With~\eqref{eq:achregion for par:moderate dminS2lowerbound}, and recalling that $1+\snr\delta_2 =  \left(\frac{1+\snr}{1+\inr}\right)^t$ from~\eqref{eq: choice of delta2: moderate: 2R1+R2} and  $\snr_{2,a,t}$ in~\eqref{eq:achregion for par:moderate choiceN1: 2R1+R2}, we have
\begin{align}
\frac{d_{\min ({S}_2)}^2}{12 \  \kappa_{\gamma,N_1,N_2}^2}
  &\geq \frac{1-\max(\delta_1,\delta_2)} {1+\snr \delta_2+\inr \delta_1} 
  \min\left( \frac{\inr}{N_1^2-1},\frac{\snr}{N_1^2N_2^2-1} \right)
\notag\\&
 \stackrel{\rm (a)}{\geq} \frac{ \frac{\inr}{1+\inr} } {1+\snr\delta_2 + \frac{\inr}{1+\inr}}
  \min\left( \frac{\inr}{\snr_{2,a,t}},\frac{\snr}{\inr} \right)
\notag\\&
  \stackrel{\rm (b)}{\geq}
  \min\left( 
 \frac{ \frac{\inr^2}{1+\inr} } {2(1+\snr\delta_2)(1+\snr_{2,a,t})},
 \frac{ \frac{\snr}{1+\inr}   } {1+\frac{\snr}{1+\inr} + \frac{\inr}{1+\inr}}
 \right)
\notag\\&
  \stackrel{\rm (c)}{\geq} %
  \min\left( 
 \frac{ \frac{\inr^2}{1+\inr} (1+\snr+\inr)} {(1+\snr)\left(1+\inr+\frac{\snr}{1+\inr}\right)},
 \frac{\snr}{1+\snr+2\inr}
 \right)
\notag\\&
   \stackrel{\rm (d)}{\geq}
  \min\left( 
 \frac{ \inr^2 (1+\snr+\inr)} {(1+\inr)(1+\snr)\left(1+2\inr\right)},
 \frac{\snr}{1+\snr+2\inr}
 \right) %
\notag\\&
 \stackrel{\rm (e)}{\geq}
  \min\left( 
 \frac{ \inr^2 (1+\inr)^2} {(1+\inr)(1+\inr+\inr^2)\left(1+2\inr\right)},
 \frac{1+\inr}{2+3\inr}
 \right) %
\notag\\&
 \stackrel{\rm (f)}{\geq}
\min \left(\frac{2}{9}, \frac{1}{3} \right) = \frac{2}{9} 
\label{eq: dmin(S2)bound {sec:gaps:WeakType1:2R1+R2 face}}
\end{align}
where the inequalities follow since:
(a) $\delta_2\leq \delta_1 = \frac{1}{1+\inr}$, $N_1^2-1\leq \snr_{2,a,t}$ and~\eqref{eq: N1*N2bound {sec:gaps:WeakType1:2R1+R2 face}},
(b) $(1+\snr\delta_2) + \frac{\inr}{1+\inr} \leq 2+\snr\delta_2 \leq 2(1+\snr\delta_2)$ and $\delta_2\leq \frac{1}{1+\inr}$, and
the rest of the inequalities from the definition of weak interference $1\leq \inr, \ 1+\inr \leq \snr \leq \inr(1+\inr)$, d) since $ \frac{\snr}{1+\inr}\leq\inr$, e) where we have used $1+\inr \le \snr, \ \frac{\snr}{1+\inr} \leq \inr$  for 1st term must use largest $\frac{\snr}{1+\inr}$ while for 2nd smallest $\snr$ which $1+\inr$, f) comes from using $1 \le \inr$.

By putting together~\eqref{eq: dmin(S1)bound {sec:gaps:WeakType1:2R1+R2 face}} and~\eqref{eq: dmin(S2)bound {sec:gaps:WeakType1:2R1+R2 face}}, we obtain~\eqref{eq:sec:gaps:WeakType1:2R1+R2 face}. %

\section{Auxiliary Results for Regime Weak2}
\label{sec:gaps weak2}
We derive here some auxiliary results for the regime in~\eqref{eq: condition for Weak2}, namely
\begin{align*}
&(1+\inr) \leq \snr \leq \inr(1+\inr),
\\
&\frac{1+\snr}{1+\inr+\frac{\snr}{1+\inr}} \geq \frac{1+\inr+\frac{\snr}{1+\inr}}{1+\frac{\snr}{1+\inr}}.
\end{align*}

\subsection{Derivation of~\eqref{eq: dmin WeakType2:R1+R2 face take1}}
\label{sec:gaps:WeakType2:R1+R2 face}

We aim to derive different bounds on $N_1^2,N_2^2,\delta_1$ and $\delta_1$ so as to obtaine in the minimum distance lower bound in~\eqref{eq: dmin WeakType2:R1+R2 face take1}.

The mixed input parameters are in~\eqref{eq:parameters R1+R2 face weak type 2 take1}.
We have
\begin{align}
N_1^2-1&\leq\snr_{3,a,t}\leq \max(\snr_{3,a,0}, \snr_{3,a,1} )
\notag\\&
=\max \left( 
\frac{ \left(1+\frac{\snr}{1+\inr} \right)(1+\snr)}{1+\inr +\frac{\snr}{1+\inr}}, \
\frac{ \left(1+\inr+\frac{\snr}{1+\inr} \right)^3}{ \left(1+\frac{\snr}{1+\inr} \right)(1+\snr)}
     \right)^{\frac{1}{2}} -1
\notag\\&
\stackrel{\text{from~\eqref{eq: condition for Weak2}}}{=} 
\left(1+\frac{\snr}{1+\inr} \right)
\sqrt{ \frac{ 1+\snr}{\left(1+\inr +\frac{\snr}{1+\inr}\right)\left(1+\frac{\snr}{1+\inr} \right)} } -1
\notag\\&\leq \frac{\snr}{1+\inr}.
\label{eq: N1bound {sec:gaps:WeakType2:R1+R2 face}}
\end{align}
Similarly we have 
\begin{align}
 N_2^2-1
 & \leq
\snr_{3,b,t} \leq \frac{\snr}{1+\inr}.
\label{eq: N2bound {sec:gaps:WeakType2:R1+R2 face}}
\end{align}
The bounds in~\eqref{eq: N1bound {sec:gaps:WeakType2:R1+R2 face}}-\eqref{eq: N2bound {sec:gaps:WeakType2:R1+R2 face}} imply
\begin{align}
\max(\delta_1,\delta_2) \leq \frac{\max(\snr_{3,a,t},\snr_{3,b,t})}{\snr} \leq \frac{1}{1+\inr}.
\label{eq: dsbound {sec:gaps:WeakType2:R1+R2 face}}
\end{align}
Finally we have
\begin{align}
\max(N_1^2,N_2^2)-1  \leq
\frac{\snr}{1+\inr} \leq \inr = \min(\snr,\inr),
\end{align}
by the definition of this regime.

We also have 
\begin{align}
N_1^2N_2^2-1
        & \leq (1+\snr_{3,a,t})(1+\snr_{3,b,t})-1
\notag\\&=     (1+\snr \delta_1)(1+\snr_{3,b,t})-1
\notag\\&=     (1+\snr_{3,a,t})(1+\snr \delta_2)-1
\notag\\&= \inr+\frac{\snr}{1+\inr} \leq 2\inr.
\label{eq: N1*N2 {sec:gaps:WeakType2:R1+R2 face}}
\end{align}

With~\eqref{eq:achregion for par:moderate dminS1lowerbound} we have
\begin{align}
 \frac{d_{\min ({S}_1)}^2}{12 \ \kappa_{\gamma,N_1,N_2}^2 }
&\geq \frac{1-\max(\delta_1,\delta_2)}{1+\snr \delta_1+\inr \delta_2}
\min\left(
\frac{\inr}{N_2^2-1}, 
\frac{\snr}{N_1^2 N_2^2-1}
\right)
\notag\\
&\stackrel{\text{(a)}}{\geq}
\frac{ \frac{\inr}{1+\inr}}{1+\snr_{3,a,t}+\frac{\inr}{1+\inr}}
\min\left(
\frac{\inr}{\snr_{3,b,t}}, 
\frac{\snr}{2 \inr }
\right)
\notag\\
&\stackrel{\text{(b)}}{\geq} 
\min\left(
\frac{\frac{\inr^2}{1+\inr}}{2 \left( 1 +\inr+\frac{\snr}{1+\inr}\right)}, 
\frac{\snr \frac{\inr}{1+\inr}}{(1+\frac{\snr}{1+\inr}+\frac{\inr}{1+\inr})2 \inr }
\right)
\notag\\
&=
\min\left(
\frac{\inr^2}{2 \left[(1 +\inr)^2+\snr\right]}, 
\frac{\snr  }{2 \left[ 1+2\inr+\snr \right]}
\right)
\notag\\
&\stackrel{\text{(c)}}{\geq} 
\min\left(
\frac{\inr^2}{2 \left[(1 +\inr)^2+\inr (1+\inr)\right]}, 
\frac{\inr(1+\inr)}{2 \left[ 1+2\inr+\inr(1+\inr) \right]}
\right)
\notag\\
&\stackrel{\text{(c)}}{\geq}
\min\left(
\frac{\inr^2}{2 \left[(1 +\inr)^2+\inr(1+\inr)\right]}, 
\frac{1+\inr}{2 \left[ 2+3\inr \right]}
\right)
\notag\\
&= 
\min\left(
\frac{1}{12}, 
\frac{1}{6}
\right)
= \frac{1}{12},
\end{align}
where the inequalities follow from: 
(a) using~\eqref{eq: choice delta1: weak 2 R1+R2 take1},~\eqref{eq: dsbound {sec:gaps:WeakType2:R1+R2 face}} and~\eqref{eq: N1*N2 {sec:gaps:WeakType2:R1+R2 face}},
(b) using~\eqref{eq: N1*N2 {sec:gaps:WeakType2:R1+R2 face}}, and
(c) since $1\le \inr$ and $1+\inr \le \snr \leq\inr(1+\inr)$. 

By symmetry an equivalent bound can be derived for $d_{\min ({S}_2)}^2$.

Hence minimum distance in \eqref{eq: dmin WeakType2:R1+R2 face take1} is bounded by
\begin{align}
\min_{i\in[1:2]} \frac{d_{\min ({S}_i)}^2}{12} \geq \kappa_{\gamma,N_1,N_2}^2 \frac{1}{12}.
\end{align}

\subsection{Proof of~\eqref{eq: min  WeakType2:2R1+R2 face with loglog gap}}
\label{sec:gaps:WeakType2:2R1+R2 face}

We first derive some bounds on $N_1^2$ and $N_2^2$ that will be useful in bounding minimum distance of the received constellations.

From~\eqref{eq:parameters 2R1+R2 face weak type 2: first choice N1} we have
\begin{align}
N_1^2-1 &\leq \snr_{4,a,t} \leq \max(\snr_{4,a,0},\snr_{4,a,1})
\notag\\&= \frac{1+\snr}{\min\left( 1+\frac{\snr}{1+\inr},  1+\inr+\frac{\snr}{1+\inr}\right)}-1
\notag\\&=    \frac{1+\snr}{1+\frac{\snr}{1+\inr}}-1
\notag\\&=\inr \cdot \frac{\frac{\snr}{1+\inr}}{1+\frac{\snr}{1+\inr}} \leq \inr.
\label{eq: N1bound {sec:gaps:WeakType2:2R1+R2 face}}
\end{align}
Similarly form~\eqref{eq:parameters 2R1+R2 face weak type 2: first choice N2}
\begin{align}
N_2^2-1 &\leq \snr_{4,b,t} \leq \max(\snr_{4,b,0},\snr_{4,b,1})
\notag\\&= \frac{(1+\inr +\frac{\snr}{1+\inr})^{2} }{1+\snr} - 1 
\notag\\&\leq \frac{(1+\snr)(1+\frac{\snr}{1+\inr}) }{1+\snr} - 1
            = \frac{\snr}{1+\inr},
\label{eq: N2bound {sec:gaps:WeakType2:2R1+R2 face}}
\end{align}
where inequality follow the definition of the regime in \eqref{eq: condition for Weak2}.
We also have
\begin{align}
N_1^2N_2^2-1 
   &\leq (1+\snr_{4,a,t})(1+\snr_{4,b,t})-1
\notag\\&= \left(1+\inr +\frac{\snr}{1+\inr}\right)^{1-t} 
          \left(\frac{1+\snr }{1+\frac{\snr}{1+\inr}} \right)^{t} 
-1
\notag\\&\leq \max\left(\inr +\frac{\snr}{1+\inr}, \inr  \frac{\frac{\snr}{1+\inr}}{1+\frac{\snr}{1+\inr}}  \right)
\notag\\&= \inr +\frac{\snr}{1+\inr}\leq 2\inr,
\label{eq: N1*N2bound {sec:gaps:WeakType2:2R1+R2 face}}
\end{align}
where the last inequality follows from $\frac{\snr}{1+\inr}\le \inr$.

From~\eqref{eq:parameters 2R1+R2 face weak type 2: first choice delta2} we have
\begin{align}
\snr\delta_2 
        &\leq \frac{1+\inr+\frac{\snr}{1+\inr}}{1+\frac{\snr}{1+\inr} }  -1
\notag\\&\stackrel{\rm(a)}{\leq} \frac{1+\snr}{1+\inr+\frac{\snr}{1+\inr} }  -1  
\notag\\&=    \frac{\inr \frac{\snr}{1+\inr} - \inr}{1+\inr+\frac{\snr}{1+\inr} }   
\notag\\&\leq \frac{\inr \frac{\snr}{1+\inr} }{1+\inr+\frac{\snr}{1+\inr} }   
\notag\\&\leq \min\left( \inr, \frac{\snr}{1+\inr} \right) 
\notag\\&\stackrel{\rm(b)}{=} \frac{\snr}{1+\inr},
\label{eq: d2bound {sec:gaps:WeakType2:2R1+R2 face}}
\end{align}
where inequalities follow from: 
(a) using definition of the regime in \eqref{eq: condition for Weak2}, and
(b) using $\frac{\snr}{1+\inr}\le \inr$.

As for the derivation in Section~\ref{sec:gaps:WeakType1:2R1+R2 face}, another key bound is
\begin{align}
(1+\snr\delta_1)(1+\snr_{4,b,t})
   &\leq \left(1+\frac{\snr}{1+\inr}\right) 
         \left( \frac{(1+\inr +\frac{\snr}{1+\inr})^{2} }{1+\snr} \right)^{1-t}
\notag\\&\leq  \left(1+\frac{\snr}{1+\inr}\right)\frac{(1+\inr +\frac{\snr}{1+\inr})^{2} }{1+\snr} 
\notag\\&\stackrel{\rm(a)}{\leq}  \frac{1+\inr+\snr}{1+\snr} \cdot \frac{(1+2\inr)^2}{1+\inr}%
\end{align}
where inequalities follow from: 
(a) using $\frac{\snr}{1+\inr}\le \inr$, and
(b) using $\frac{1+\inr+\snr}{1+\snr} \le 2$ and $\frac{1+2\inr}{1+\inr} \le 2$.
Similarly, we have
\begin{align}
(1+\snr_{4,a,t})(1+\snr\delta_2) 
   &\leq \frac{1+\snr}{\left(1+\frac{\snr}{1+\inr}\right)^{t} \left( 1+\inr+\frac{\snr}{1+\inr}\right)^{1-t}}
         \left( \frac{1+\inr+\frac{\snr}{1+\inr}}{1+\frac{\snr}{1+\inr} }\right)^{1-t} 
\notag\\&= \frac{1+\snr}{1+\frac{\snr}{1+\inr}} \leq  1+\inr.
\label{eq: d2*N1bound {sec:gaps:WeakType2:2R1+R2 face}}
\end{align}

By using~\eqref{eq:achregion for par:moderate dminS1lowerbound}, the minimum distance for $S_1$ can be bounded as
\begin{align*}
\frac{d_{\min({S}_1)}^2}{12 \ \kappa_{\gamma,N_1,N_2}^2 }& \geq 
\frac{1-\max(\delta_1,\delta_2)} {1+\snr \delta_1+\inr\delta_2}
\min\left( 
\frac{\inr}{N_2^2-1}, 
\frac{\snr}{N_1^2N_2^2-1}
\right)
\\&
\stackrel{\rm (a)}{ \geq}\frac{\frac{\inr}{1+\inr}}{1+\snr \delta_1+\inr \delta_2}
\min\left( 
\frac{\inr}{\snr_{4,b,t}}, 
\frac{\snr}{2\inr}
\right)
\\& 
\stackrel{\rm (b)}{ \geq} \min\left( 
\frac{\frac{\inr^2}{1+\inr}}{2(1+\snr \delta_1)(1+\snr_{4,b,t})}, 
\frac{\snr}{2(1+\snr+2\inr)}
\right)
\\&\stackrel{\rm (c)}{ \geq}   \min\left( 
{
\frac{\inr^2 (1+\snr)}{2 (1+\inr+\snr) (1+2\inr)^2} }, 
{\frac{1+\inr}{2(2+3\inr)}} 
\right)  \\&
\stackrel{ \rm (d)}{\geq}   \min \left(  \frac{3}{8}, \frac{1}{6} \right)=\frac{1}{6} 
\end{align*}
where inequalities follow from: 
(a) $\max(\delta_1, \delta_2) \leq \frac{1}{1+\inr}$  and  from \eqref{eq: N2bound {sec:gaps:WeakType2:2R1+R2 face}} and \eqref{eq: N1*N2bound {sec:gaps:WeakType2:2R1+R2 face}} we have that  $N_2^2-1 \leq \snr_{4,b,t}$ and $N_1^2N_2^2-1 \leq 2\inr$, 
(b) { from \eqref{eq: d2bound {sec:gaps:WeakType2:2R1+R2 face}} $\delta_2 \le \frac{ 1}{1+\inr}$,
(c)  using \eqref{eq: d2bound {sec:gaps:WeakType2:2R1+R2 face}} we have $\snr_{4,b,t}(2+\snr \delta_1) \le 2(1+\snr_{4,b,t})(1+\snr \delta_1)$, 
(d) $\snr \ge (1+\inr)$ and $\inr \ge 1$. 

Similarly,
\begin{align*}
\frac{d_{\min({S}_2)}^2}{12\ \kappa_{\gamma,N_1,N_2}^2 }& \geq 
\frac{1-\max(\delta_1,\delta_2)} {1+\snr \delta_2+\inr\delta_1}
\min\left( 
\frac{\inr}{N_1^2-1}, 
\frac{\snr}{N_1^2N_2^2-1}
\right)
\\&
\stackrel{\rm (a)}{ \geq}  
\min\left( 
\frac{\frac{\inr^2}{1+\inr}}{2(1+\snr \delta_2)(1+\snr_{4,a,t})}, 
\frac{\snr}{2(1+\snr +2\inr)}
\right)\\
& \stackrel{\rm (b)}{\geq}
\min\left( 
\frac{\inr^2}{2(1+\inr)(1+\inr)}, 
{\frac{1+\inr}{2(2+3\inr)}}
\right) 
\\&
\stackrel{\rm (c)}{\geq}  \min \left(\frac{1}{8}, \frac{1}{6} \right) = \frac{1}{8}
\end{align*}
where inequalities follow from: 
(a)   \eqref{eq:parameters 2R1+R2 face weak type 2: first choice N1}  we have $N_1^2-1 \leq \snr_{4,a,t}$,   from \eqref{eq: d2bound {sec:gaps:WeakType2:2R1+R2 face}} and \eqref{eq:parameters 2R1+R2 face weak type 2: first choice delta1} $\max(\delta_1,\delta_2) \leq \frac{1}{1+\inr}$ and from \eqref{eq: N1*N2bound {sec:gaps:WeakType2:2R1+R2 face}} $N_1^2N_2^2-1 \leq \inr$, 
(b)  $\snr_{4,a,t} (2+\snr \delta_2) \leq 2 (1+\snr_{4,a,t}) (1+\snr \delta_2)$ and \eqref{eq: d2*N1bound {sec:gaps:WeakType2:2R1+R2 face}}, and
(c) from $\snr \ge (1+\inr)$ and $\inr \ge 1$.

Hence, the minimum distance in \eqref{eq: min  WeakType2:2R1+R2 face with loglog gap} is bounded by
\begin{align*}
\min_{i \in[1:2] } \frac{d_{\min({S}_i)}^2}{12 \ \kappa_{\gamma,N_1,N_2}^2 }  \ge \frac{1}{8}.
\end{align*}

\section{Constant Gap Derivation for Regime Weak2}
\label{app:sec:CONSTANT GAP FOR WEAK 2}

\subsection{Another Inner Bound for $\mathcal{R}_{R_1+R_2}$}
\label{sec: weak 2 r1+r2 constant gap}
In order to approximately achieve the points in $\mathcal{R}_{R_1+R_2}^{(\text{\ref{sec: weak: weak type 2}})}$ in~\eqref{eq:outer bound in weak2 sumrate} we pick 
\begin{subequations}
\begin{align}
N_1 &= \points\left({ \frac{1}{k}}\snr_{3,a,t}\right), \
\snr_{3,a,t} \ \text{in~\eqref{eq:achregion for par:weak type 2 choiceN1 take1}},
\label{eq:achregion for par:weak type 2 choiceN1}
\\
N_2 &= \points\left({ \frac{1}{k}} \snr_{3,b,t}\right), \
\snr_{3,b,t} \ \text{in~\eqref{eq:achregion for par:weak type 2 choiceN2 take1}}
\label{eq:achregion for par:weak type 2 choiceN2}
\\
\delta_1 &:  \mug \left(\snr \delta_1\right)=\mug\left(\snr_{3,a,t}\right)
\Longleftrightarrow  
\delta_1 =\frac{\snr_{3,a,t}}{\snr}, 
\label{eq: choice delta1: weak 2 R1+R2}
\\
\delta_2& : \mug \left( \snr \delta_2\right)=\mug\left(\snr_{3,b,t}\right)
\Longleftrightarrow
\delta_2 =\frac{\snr_{3,b,t}}{\snr}.
\label{eq: choice delta2: weak 2 R1+R2}
\end{align}
where $k$ is a parameter that we will tune in order to satisfy 
the non-overlap condition in Proposition~\ref{prop:combNOOUTAGE}.
\label{eq:parameters R1+R2 face weak type 2}
\end{subequations} 
Indeed, in order to check whether we can use the bound in~\eqref{eq:achregion for par:moderate all parameters dmin P2 min dminpippo} we must check whether the condition in~\eqref{eq:achregion for par:moderate all parameters dmin P2 min dmin condcondcond} holds.
To simplify the analytical computations we choose to satisfy instead 
\begin{align*}
 \frac{(1-\delta_{i'})N_{i'}^2}{N_{i'}^2-1} \leq k
  \leq \frac{\snr}{\inr} \ \frac{(1-\delta_i)}{N_{i}^2-1} \quad \forall(i,i')\in \{(1,2),(2,1) \},
\end{align*}
for some $k$; since $\frac{(1-\delta_{i'})N_{i'}^2}{N_{i'}^2-1} \leq \frac{N_{i'}^2}{N_{i'}^2-1}  \leq \frac{4}{3}$ for all $N_{i'} \geq 2$, we set $\frac{4}{3}:=k$. In other words, we accept an increase in gap of $\log(k) = \log(4/3)$, due to the reduction of the number of points of the discrete part of the mixed inputs from $\points(x)$ to $\points(3x/4)$ for some `SNR' $x$, for ease of computations.

Therefore, for the rest of this section instead of checking condition in~\eqref{eq:achregion for par:moderate all parameters dmin P2 min dmin condcondcond} we will check the simpler condition
\begin{align}
\frac{4}{3} \inr  \leq \frac{\snr(1-\delta_i)}{N_{i}^2-1} \quad \forall i \in [1:2].
\label{eq: non-overlap  condition with 4/3} 
\end{align}

The gap between the outer bound region in~\eqref{eq:outer bound in weak2 sumrate} and the achievable rate in~\eqref{eq: rates mixed inputs BEFORE Union} with the parameters in~\eqref{eq:parameters R1+R2 face weak type 2} is 
\begin{align*}
\Delta_{R_1} &= 2 \mug \left( \snr_{3,a,t} \right)
  -\log\left(\points \left( 3/4 \ \snr_{3,a,t} \right)\right)
  -\mug(\snr_{3,a,t})+\Delta_{\eqref{eq: rates mixed inputs BEFORE Union}}\\
  &\leq \log \left(\frac{8}{3}\right)+\Delta_{\eqref{eq: rates mixed inputs BEFORE Union}},
\end{align*}
and similarly
\begin{align*}
\Delta_{R_2}  \leq \log \left(\frac{8}{3}\right)+\Delta_{\eqref{eq: rates mixed inputs BEFORE Union}}.
\end{align*}
We are then left with bounding $\Delta_{\eqref{eq: rates mixed inputs BEFORE Union}}$, which depends on minimum distances of the received sum-set constellations.
From \eqref{eq: N1bound {sec:gaps:WeakType2:R1+R2 face}}-\eqref{eq: N2bound {sec:gaps:WeakType2:R1+R2 face}} we have
\begin{align*}
N_1^2-1 \le \frac{3}{4} \snr_{3,a,t}  \le \frac{3}{4} \frac{\snr}{1+\inr}, \ \text{from~\eqref{eq: N1bound {sec:gaps:WeakType2:R1+R2 face}}},
\\
N_2^2-1 \le \frac{3}{4} \snr_{3,b,t}  \le \frac{3}{4} \frac{\snr}{1+\inr}, \ \text{from~\eqref{eq: N2bound {sec:gaps:WeakType2:R1+R2 face}}},
\end{align*}
and thus
\begin{align}
\frac{\snr (1-\delta_{i})}{N_{i}^2-1}
  &\stackrel{\max(\delta_1,\delta_2) \leq \frac{1}{1+\inr}}{\geq} 
\inr \frac{ \frac{\snr}{1+\inr}}{N_{i}^2-1}
\geq \frac{4}{3}  \inr,
\label{eq: dmin cond1 {sec:gaps:WeakType2:R1+R2 face}}
\end{align}
as needed in~\eqref{eq: non-overlap  condition with 4/3}.

Therefore, by~\eqref{eq:achregion for par:moderate all parameters dmin P2 min dminpippo}, for $d_{\min ({S}_1)}^2$
we have that 
\begin{align}
\frac{d_{\min ({S}_1)}^2}{12}
  &=\frac{1}{1+\snr \delta_1+\inr \delta_2} \min \left( \frac{(1-\delta_1)\snr}{N_1^2-1}, \frac{(1-\delta_2)\inr}{N_2^2-1} \right)
\notag\\&\stackrel{\text{(a)}}{\geq}
\frac{ \frac{\inr}{1+\inr}}{1+\snr \delta_1+ \frac{\inr}{1+\inr}} \min \left( \frac{\snr}{N_1^2-1}, \frac{\inr}{N_2^2-1} \right)
\notag\\&\stackrel{\text{(b)}}{\geq}
\frac{ \frac{\inr}{1+\inr}}{1+\snr \delta_1+ \frac{\inr}{1+\inr}} { \frac{4}{3}} \min \left(1+\inr, \frac{\inr}{\snr_{3,b,t}} \right)
\notag\\&\stackrel{\text{(c)}}{\geq}
{ \frac{4}{3}}\min \left(
\frac{\inr(1+\inr)}{1+\snr + 2\inr}, 
\frac{ \frac{\inr^2}{1+\inr} }{2(1+\snr \delta_1)(1+\snr_{3,b,t})} \right)
\notag\\&\stackrel{\text{(d)}}{\geq} { \frac{4}{3}}
\min \left(
\frac{\inr(1+\inr)}{1+\snr + 2\inr}, 
\frac{\inr^2}{2(1+\inr)(1+\inr+\frac{\snr}{1+\inr})} \right)
\notag\\&\stackrel{\text{(e)}}{\geq} { \frac{4}{3}}
\min \left(
\frac{\inr(1+\inr)}{1+ 3\inr+\inr^2}, 
\frac{\inr^2}{2(1+\inr)(1+2\inr)} \right) \notag
\\&\geq { \frac{4}{3}}
\min \left(
\frac{2}{5}, 
\frac{1}{12} \right) 
= \frac{1}{9},
\label{eq: dmin(S1)bound {sec:gaps:WeakType2:R1+R2 face}}
\end{align}
where the inequalities follows from:
(a) $\max(\delta_1,\delta_2) \leq  \frac{1}{1+\inr}$,
(b) from~\eqref{eq: N1bound {sec:gaps:WeakType2:R1+R2 face}} and~\eqref{eq: N2bound {sec:gaps:WeakType2:R1+R2 face}}
(c) $\max(\delta_1,\delta_2) \leq  \frac{1}{1+\inr}$,
(d) from~\eqref{eq: N1*N2 {sec:gaps:WeakType2:R1+R2 face}}, and
(e) from $1\leq \inr \leq \snr \leq \inr(1+\inr)$.

By symmetry, $\frac{d_{\min ({S}_2)}^2}{12}$ is bounded in the same way,
thus 
\begin{align}
\min_{i\in[1:2]} \frac{d_{\min ({S}_i)}^2}{12} \geq \frac{1}{9}.
\label{eq: dmin WeakType2:R1+R2 face}
\end{align}

Finally the gap for this face is
\begin{align}
{\gap}_{\eqref{eq:sec:gaps:WeakType2:R1+R2 face}}
        &\leq \max(\Delta_{R_1},\Delta_{R_2})
        = \log \left(\frac{8}{3}\right)+\Delta_{\eqref{eq: rates mixed inputs BEFORE Union}}
\notag\\& \leq \log \left(\frac{8}{3}\right)+ \frac{1}{2} \log \left( \frac{\pi \eu}{3}\right)+\frac{1}{2} \log \left(1+ 9\right)
\notag\\&=\frac{1}{2} \log \left( \frac{640 \pi \eu}{27}\right) 
  \approx 3.83~\text{bits}.
\label{eq:sec:gaps:WeakType2:R1+R2 face}
\end{align}

\subsection{Another Inner Bound for $\mathcal{R}_{2R_1+R_2}$}

We choose the mixed input parameters as
\begin{subequations}  
\begin{align}
N_1 &= \points\left( \frac{3}{4}\ \frac{\snr-\inr}{1+\inr} \right), \
\label{eq:achregion for par:weak type 2 choiceN1:2R_1R+2:take 2}
\\
N_2 &= \points\left( \frac{3}{4}\snr_{4,b,t}\right), \
\snr_{4,b,t} := \left( \frac{(1+\inr +\frac{\snr}{1+\inr})^{2} }{1+\snr} \right)^{1-t}-1 
\stackrel{\text{by eq.\eqref{eq: N2bound {sec:gaps:WeakType2:2R1+R2 face}}}}
\leq \frac{\snr}{1+\inr},
\label{eq:achregion for par:weak type 2 choiceN2:2R_1R+2:take 2}
\\
\delta_1&= \frac{\snr_{4,a,t} }{\snr}
\stackrel{\text{by  eq.\eqref{eq: N1bound {sec:gaps:WeakType2:2R1+R2 face}}}}{\leq}
\frac{ \inr }{ \snr } ,
\label{eq: choice of delta1: for 2R1+R2: take 2}
\\
\delta_2&=\frac{ 1+\inr+\frac{\snr}{1+\inr} }{ \left(1+\frac{\snr}{1+\inr}\right)(1+\snr)} 
\stackrel{\text{by  eq.\eqref{eq: condition for Weak2}}}\le\frac{1}{1+\inr+\frac{\snr}{1+\inr}} \le \frac{1}{1+\inr} 
\label{eq: choice of delta2: for 2R1+R2:take 2}, 
\end{align}
\label{eq:parameters 2R1+R2 face weak type 2: second choice}
\end{subequations}
where the factor $\frac{3}{4}$ in the number of points  appears for the same reason as in Section~\ref{sec: weak 2 r1+r2 constant gap}.

An inequality we will need is
\begin{align}
\frac{\snr \delta_2}{1+\inr \delta_1}& \stackrel{\rm (a)}{\ge} \frac{\snr \frac{\left(1+\inr+\frac{\snr}{1+\inr} \right) (1+\inr)}{(1+\inr+\snr)(1+\snr)}} {\frac{\inr}{\snr} \frac{(1+\snr) }{\left(1+\frac{\snr}{1+\inr}\right)^{t} \left( 1+\inr+\frac{\snr}{1+\inr}\right)^{1-t}} }\notag \\
&= \frac{\snr^2}{(1+\snr)^2}  \frac{\left(1+\inr +\frac{\snr}{1+\inr} \right) \left(1+\frac{\snr}{1+\inr}\right)^t \left(1+\inr +\frac{\snr}{1+\inr} \right)^{1-t} }{\inr  \left( 1+\frac{\snr}{1+\inr}\right)}\notag\\
&= \frac{\snr^2}{(1+\snr)^2}  \frac{\left(1+\inr +\frac{\snr}{1+\inr} \right) (1+\inr)^{1-t}  \left(1+\inr +\frac{\snr}{1+\inr} \right)^{1-t} }{\inr  \left( 1+\inr+\snr\right)^{1-t}}
\notag\\
& \stackrel{\rm (b)}{\ge} \frac{3}{4}   \left( \frac{1+\inr +\frac{\snr}{1+\inr} }{ 1+\frac{\snr}{1+\inr}} \right)^{1-t} \label{eq: bound on snr delta 1 div inr delta2}
\end{align}
where the inequalities follow from: 
(a) plugin in values of $\delta_1$ and $\delta_2$ and lower bounding the denominator, and
(b) using $\snr\ge 1$  we have that $\frac{\snr^2}{(1+\snr)^2} \ge \frac{1}{4}$ and using $\snr \ge (1+\inr) $ we have$\frac{1+\inr+\frac{\snr}{1+\inr} }{\inr} \ge \frac{2+\inr}{\inr} \ge 3$.

Another inequality we will need is
\begin{align}
\snr \delta_2 &= \snr \frac{\left(1+\inr+\frac{\snr}{1+\inr} \right) (1+\inr)}{(1+\inr+\snr)(1+\snr)} \notag\\
&  \le \frac{\left(1+\inr+\frac{\snr}{1+\inr} \right) (1+\inr)}{(1+\inr+\snr)}\notag\\
& =\frac{\left(1+\inr+\frac{\snr}{1+\inr} \right) }{(1+\frac{\snr}{1+\inr})} \notag\\\
&\le \frac{(1+2\inr)(1+\inr)}{\snr} \label{eq: bound on snr delta2}
\end{align}
where the last inequality comes from using $\frac{\snr}{1+\inr} \le \inr$ and dropping one in the denominator.

\paragraph*{Gap for $\mathcal{R}_{2R_1+R_2}$}

The gap between the outer bound in~\eqref{eq:outer bound in weak2 otherrate} and the achievable rate in Proposition~\ref{prop:ach-with-mixedinput} with the choice of parameters in~\eqref{eq:parameters 2R1+R2 face weak type 2: second choice} 
is 
\begin{align*}
\Delta_{R_1} &=  \mug\left( \snr_{4,a,t}\right)+ \mug \left( \frac{\snr}{1+\inr}\right)
- \log\left(\points\left( \frac{3}{4}\ \frac{\snr-\inr}{1+\inr} \right) \right) - \mug \left(\snr \delta_1 \right)
+\Delta_{\eqref{eq: rates mixed inputs BEFORE Union}}
\\
& \le \log(2) +\frac{1}{2}\log(2)%
+\Delta_{\eqref{eq: rates mixed inputs BEFORE Union}},
\end{align*} 

\[
\mug \left( \frac{\snr}{1+\inr}\right)-\mug\left(  \frac{3}{4}\ \frac{\snr-\inr}{1+\inr} \right)
=\frac{1}{2} \log\frac{1+\inr+\snr}{1+\inr}\frac{1+\inr}{1+\inr/4+3\snr/4}
\leq \frac{1}{2} \log\frac{1+2\snr}{1+\snr} \leq \frac{1}{2} \log(2),
\]

and similarly
\begin{align*}
\Delta_{R_2}&=\mug(\snr_{4,b,t}) + 
 \frac{1-t}{2} \log \left( \frac{ \left (1+\inr+\frac{\snr}{1+\inr} \right)(1+\inr) }{1+\inr+\snr} \right)+t c
\\&
\quad -\log \left( \points(\snr_{4,b,t}) \right) -\mug\log\left(\frac{\snr \delta_2}{1+\inr \delta_1}\right)
-\frac{1}{2}\log(2)
 +\Delta_{\eqref{eq: rates mixed inputs BEFORE Union}}
\\
&\le
  \log(2)+\frac{1}{2}\log\left(\frac{4}{3}\right)
 +\frac{1}{2}\log \left(\frac{4}{3}\right)
 +\log(2) 
 -\frac{1}{2}\log(2)
 +\Delta_{\eqref{eq: rates mixed inputs BEFORE Union}}
=\frac{1}{2}\log \left(\frac{2^7}{3^2}\right)
+\Delta_{\eqref{eq: rates mixed inputs BEFORE Union}}
\end{align*}
where we have used $tc \le \log(2)$ and the bound in~\eqref{eq: bound on snr delta 1 div inr delta2};
the term `$-\frac{1}{2}\log(2)$' is because of the definition of~$\Delta_{\eqref{eq: rates mixed inputs BEFORE Union}}$ that assumed $\max(\delta_1,\delta_2) \leq \frac{1}{1+\in}$, which is not the case here.

So, we are left with bounding $\Delta_{\eqref{eq: rates mixed inputs BEFORE Union}}$, which depends on the minimum distances of the received constellations. 
We must verify the condition in~\eqref{eq: non-overlap  condition with 4/3} at each receiver.%

For receiver~1  we have
\begin{align}
\frac{\snr(1-\delta_1)}{N_1^2-1} 
        &\stackrel{\text{from eq.\eqref{eq: choice of delta1: for 2R1+R2: take 2}}}{\ge}  
        \frac{\snr-\inr}{N_1^2-1} %
\notag\\&\stackrel{\text{from eq.\eqref{eq:achregion for par:weak type 2 choiceN1:2R_1R+2:take 2}}}{\ge}  
        \frac{\snr-\inr}{ \frac{3}{4}\ \frac{\snr-\inr}{1+\inr}}
\notag\\&=\frac{4}{3}(1+\inr) \geq \frac{4}{3}\inr,
\label{eq: non-overlap at S1: Weak 2 face 2R1+R2}
\end{align}
and therefore
\begin{align*}
\frac{d_{\min ({S}_1)}^2}{12}&=\frac{1}{1+\snr \delta_1+\inr \delta_2} \min \left( \frac{(1-\delta_1)\snr}{N_1^2-1}, \frac{(1-\delta_2)\inr}{N_2^2-1} \right)\\
&\stackrel{\rm (a)}{\ge}  \frac{1}{1+\snr \delta_1+\inr \frac{1}{1+\inr}} \min \left( \frac{4}{3} (1+\inr), \frac{ \frac{\inr^2}{1+\inr}}{\frac{3}{4} \snr_{4,b,t} } \right)\\
&= \frac{4}{3}\min \left( \frac{ (1+\inr)}{1+\snr \delta_1+\inr \frac{1}{1+\inr}}, \frac{ \frac{\inr^2}{1+\inr}}{ \left(1+\snr \delta_1+\inr \frac{1}{1+\inr}\right) \snr_{4,b,t} } \right)\\
&\stackrel{\rm (b)}{\ge}   \frac{4}{3}\min \left(\frac{1+\inr }{2+\snr \frac{\inr}{\snr}} , \frac{\frac{\inr^2}{1+\inr}}{2(1+\snr \delta_1)(1+\snr_{4,b,t})}  \right)\\
& \stackrel{\rm (c)}{\ge}  \frac{4}{3} \min \left(\frac{1+\inr }{2+\inr } , \frac{\frac{\inr^2}{1+\inr}}{4 (1+2\inr)}  \right)\\
& \stackrel{\rm (d)}{\ge}  \frac{4}{3} \min \left(\frac{2}{3}, \frac{1}{24} \right)=\frac{1}{18},
\end{align*}
where the bounds are obtained by: 
(a) using \eqref{eq: non-overlap at S1: Weak 2 face 2R1+R2} and $\delta_2 \leq \frac{1}{1+\inr}$ and \eqref{eq: N2bound {sec:gaps:WeakType2:2R1+R2 face}}, 
(b) using $\delta_2 \le \frac{\inr}{\snr}$ form \eqref{eq: N1bound {sec:gaps:WeakType2:2R1+R2 face}} and $(1+\snr \delta_2+\inr \frac{1}{1+\inr}) \le 2 (1+\snr \delta_2) (1+\snr_{4,b,t})$, 
(c) using \eqref{eq: choice of delta1: for 2R1+R2: take 2} we have $(1+ \snr \delta_2)(1+\snr_{4,b,t})= (1+ \snr_{4,a,t})(1+\snr_{4,b,t})$  and then using \eqref{eq: N1*N2bound {sec:gaps:WeakType2:2R1+R2 face}}, and
(d) come from minimizing over  $\inr \geq 1$. 

For receiver~2  we have
\begin{align*}
\frac{\snr(1-\delta_2)}{N_2^2-1} \ge \frac{\snr  \frac{\inr}{1+\inr} }{N_2^2-1}  \ge \frac{\snr  \frac{\inr}{1+\inr} }{ \frac{3}{4} \frac{\snr}{1+\inr}}=\frac{4}{3} \inr,
\end{align*}
and therefore %
\begin{align*}
\frac{d_{\min ({S}_2)}^2}{12}&=\frac{1}{1+\snr \delta_2+\inr \delta_1} \min \left( \frac{(1-\delta_2)\snr}{N_2^2-1}, \frac{(1-\delta_1)\inr}{N_1^2-1} \right)\\
& \stackrel{\rm (a)}{\ge}    \frac{1}{1+\snr \delta_2+\inr  \frac{\inr}{\snr}} \min \left( \frac{(1-\frac{1}{1+\inr})\snr}{N_2^2-1}, \frac{(1-\frac{\inr}{\snr})\inr}{N_1^2-1} \right)\\
& \stackrel{\rm (b)}{\ge} \frac{4}{3}  \frac{1}{1+\snr \delta_2+\inr  \frac{\inr}{\snr}} \min \left( \frac{(1-\frac{1}{1+\inr})\snr}{ \frac{\snr}{1+\inr} }, \frac{(1-\frac{\inr}{\snr})\inr (1+\inr)}{\snr-\inr} \right)\\
& = \frac{4}{3} \frac{1}{1+\snr \delta_2+\inr  \frac{\inr}{\snr}} \min \left( \inr, \frac{\inr (1+\inr)}{\snr} \right)\\
& =  \frac{4}{3} \min \left( \frac{\inr}{1+\snr \delta_2+\inr  \frac{\inr}{\snr}}, \frac{\inr (1+\inr)}{(1+\snr \delta_2+\inr  \frac{\inr}{\snr}) \snr} \right)\\
& \stackrel{\rm (c)}{\ge}  \frac{4}{3} \min \left( \frac{\inr}{1+\snr \frac{1}{1+\inr}+\inr  \frac{\inr}{\snr}}, \frac{\inr (1+\inr)}{(1+\snr \delta_2+\inr  \frac{\inr}{\snr}) \snr} \right)\\
& \stackrel{\rm (d)}{\ge} \frac{4}{3}   \min \left( \frac{\inr}{1+2\inr  }, \frac{\inr (1+\inr)}{(1+\frac{(1+2\inr)(1+\inr)}{\snr}+\inr  \frac{\inr}{\snr}) \snr} \right)\\
&= \min \left( \frac{\inr}{1+2\inr  }, \frac{\inr (1+\inr)}{(\snr+(1+2\inr)(1+\inr)+\inr^2  ) } \right)\\
&\stackrel{\rm (e)}{\ge} \frac{4}{3}\min \left( \frac{\inr}{1+2\inr  }, \frac{\inr (1+\inr)}{(\inr(1+\inr)+(1+2\inr)(1+\inr)+\inr^2  ) } \right)\\
&\ge \min \left( \frac{1}{3}, \frac{2}{10} \right)=\frac{4}{15}
\end{align*}
where the bounds are obtained by: 
(a) using $\delta_2 \le \frac{1}{1+\inr}$ and $\delta_1 \le \frac{\inr}{\snr}$, 
(b) from \eqref{eq:achregion for par:weak type 2 choiceN1:2R_1R+2:take 2} we have that $N_1^2-1 \le \frac{3}{4}\frac{\snr-\inr}{1+\inr}$, (c) using $\delta_2\le \frac{1}{1+\inr}$, 
(d) used bound in \eqref{eq: bound on snr delta2}, and 
(e) used bound $\snr \le \inr (1+\inr)$.

So, finally the gap is 
\begin{align}
{\gap}_{\eqref{eq:gap weak2 otherrate take2}}
        &\leq \max( \Delta_{R_1},\Delta_{R_2}) 
\notag\\&=
\frac{1}{2}\log \left(\frac{2^7}{3^2}\right)
+\frac{1}{2} \log \left(\frac{\pi \eu}{3 }\right)+\frac{1}{2} \log \left(1+\frac{15}{4}\right)
\notag\\&=\frac{1}{2}\log \left(\frac{608 \ \pi \eu}{27}\right) \approx 3.79~\text{bits} 
\label{eq:gap weak2 otherrate take2}
\end{align}

\paragraph*{Overall Constant Gap for Weak 1}
Therefore, the overall  gap for Weak 1 is 
\begin{align*}
{\gap} \le \max({\gap}_{\eqref{eq:sec:gaps:WeakType2:R1+R2 face}},{\gap}_{\eqref{eq:gap weak2 otherrate take2}})={\gap}_{\eqref{eq:sec:gaps:WeakType2:R1+R2 face}}.
\end{align*}

\bibliography{refs}
\bibliographystyle{IEEEtran}

\end{document}